\newtheorem{theorem}{Theorem}[section]
\newtheorem{proposition}[theorem]{Proposition}
\newtheorem{corollary}[theorem]{Corollary}
\newtheorem{lemma}[theorem]{Lemma}
\theoremstyle{definition}
\newtheorem*{definition*}{Definition}
\theoremstyle{remark}
\newtheorem{remark}[theorem]{Remark}
\numberwithin{equation}{section}
\def\CC{\mathbb{C}}
\def\NN{\mathbb{N}}
\def\RR{\mathbb{R}}
\def\ZZ{\mathbb{Z}}
\newcommand{\cA}{{\mathcal A}}
\newcommand{\cB}{{\mathcal B}}
\newcommand{\cC}{{\mathcal C}}
\newcommand{\cE}{{\mathcal E}}
\newcommand{\cH}{{\mathcal H}}
\newcommand{\cI}{{\mathcal I}}
\newcommand{\cL}{{\mathcal L}}
\newcommand{\cO}{{\mathcal O}}
\newcommand{\cR}{{\mathcal R}}
\newcommand{\cX}{{\mathcal X}}
\newcommand{\supp}{\operatorname{supp}}
\DeclareMathOperator\Div{div}
\DeclareMathOperator\curl{curl}
\DeclareMathOperator\Int{int}
\author{Daniel Peralta-Salas}
\address{Instituto de Ciencias Matem\'aticas, Consejo Superior de Investigaciones Cient\'ificas, 28049 Madrid, Spain \newline \it{e-mail: dperalta@icmat.es}}
\author{David Perrella}
\address{Instituto de Ciencias Matem\'aticas, Consejo Superior de Investigaciones Cient\'ificas, 28049 Madrid, Spain \newline \it{e-mail: david.perrella@icmat.es}}
\author{David Pfefferl\'e}
\address{The University of Western Australia, 35 Stirling Highway, Crawley WA 6009, Australia \newline \it{e-mail: david.pfefferle@uwa.edu.au}}
\thanks{We thank Gunnar Hornig who during his visit to the ICMAT inspired this research, Jason Cantarella for sharing unpublished work which informed the proof of \eqref{eq:jm-11-bound-for-kappa} in Lemma \ref{lem:main-bounding-work}, and Enrico Valdinoci and Serena Dipierro for interesting discussions. We also thank Naoki Sato for valuable feedback on the paper. The authors are partially supported by the grants CEX2023-001347-S, RED2022-134301-T and PID2022-136795NB-I00 (D.P.-S. and D.Pe.) funded by MCIN/AEI/10.13039/501100011033, and Ayudas Fundaci\'on BBVA a Proyectos de Investigaci\'on Cient\'ifica 2021 (D.P.-S.).}
\begin{document}

\begin{abstract}
We construct families of rotationally symmetric toroidal domains in $\mathbb R^3$ for which the eigenfields associated to the first (positive) Amp\`erian curl eigenvalue are symmetric, and others for which no first eigenfield is symmetric. This implies, in particular, that minimizers of the celebrated Woltjer's variational principle do not need to inherit the rotational symmetry of the domain. This disproves the folk wisdom that the eigenfields corresponding to the lowest curl eigenvalue must be symmetric if the domain is. 
\end{abstract}

\title{Asymmetry of curl eigenfields solving Woltjer's variational problem}

\maketitle


\section{Introduction}

Let $\Omega \subset \RR^3$ be a bounded domain with smooth boundary. If $\cX(\Omega)$ denotes the space of $C^\infty$ vector fields (up to the boundary) on $\Omega$, then a (smooth) magnetic field in $\Omega$ is an element of
\begin{equation*}
\cB(\Omega) = \{B \in \cX(\Omega) : \Div B = 0,\, B \cdot n = 0\},
\end{equation*}
where $n$ denotes the outward unit normal on $\partial \Omega$.

A widely studied subject in the mathematical theory of plasmas is the so-called Woltjer's variational principle~\cite{W58,LA91,AK21,gerner2020existence}, which consists in the minimisation problem 
\begin{equation}\label{eq:min-prob-euclidean}
\cE(B) \rightarrowtail \text{min},\qquad \cH(B) = h
\end{equation}
for $h \neq 0$, where $\cE = \|\cdot\|_{L^2}^2$ denotes the $L^2$-energy, and $\cH$ is the helicity~\cite{M69,berger-field-1984,finn-antonsen-1985}, which is given by the well known formula
\begin{equation*}
\cH(B) = \frac{1}{4\pi} \int_{\Omega \times \Omega} B(x) \times B(y) \cdot \frac{x-y}{|x-y|^3} d\mu(x)d\mu(y).
\end{equation*}
This minimisation problem was studied in \cite{LA91,cantarella2001upper} and the following was shown. We shall restrict to the case $h>0$, the case $h<0$ being completely analogous.

\begin{theorem}\label{thm:Cantarella-existence-R3}
For any $h >0$, Problem \eqref{eq:min-prob-euclidean} admits a solution over $\cB(\Omega)$. Every solution $B \in \cB(\Omega)$ satisfies, for some $\lambda>0$, that
\begin{equation}\label{eq:Beltrami}
\curl B = \lambda B, \qquad B \cdot n = 0,    
\end{equation}
together with the additional boundary condition that
\begin{equation}\label{eq:amp-condition}
\int_{\partial D} B \cdot dl = 0,    
\end{equation}
for all embedded closed disks with $D \setminus \partial D \subset \RR^3\setminus \Omega$. Moreover, $\lambda$ is the smallest positive number for which the spectral problem given by~\eqref{eq:Beltrami} and~\eqref{eq:amp-condition} holds.
\end{theorem}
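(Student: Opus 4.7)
The plan is to reduce the constrained minimization to a spectral problem for the compact self-adjoint Biot--Savart operator, and then recover the Beltrami equation by Lagrange multipliers. To this end, I would enlarge the function space to the $L^2$-closure $\overline{\cB(\Omega)}$, where the direct method applies. The key analytic fact is that the Biot--Savart operator $\BS$, defined on all of $\RR^3$ by the classical integral formula, maps $L^2(\Omega)$ to $H^1(\Omega)$ and therefore restricts to a compact self-adjoint operator on $\overline{\cB(\Omega)}$ satisfying $\cH(B)=\langle B,\BS(B)\rangle_{L^2}$. Moreover $A:=\BS(B)$ is divergence-free on $\RR^3$, with $\curl A=B$ in $\Omega$ and $\curl A=0$ in $\RR^3\setminus \Omega$.

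For existence, I would rescale $B=tB_0$ with $\|B_0\|_{L^2}=1$ to recast the problem as maximizing $\cH(B_0)$ over the $L^2$-unit sphere of $\overline{\cB(\Omega)}$, the minimum energy then being $h/\max \cH(B_0)$. Compactness of $\BS$ makes $\cH$ weakly continuous on bounded sets, so any maximizing sequence admits a weakly convergent subsequence whose limit realizes $\mu_0:=\max \cH(B_0)>0$, the largest positive eigenvalue of $\BS$. Lagrange multipliers at the minimizer $B=\sqrt{h/\mu_0}\,B_0$ then give $B=\kappa\,\Pi(\BS(B))$, where $\Pi$ is the $L^2$-projection onto $\overline{\cB(\Omega)}$ and $\kappa=1/\mu_0$. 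Since $\Pi$ modifies $\BS(B)$ only by a gradient, applying $\curl$ yields $\curl B=\lambda B$ with $\lambda:=\kappa>0$, while $B\cdot n=0$ is automatic.

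For the Amperian condition \eqref{eq:amp-condition}, take any disk $D$ with interior in $\RR^3\setminus\Omega$; since $\curl A=0$ outside $\Omega$ and any gradient correction from $\Pi$ has zero circulation around a closed loop, Stokes' theorem gives
$$\int_{\partial D}B\cdot dl \;=\; \lambda\int_{\partial D}A\cdot dl \;=\; \lambda\int_D \curl A\cdot dS \;=\; 0.$$
Interior and boundary smoothness of the weak minimizer then follow from standard elliptic regularity for the Hodge-type system $\curl B=\lambda B$, $\Div B=0$, $B\cdot n=0$, placing $B$ in $\cB(\Omega)$.

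To conclude that $\lambda$ is the smallest positive Amp\`erian curl eigenvalue, I would show that any non-trivial eigenfield $B'$ with $\curl B'=\lambda' B'$ satisfying \eqref{eq:amp-condition} obeys $\cH(B')=\|B'\|_{L^2}^2/\lambda'$. This reduces to showing that $\BS(B')-B'/\lambda'$ pairs to zero with $B'$ in $L^2$: it is curl-free in $\Omega$, hence Hodge-decomposes as $\nabla\phi+H$ with $H$ a harmonic Neumann field; the gradient term vanishes against the divergence-free tangent field $B'$, and the Amp\`erian condition is exactly what kills the pairing with $H$. Consequently $\cE(B')/\cH(B')=\lambda'$, so minimizing $\cE$ at fixed $\cH=h>0$ selects the smallest positive $\lambda'$. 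The principal difficulty lies in this last step: one must identify the Amp\`erian condition with the annihilation, under the $L^2$ pairing, of the ``external'' half of the harmonic Neumann space of $\Omega$ — a topological matching between the interior Hodge theory on $\Omega$ and the geometry of $\RR^3\setminus\Omega$.
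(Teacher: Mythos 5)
The paper does not actually prove this theorem: it is quoted from \cite{LA91,cantarella2001upper}, so there is no internal proof to compare against. Your proposal is, in substance, a correct reconstruction of the argument in those references: helicity as the quadratic form of the compact self-adjoint modified Biot--Savart operator $\Pi\circ\BS$ on the $L^2$-closure of $\cB(\Omega)$, existence of a maximizer of $\cH$ on the unit sphere by weak compactness, the Euler--Lagrange equation $B=\lambda\,\Pi(\BS(B))$ yielding $\curl B=\lambda B$ after taking curl (the projection only subtracting a gradient), the Amp\`erian condition from $\curl \BS(B)=0$ in $\RR^3\setminus\Omega$ plus Stokes' theorem, and elliptic regularity for the div--curl system. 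The identification of the minimal energy with $h/\mu_0$, $\mu_0$ the top eigenvalue of $\Pi\circ\BS$, correctly reduces the optimality of $\lambda=1/\mu_0$ to showing $\cH(B')=\|B'\|_{L^2}^2/\lambda'$ for every Amp\`erian eigenfield $B'$.

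The one place where your mechanism is misstated is precisely the step you flag as the principal difficulty. The Amp\`erian condition does \emph{not} annihilate the $L^2$-pairing of $B'$ with harmonic Neumann fields: by Lefschetz duality that pairing computes fluxes of $B'$ through cross-sectional surfaces, so its vanishing is the \emph{flux-free} condition, which Amp\`erian eigenfields on a solid torus generally violate (this distinction is the whole point of the paper's Appendix \ref{app:proof-of-amp-to-flux-free}). What actually happens is that the curl-free field $F=\BS(B')-B'/\lambda'$ has vanishing harmonic Neumann component: $F$ has zero circulation around curves bounding in $\Omega$ (it is curl-free there) and zero circulation around $\partial D$ for every disk $D$ with interior in $\RR^3\setminus\Omega$ (because $\curl\BS(B')=0$ outside $\Omega$ and $B'$ satisfies \eqref{eq:amp-condition}); since such curves generate $H_1(\Omega)$, the closed $1$-form dual to $F$ has trivial periods, so $F=\nabla\phi$ and $(B',F)_{L^2}=0$ by $\Div B'=0$, $B'\cdot n=0$. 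This yields the desired identity and closes the argument; with that correction your proof is sound.
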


In view of Theorem~\ref{thm:Cantarella-existence-R3}, the minimisation problem~\eqref{eq:min-prob-euclidean} leads to the study of the first (positive) curl eigenvalue on a domain with suitable boundary conditions. To state the spectral problem precisely, we need to introduce some notation.
Letting $\cB_\cA(\Omega)$ denote the set of \emph{Amp\`erian knots}, that is the set of $B \in \cB(\Omega)$ which satisfy Equation \eqref{eq:amp-condition}, known as the \emph{Amp\`erian boundary condition}, then the curl restricts as an operator
\begin{equation*}
\curl_{\cA} : \cB_\cA(\Omega) \to \cB(\Omega)
\end{equation*}
which has a compact self-adjoint inverse $\curl^{-1}_\cA : L^2\cB(\Omega) \to L^2\cB(\Omega)$ defined on the $L^2$-closure of $\cB(\Omega)$. We call the eigenfunctions of $\curl_{\cA}$ \emph{Amp\`erian curl eigenfields} and we call the eigenvalues of $\curl_{\cA}$ \emph{Amp\`erian curl eigenvalues}. The distinct Amp\`erian curl eigenvalues may be written as
\begin{equation*}
\cdots < \lambda_{-2} < \lambda_{-1} < 0 < \lambda_1 < \lambda_2 < \cdots,    
\end{equation*}
with $\lambda_n \to \pm \infty$ as $n \to \pm \infty$. If $B$ is an Amp\`erian curl eigenfield with eigenvalue $\lambda \neq 0$, then
\begin{equation*}
\frac{\cH(B)}{\cE(B)} = \frac{1}{\lambda}.
\end{equation*}
Thus, Theorem \ref{thm:Cantarella-existence-R3} states that the minimisers of Problem \eqref{eq:min-prob-euclidean} are the suitably scaled eigenfields corresponding to the first (positive) Amp\`erian curl eigenvalue.

Our work concerns the following natural question about the first Amp\`erian curl eigenfields:
\begin{quote}
\emph{In the case that the domain $\Omega$ is rotationally symmetric, do the first Amp\`erian curl eigenfields inherit the rotational symmetry?}
\end{quote}
This question is of course a classical one of spectral geometry. It is typically resolved \emph{in the positive} as a by-product of stronger results which hold regardless of the domain. As well-known \cite{Bhattacharya88,Martinez02,Abatangelo24}, in the case of functions, symmetry can be deduced from simplicity of the first eigenvalue. This is what happens for both the Neumann and the Dirichlet Laplacian for example (with the first eigenfunctions being constants in the Neumann case).

Although there is not much literature explicitly dedicated to this question for the curl operator, there are still some results. In the case when $\Omega$ is simply connected, the Amp\`erian boundary condition is redundant and the problem is just about eigenvalues of the curl operator. In the affirmative direction it was proven in \cite{Cantarella_DeTurck_Gluck_Teytel_2000} that when $\Omega$ is a ball, or when $\Omega$ is the region bounded by two concentric spheres, then the first curl eigenfields each inherit a rotational symmetry from the domain, despite there being multiplicity. In the negative, it has been numerically observed \cite{Bondeson81,Finn81} that on solid cylinders of finite height the first curl eigenfields need not inherit the rotational symmetry. 

The story is different when $\Omega$ is a toroidal domain. Here, the distinction between the Amp\'erian curl spectrum and the flux-free curl spectrum~\cite{GY90} matters. For the Amp\`erian spectrum, the answers have only been in the \emph{afirmative}. Numerical studies \cite{Tsuji_1991,Gimblett_Hall_Taylor_Turner_1987} have observed that the first Amp\`erian curl eigenfields are rotationally symmetric in a variety of cases (called the ``zero-field eigenvalues" or ``field reversal-point" in these studies), including the standard solid torus with arbitrary major and minor radii, and idealised cases where the domain is a flat periodic cylinder (i.e., flat tube) with sections including a rectangle of arbitrary length (rotation here is taken to mean periodic translation). It was also stated in \cite{cantarella2001upper} for the flat tube with a disk section that the first Amp\`erian curl eigenfields inherit every symmetry of the domain. For the flux-free curl spectrum however, it has been observed numerically that the first curl eigenfields need not inherit the symmetry in some toroidal domains \cite{Faber_Jr_Wing_1985} in $\RR^3$ and the aforementioned flat tube with disk cross-section \cite{Yoshida91}. 

A tempting explanation for the observed symmetry of first Amp\`erian curl eigenfields on rotationally symmetric toroidal domains, despite symmetry breaking in first flux-free curl eigenfields, lies in the fact that rotationally symmetric curl eigenfields in solid tori are associated with eigenfunctions of the Grad-Shafranov operator (c.f. propositions \ref{prop:amp-sym-is-G-S} and \ref{prop:flux-free-sym-is-G-S} in sections \ref{sec:sym-amp-intro} and \ref{prop:flux-free-sym-is-G-S}), with the Dirichlet boundary condition (exploited heavily in \cite{cantarellathesis}) in the Amp\`erian case and a zero average condition in the flux-free case, representing orthogonality to the unique (up to scaling) curl-free magnetic field on $\Omega$ (known as the vacuum field), which is rotationally symmetric. The idea being that the first flux-free eigenvalue is actually a ``second" curl eigenvalue of a larger spectrum, with the ``true" first eigenvalue being $0$, corresponding to the symmetric vacuum field. This is in analogy with first Neumann eigenvalue being $0$ with all eigenfunctions being constant (therefore inheriting symmetry) and that one does not see symmetry inherited for the first non-zero Neumann eigenvalue. Whereas in the Amp\`erian case, the Dirichlet boundary condition means that the first symmetric Amp\`erian eigenfield ``really is the first eigenvalue" and so inherits symmetry just like the first Dirichlet eigenfunction of the Laplacian. This analogy is not as direct in the simply connected case because there the Grad-Shafranov equation becomes singular. 

In view of the aforementioned numerical studies and analogies with Laplacian eigenvalues, the folk wisdom dictates that on solid tori, the answer should be ``yes". As it happens however, we prove that the answer is \emph{no} in general. We will see that the answer depends on geometry. To focus ideas, for an embedded closed disk $\Sigma \subset \RR^+ \times \RR$, we call
\begin{equation*}
\cR_\Sigma = \{(x,y,z) \in \RR^3 : (\sqrt{x^2+y^2},z) \in \Sigma\}    
\end{equation*}
the \emph{smooth solid torus of revolution with cross-section $\Sigma$}. For instance, for $0 < a < R$, consider the standard solid torus of minor radius $a$ and major radius $R$ is given by $\cR_{a,R} = \cR_{D_{a,R}}$ where $D_{a,R}$ is the closed disk of radius $a$ centred at $(R,0)$. We show that the question is resolved positively for small enough minor radius.

\begin{theorem}\label{thm:sym-wins}
For any $R>0$, and any $a>0$ sufficiently small, the first positive Amp\`erian curl eigenvalue on the solid torus of revolution $\cR_{a,R}$ is simple and the corresponding curl eigenfield is rotationally symmetric.
\end{theorem}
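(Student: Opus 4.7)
The plan is to decompose $\cB_\cA(\cR_{a,R})$ by angular Fourier mode and analyse each sector in the thin-torus limit $a\to 0$. Since $SO(2)$ acts by isometries on $\cR_{a,R}$ and commutes with $\curl_\cA$, the Amp\`erian space splits as an orthogonal Hilbert sum
\begin{equation*}
\cB_\cA(\cR_{a,R}) = \bigoplus_{m \in \ZZ}\cB_{\cA,m}(\cR_{a,R}),
\end{equation*}
where rotations act on the $m$-th summand by $e^{im\theta}$ and $m=0$ corresponds to symmetric fields. Writing $\mu_m(a)$ for the first positive eigenvalue of $\curl_\cA$ in sector $m$, so that $\lambda_1(\cR_{a,R}) = \min_m \mu_m(a)$, the theorem reduces to showing that (i) $\mu_0(a)$ is simple within the $m=0$ sector, and (ii) $\mu_0(a) < \mu_m(a)$ for every $m \neq 0$, both for $a$ small.

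For (i), I would invoke Proposition \ref{prop:amp-sym-is-G-S} to identify the $m=0$ eigenvalue problem with the first Dirichlet eigenvalue of a Grad--Shafranov-type elliptic operator on the cross-sectional disk $D_{a,R}$. The rescaling $(r,z) = (R+a\xi, a\zeta)$ sends $D_{a,R}$ to the unit disk and exhibits the Grad--Shafranov operator as a regular $O(a)$ perturbation of the Euclidean Dirichlet Laplacian on the unit disk. Classical elliptic perturbation theory then yields $a\,\mu_0(a) \to j_0$ as $a\to 0$, where $j_0\approx 2.405$ is the first positive zero of $J_0$; simplicity within the $m=0$ sector follows from the Krein--Rutman theorem applied to the positive compact inverse of the Grad--Shafranov operator.

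For (ii), an a priori Poincar\'e bound $\mu_m(a) \gtrsim |m|/R$ — coming from $\partial_\theta B = imB$ in sector $m$ combined with $\|\curl B\|_{L^2}^2 = \langle -\Delta B, B\rangle$ on divergence-free fields — confines the problem to finitely many modes $|m| \leq M$. For those, rescaling by $1/a$ turns $\cR_{a,R}$ into a periodic flat torus of minor radius $1$ and period $2\pi R/a \to \infty$, which in each fixed angular sector converges in the min--max sense to the infinite flat tube $D\times\RR$ with disk cross-section. By \cite{cantarella2001upper}, the lowest Amp\`erian curl eigenvalue on this flat tube equals $j_0$, is simple, and is attained in its own $m=0$ sector; in particular the $m\ne 0$ sector eigenvalues exceed $j_0$ by some $\delta > 0$. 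Transferring the gap back gives $a\,\mu_m(a) \geq j_0 + \delta/2$ for $1 \leq |m| \leq M$ and $a$ small, closing (ii).

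The main obstacle is the spectral convergence in (ii): while the rescaled long torus looks locally like the flat tube, one must transfer the topological Amp\`erian condition — vanishing of circulation around the hole — to the flat-tube flux condition uniformly in each Fourier sector under the thin-torus limit. The matching upper bound should come from a cut-off-and-periodize construction of test fields built from flat-tube eigenfields, corrected to remain divergence-free, boundary-tangential, and Amp\`erian on $\cR_{a,R}$, while the lower bound follows from min--max and compactness of the rescaled family of operators.
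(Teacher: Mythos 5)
Your overall architecture coincides with the paper's: split the Amp\`erian problem into the rotationally symmetric sector and the non-symmetric sectors, identify the symmetric sector with the Dirichlet Grad--Shafranov problem on the cross-section so that $a\,\mu_0(a)\to j_{0,1}$, and show that every non-symmetric eigenvalue stays above $j_{0,1}/a$ by a uniform margin for small $a$. Step (i) is essentially correct and matches Proposition~\ref{prop:amp-sym-is-G-S} together with Lemma~\ref{lem:basic-G-S-ineq}.

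The genuine gap is in step (ii), and it is not only the transfer of the Amp\`erian condition that you flag as ``the main obstacle'': the key spectral input on the model flat tube is asserted rather than proved. After rescaling, a toroidal Fourier mode $m\neq 0$ on $\cR_{a,R}$ corresponds on the flat tube to a longitudinal wavenumber $\ell$ with $a\ell\to 0$ as $a\to 0$. What you need is therefore a gap $a\lambda\ge j_{0,1}+\delta$ uniform over \emph{all} poloidal modes and all \emph{nonzero} longitudinal wavenumbers, including the degenerating regime $a\ell\to 0^+$. This is not contained in \cite{cantarella2001upper} (which only states, qualitatively, that the first flat-tube eigenfields are symmetric), and it does not follow from soft min--max convergence: as $a\ell\to 0$ the sector problems converge to the $\ell=0$ problem, whose lowest eigenvalue in the poloidal mode-$0$ sector is exactly $j_{0,1}/a$, so one must rule out that the infimum over $\ell\neq 0$ collapses to $j_{0,1}/a$ in the limit. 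The paper does this by an explicit Bessel analysis (Section~\ref{sec:analysis-disk-anti-sym}): the eigenvalue condition in the $(\ell,m)$ sector reads $F_m(a\ell,a\lambda)=0$ with $F_m$ a combination of $J_{m\pm1}$, and Lemmas~\ref{lem:m-is-zero}--\ref{lem:universal-bound-for-kappa} show, via positivity of $J_{m\pm1}$ below $j_{m-1,1}$ together with a continuity-of-nondegenerate-zeros argument near zero wavenumber, that all such zeros satisfy $a\lambda\ge j^*>j_{0,1}$. Some quantitative argument of this kind is indispensable; without it your $\delta$ has no source. Secondarily, the comparison between $\cR_{a,R}$ and the flat tube needs an explicit metric argument (the paper's Lemma~\ref{lem:warped-prod-lower-bound}, with constant $((R+a)/(R-a))^{1/2}\to 1$) rather than an appeal to spectral convergence, since the divergence-free, tangency and flux constraints must be preserved by the comparison map; and your a priori bound $\mu_m(a)\gtrsim |m|/R$, while true (it is Equation~\eqref{eq:antisym-lambda-lower-bound} in the paper), is not justified by the identity $\|\curl B\|_{L^2}^2=\langle -\Delta B,B\rangle$, which ignores boundary terms. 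In the paper's route the Bessel analysis handles all modes simultaneously, so no such finite-mode reduction is needed.
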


This is in agreement with the aforementioned \cite{Tsuji_1991}. In the negative, we have the following.

\begin{theorem}\label{thm:antisym-wins}
For $0 < a < b$ and $L > 0$, set
\begin{align*}
A_{a,b} &= \{(x,y) \in \RR^2 : a^2 \leq x^2+y^2 \leq b^2\},\\
\cC_{a,b,L} &= A_{a,b} \times [-L/2,L/2].
\end{align*}
Fix $b > 0$ and set $L_{n} \coloneqq n L$, $n \in \NN$. There exists $0 < a < b$, $L > 0$ and $N \in \NN$ such that, for any  smooth solid torus of revolution $\cR \subset \RR^3$ satisfying
\begin{gather*}
\cC_{a,b,L_{n}} \subset \cR \subset \cC_{a,b,L_{n+1}},\\
\partial A_{a,b} \times [-L_{n}/2,L_{n}/2] \subset \partial \cR,
\end{gather*}
with $n\geq N$, there do not exist rotationally symmetric curl eigenfields corresponding to the first (positive) Amp\`erian curl eigenvalue on $\cR$.
\end{theorem}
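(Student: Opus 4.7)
The plan is to pick $a, b, L$ and take $n$ sufficiently large so that the first positive Amp\`erian curl eigenvalue $\lambda_1(\cR)$ is strictly less than the first positive Amp\`erian curl eigenvalue $\lambda_1^{\mathrm{sym}}(\cR)$ attainable by rotationally symmetric fields. Let $\mu_0(a,b) > 0$ denote the first Dirichlet eigenvalue of $-r(r^{-1}\phi')' = \mu\phi$ on $[a,b]$; equivalently, $\sqrt{\mu_0(a,b)}$ is the smallest positive root of $F(\kappa) := J_1(\kappa a) Y_1(\kappa b) - Y_1(\kappa a) J_1(\kappa b)$. I will show that $\lambda_1^{\mathrm{sym}}(\cR) > \sqrt{\mu_0(a,b)}$ always, while $\lambda_1(\cR) < \sqrt{\mu_0(a,b)}$ for $n$ large.

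By the Grad--Shafranov reduction (Proposition \ref{prop:amp-sym-is-G-S}), rotationally symmetric Amp\`erian eigenfields on $\cR$ correspond to Dirichlet eigenfunctions of the Grad--Shafranov operator $-\Delta^* = -r\partial_r(r^{-1}\partial_r) - \partial_{zz}$ on the planar cross-section $\Sigma$, with eigenvalue $\lambda^2$. The positive Dirichlet form $\int r^{-1}|\nabla u|^2\,dr\,dz$ of $-\Delta^*$ makes its first eigenvalue monotone decreasing in the domain; since $\Sigma \subset [a,b]\times[-L_{n+1}/2, L_{n+1}/2]$ and separation of variables on this rectangle yields first eigenvalue $\mu_0(a,b) + (\pi/L_{n+1})^2$, one obtains $\lambda_1^{\mathrm{sym}}(\cR) \geq \sqrt{\mu_0(a,b) + (\pi/L_{n+1})^2} > \sqrt{\mu_0(a,b)}$.

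For the upper bound I would work with Chandrasekhar--Kendall modes on the infinite annular cylinder $A_{a,b}\times \RR$, namely $B = \curl(\psi \hat e_z) + \lambda^{-1}\curl\curl(\psi \hat e_z)$, built from potentials $\psi = R(r)e^{i\theta + ikz}$ with $R(r) = c_1J_1(\kappa r) + c_2Y_1(\kappa r)$ and $\kappa^2 = \lambda^2 - k^2$. Imposing $B \cdot \hat e_r|_{r=a,b} = 0$ yields the dispersion relation
\[
\lambda^2 F(\kappa) + \lambda k\kappa F'(\kappa) + k^2\kappa^2 H(\kappa) = 0,
\]
with $H(\kappa) = ab[J_1'(\kappa a)Y_1'(\kappa b) - Y_1'(\kappa a)J_1'(\kappa b)]$, the identity that the middle coefficient equals $F'(\kappa)$ being a direct Bessel computation. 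A perturbation expansion $\kappa = \kappa_0 + k\kappa_1 + O(k^2)$ around $\kappa_0 = \sqrt{\mu_0(a,b)}$, combined with $\lambda = \sqrt{\kappa^2 + k^2}$, forces $\kappa_1 = -1$ (using $F'(\kappa_0) \neq 0$, automatic at a simple first root), yielding $\lambda^*(k) = \sqrt{\mu_0(a,b)} - k + O(k^2)$; fix $k > 0$ small so that $\lambda^* := \lambda^*(k) < \sqrt{\mu_0(a,b)}$. The Amp\`erian condition is automatic since $\int_0^{2\pi} e^{i\theta}\,d\theta = 0$. To localize, take the real part $B^{\mathrm{CK}}$ and its Coulomb potential $A^{\mathrm{CK}} = B^{\mathrm{CK}}/\lambda^*$; gauge-transform $A^{\mathrm{CK}} \mapsto A^{\mathrm{CK}} + \nabla \phi$ with $\phi$ of the form $ir\tilde A_\theta(r)e^{i\theta + ikz}$ (globally well-defined thanks to the $e^{i\theta}$ structure) so that the new $A^{\mathrm{CK}}_\theta \equiv 0$. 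Then define $B^{\mathrm{test}} := \curl(\chi(z) A^{\mathrm{CK}})$ on $\cR$, with $\chi$ a fixed smooth cutoff equal to $1$ on $|z| \leq L_n/2 - \delta$ and vanishing outside $|z| \leq L_n/2$. This lies in $\cB_\cA(\cR)$: it is divergence-free, vanishes on the caps of $\cR$, satisfies $B^{\mathrm{test}} \cdot \hat e_r = \chi B^{\mathrm{CK}}_r - \chi' A^{\mathrm{CK}}_\theta = 0$ on the cylindrical sidewalls, and inherits Amp\`erianness from the $e^{i\theta}$ structure. A standard cutoff estimate (errors confined to a fixed-width layer, bulk scaling like $L_n$) gives $\cE(B^{\mathrm{test}})/\cH(B^{\mathrm{test}}) = \lambda^* + O(1/L_n)$, so $\lambda_1(\cR) < \sqrt{\mu_0(a,b)} < \lambda_1^{\mathrm{sym}}(\cR)$ for $n \geq N$.

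The hardest step is the rigorous localization: one must verify that the gauge transformation is globally well-defined, that the truncated test field belongs to $\cB_\cA(\cR)$ with the Amp\`erian condition intact, and that the cutoff error in the Rayleigh quotient is genuinely $O(1/L_n)$. The perturbation identity $\kappa_1 = -1$ relies crucially on both the equality $G(\kappa) = F'(\kappa)$ in the dispersion relation and the simplicity of the first root of $F$.
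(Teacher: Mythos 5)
Your proposal is correct in outline and follows the same overall architecture as the paper (play the symmetric Grad--Shafranov spectrum against an explicit $m=1$ antisymmetric mode on a flat annular tube, then localize by an axial cutoff), but the decisive technical step is handled by a genuinely different argument. The paper lower-bounds $\lambda^{\text{s}}_1(\cR)$ by $\sqrt{\pi^2/(b-a)^2+3/(4b^2)}$ via the Laplacian comparison in Lemma \ref{lem:basic-G-S-ineq}, and then must find \emph{special} parameters $a,L$ for which $\pi/(b-a)$ is an exact antisymmetric eigenvalue of the periodic tube $M_{a,b,L_n}$ for every $n$ (Lemma \ref{lem:existence-of-small-eigenvalue}); that requires a global analysis of the dispersion determinant, with asymptotics as $a\to b^-$, a computer-assisted Taylor estimate as $a\to 0^+$, and the intermediate value theorem (Section \ref{sec:analysis-annulus-anti-sym} and Appendix \ref{app:verification-taylor}). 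You instead use the sharp symmetric threshold $\sqrt{\mu_0(a,b)}$ (first zero of the Bessel cross-product $F$, which is simultaneously the first Dirichlet eigenvalue of the radial Grad--Shafranov operator and the $\lambda$ of the $m=1$, $k=0$ mode), and perturb off $k=0$: I verified that the $m=1$ determinant does expand as $\lambda^2F(\kappa)+\lambda k\kappa F'(\kappa)+k^2\kappa^2H(\kappa)$ (pure product rule) and that $F'(\kappa_0)\neq 0$ at the simple first root, so the implicit function theorem gives $\lambda(k)=\sqrt{\mu_0(a,b)}-k+O(k^2)$, dipping below the symmetric threshold for \emph{every} $0<a<b$ with no special parameter choice and no computer assistance. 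Your localization is essentially the paper's Lemma \ref{lem:antisym-inequality}; the paper sidesteps your gauge fixing by using the canonical potential $A=-\tfrac{i}{m}B\times P$, which is automatically normal-pointing so the cutoff term $f'\,T\times A$ is tangent to the side walls, whereas your gauge transformation killing $A_\theta$ (well defined since $m\neq 0$) achieves the same. The price of your route is that the mode is not periodic, so the $O(1/L_n)$ Rayleigh-quotient error (bulk $\sim L_n$; boundary-layer and $e^{2ikz}$-oscillation contributions $O(1)$) and the identity $\cH(B^{\text{test}})=\int B^{\text{test}}\cdot \chi A$ (valid because the $e^{i\theta}$ dependence makes the field $L^2$-orthogonal to the Neumann harmonic field, cf.\ Lemma \ref{lem:compatibility-of-antisym}) must be written out carefully; the gain is a substantially shorter and more conceptual proof of the key eigenvalue estimate that works for all $a$.
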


\begin{remark}\label{rem:thms}
Some comments on the theorems are in order:
\begin{enumerate}
    \item The analogous statement in theorems \ref{thm:sym-wins} and \ref{thm:antisym-wins} for the first negative Amp\`erian curl eigenvalue also holds for the same domains thanks to the orientation-reversing isometry $(x,y,z) \mapsto (-x,y,z)$ of $\RR^3$.
    \item The curl eigenfields such as those in Theorem \ref{thm:sym-wins} were shown in \cite{cantarellathesis} to have no zeros. On the other hand, the curl eigenfields in Theorem \ref{thm:antisym-wins} are gradient fields on the boundary (see Corollary \ref{cor:grad-flows}) and thus have zeros.
    \item If $\cR$ is a smooth solid torus of revolution for which there do not exist rotationally symmetric eigenfields corresponding to the first (positive) Amp\`erian curl eigenvalue, then there do not exist rotationally symmetric eigenfields corresponding to the first (positive) flux-free curl eigenvalue either. In particular, Theorem \ref{thm:antisym-wins} is stronger than the conclusions drawn from the numerics in~\cite{Faber_Jr_Wing_1985}. See Appendix \ref{app:proof-of-amp-to-flux-free}.\label{rem:amp-to-flux-free}
\end{enumerate}
\end{remark}

The proof of both theorems \ref{thm:sym-wins} and \ref{thm:antisym-wins} involves analysing the Amp\`erian eigenvalues associated with symmetric curl eigenfields and Amp\`erian eigenvalues associated with non-symmetric curl eigenfields. Both kinds of eigenvalues come from a spectrum. In fact, we find that the non-symmetric curl eigenfields are captured by what we call the \emph{antisymmetric curl spectrum}, and in a strong sense this spectrum exists independently of the Amp\`erian curl spectrum. The antisymmetric curl spectrum can be understood as a rigorous account of the ``periodic modes" often encountered in the physics literature (c.f. the aforementioned numerical studies). We essentially play the concrete nature of the symmetric Amp\`erian spectrum, identified with the Dirichlet spectrum of the Grad-Shafranov operator, against the abstract nature of the antisymmetric curl spectrum, finding estimates for the lowest antisymmetric curl eigenvalues by making comparisons in Riemannian 3-manifolds with flat boundaries. For both theorems, the lowest Dirichlet eigenvalues of the Grad-Shafranov operator are concretely estimated thanks to the operator's close relationship with the Laplacian operator. 

For Theorem \ref{thm:sym-wins}, we compare the geometry of $\cR_{a,R}$ to that of flat tubes $D_a \times \RR/L\ZZ$ and $D_a$ the disk of radius $a$ centred at the origin to compare the corresponding antisymmetric curl-eigenvalues. For Theorem \ref{thm:antisym-wins}, we compare the geometry of $\cR$ to that of flat tubes $A_{a,b} \times \RR/L\ZZ$. A difference in Theorem \ref{thm:antisym-wins} is that we must perform a ``cut" because $A_{a,b} \times \RR/L\ZZ$ is not topologically a solid torus. We instead think of the $\RR/L\ZZ$ factor as ``vertical" and consider the fundamental domain $A_{a,b} \times [-L/2,L/2] \subset \RR^3$, which is topologically a solid torus and estimate the changes to the curl eigenvalues in doing this.

The remainder of the paper is structured as follows. In Section \ref{sec:prelim}, we formally introduce the symmetric Amp\`erian curl spectrum and the antisymmetric curl spectrum and state a key lemma helpful for proving theorems \ref{thm:sym-wins} and \ref{thm:antisym-wins}. In Section \ref{sec:proof-of-1st-thm} we prove Theorem \ref{thm:sym-wins}. In Section \ref{sec:proof-of-2nd-thm} we prove Theorem \ref{thm:antisym-wins}. In Section \ref{sec:antisymm-ineq-warped-prod} we prove an antisymmetric curl eigenvalue estimate needed for Theorem \ref{thm:sym-wins}. In Section \ref{sec:complex-antisym-fields} we introduce some complex algebra for more compactly handling the antisymmetric curl spectrum in service of proving the remaining lemmas used in theorems \ref{thm:sym-wins} and \ref{thm:antisym-wins}. In Section \ref{sec:analysis-disk-anti-sym} we prove an inequality for the antisymmetric curl eigenvalues in the aforementioned flat tube $D_a \times \RR/L\ZZ$ for Theorem \ref{thm:sym-wins}. In Section \ref{sec:upper-bound-antisym-rectangular-section} we estimate the change in antisymmetric curl eigenvalue by performing the aforementioned ``cut" on $A_{a,b} \times \RR/L\ZZ$ for Theorem \ref{thm:antisym-wins}. Lastly in Section \ref{sec:analysis-annulus-anti-sym}, we prove the existence of small antisymmetric curl eigenvalues for suitable parameters on the tubes $A_{a,b} \times \RR/L\ZZ$ for Theorem~\ref{thm:antisym-wins}.

\section{Preliminaries}\label{sec:prelim}

The goal of this section is to outline some preliminaries for the proofs of Theorems~\ref{thm:sym-wins} and~\ref{thm:antisym-wins}. Section~\ref{sec:sym-amp-intro} discusses the symmetric spectrum of the curl operator on domains that are rotationally symmetric. In contrast, Section~\ref{sec:sym-amp-intro} introduces the antisymmetric spectrum of the curl. Lastly, in Section \ref{sec:method-of-proof} we state a trivial, albeit important lemma, which is crucial in the proofs of the theorems.

\subsection{The symmetric Amp\`erian curl spectrum}\label{sec:sym-amp-intro}

Let $\cR$ be a smooth solid torus of revolution with cross-section $\Sigma \subset \RR^+ \times \RR$. Let $(x,y,z) \in \cR$ denote the Cartesian coordinates and consider the rotation vector field
\begin{equation}\label{eq:Killing-in-R3}
K = -y\partial_x + x\partial_y.
\end{equation}
As usual, an Amp\`erian curl eigenfield $B$ on $\cR$ is \emph{symmetric} if $\cL_K B = 0$ and we will call its corresponding eigenvalue a \emph{symmetric Amp\`erian curl eigenvalue}. The following standard result on the symmetric Amp\`erian spectrum of curl is proven in \cite[Proposition 3]{cantarellathesis}. 

\begin{proposition}\label{prop:amp-sym-is-G-S}
A vector field $B$ in $\cR$ is a symmetric Amp\`erian curl eigenfield if and only if there exists $u \in C^\infty(\Sigma)$ such that
\begin{align*}
B &= \frac{1}{\lambda}\nabla u(r,z) \times \nabla \varphi + u(r,z) \nabla \varphi,\\
L u &= -\lambda^2 u,\\
u|_{\partial \Sigma} &= 0,
\end{align*}
where $L$ is the Grad-Shafranov operator 
\begin{equation*}
L = \partial_{rr} + \partial_{zz} - \frac{1}{r}\partial_r,
\end{equation*}
and $(r,\varphi,z)$ denote the cylindrical coordinates. In particular, if $\lambda^{\text{G-S}}_1(\Sigma)$ denotes the first Dirichlet eigenvalue of $L$ on $\Sigma$ and $\lambda^{\text{s}}_1(\cR)$ denotes the first positive symmetric Amp\`erian curl eigenvalue, then
\begin{equation*}
\lambda^{\text{s}}_1(\cR) = \sqrt{\lambda^{\text{G-S}}_1(\Sigma)}. 
\end{equation*}
\end{proposition}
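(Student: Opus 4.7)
The plan is to reduce the eigenvalue problem to a scalar PDE on $\Sigma$ by writing a symmetric $B$ in terms of scalar functions on the cross-section. In cylindrical coordinates $(r,\varphi,z)$, the condition $\cL_K B = 0$ makes every component of $B$ independent of $\varphi$. The divergence-free equation $r^{-1}\partial_r(rB_r)+\partial_z B_z = 0$ on the topological disk $\Sigma$ yields a stream function $\psi \in C^\infty(\Sigma)$ with poloidal part $B_{\text{pol}} = \nabla\psi \times \nabla \varphi$, while the toroidal part can always be written $B_{\text{tor}} = v\nabla\varphi$ for some $v \in C^\infty(\Sigma)$. This gives the parametrisation $B = \nabla\psi\times\nabla\varphi + v\nabla\varphi$.

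Next, I would compute $\curl B$ using the identities $\curl(\nabla\psi\times\nabla\varphi) = -r^{-1}(L\psi)\hat\varphi$ and $\curl(v\nabla\varphi) = \nabla v \times \nabla\varphi$, and match components in $\curl B = \lambda B$. The poloidal part yields $\nabla v = \lambda \nabla \psi$, so $v = \lambda\psi + c$ for a constant $c$. Since the substitution $\psi \mapsto \psi + c/\lambda$ leaves $B_{\text{pol}}$ unchanged (and $\lambda \neq 0$), I can normalise to $v = \lambda \psi$; the toroidal component then becomes $L\psi = -\lambda^2 \psi$. Setting $u := v = \lambda \psi$ gives the claimed formula $B = \lambda^{-1}\nabla u \times \nabla \varphi + u\nabla \varphi$ together with $Lu = -\lambda^2 u$.

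For the boundary conditions, $B\cdot n = 0$ on $\partial\cR$ is automatic on the toroidal part (as $n$ lies in the $(\hat r,\hat z)$-plane) and on the poloidal part is equivalent to $u$ being constant on the connected curve $\partial\Sigma$. To force this constant to vanish, I would apply the Amp\`erian condition to a single embedded disk $D \subset \RR^3\setminus \cR$ whose boundary is a longitude loop through a chosen point $(r_0,z_0) \in \partial\Sigma$ (such a disk exists in an affine half-plane meeting the hole of $\cR$): the explicit form of $B$ yields $\oint_{\partial D} B\cdot dl = 2\pi u(r_0,z_0)$, hence $u|_{\partial\Sigma} = 0$. The converse direction follows by running these computations in reverse, and the identity $\lambda_1^{\text{s}}(\cR) = \sqrt{\lambda_1^{\text{G-S}}(\Sigma)}$ is then immediate from the bijection $\lambda \leftrightarrow \lambda^2$ between positive symmetric Amp\`erian curl eigenvalues and positive Dirichlet eigenvalues of $-L$ on $\Sigma$.

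The main obstacle I anticipate is correctly tracking the gauge freedom in the stream function $\psi$ and ensuring that the Amp\`erian condition (rather than mere tangency) is the ingredient that pins $u$ to zero on $\partial\Sigma$; without invoking it, one would only obtain that $u$ is constant on $\partial\Sigma$, leading to a different spectral problem.
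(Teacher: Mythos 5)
Your argument is correct, and it is essentially the standard Grad--Shafranov reduction; note that the paper does not prove Proposition \ref{prop:amp-sym-is-G-S} itself but cites \cite[Proposition 3]{cantarellathesis} for it, so your write-up supplies the details that the paper leaves to the reference. The key steps all check out: axisymmetry plus $\Div B=0$ on the simply connected cross-section gives the stream function $\psi$ with $B_{\mathrm{pol}}=\nabla\psi\times\nabla\varphi$; the identities $\curl(v\nabla\varphi)=\nabla v\times\nabla\varphi$ and $\curl(\nabla\psi\times\nabla\varphi)=-(L\psi)\nabla\varphi$ are correct; the poloidal matching gives $v=\lambda\psi+c$ and the gauge shift $\psi\mapsto\psi+c/\lambda$ is legitimate because $L$ annihilates constants, so the normalisation $u=v=\lambda\psi$ is harmless; tangency to the boundary is equivalent to $u$ being locally constant on the connected curve $\partial\Sigma$; and evaluating the Amp\`erian condition on the flat disk $\{x^2+y^2\le r_{\min}^2,\ z=z_0\}$ through a point of $\partial\Sigma$ realising $r_{\min}$ pins the constant to zero via $\oint B\cdot dl=2\pi u(r_0,z_0)$. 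The only place where ``running the computations in reverse'' hides something is the converse: to conclude that the constructed $B$ is Amp\`erian you must verify $\oint_{\partial D}B\cdot dl=0$ for \emph{every} admissible disk, not just the one you used in the forward direction. This follows because $\curl B=\lambda B$ and $B\cdot n=0$ force the pulled-back $1$-form $B^\flat|_{\partial\cR}$ to be closed, and every boundary curve of an admissible disk is homologous in $\partial\cR$ to a multiple of the longitude (it must die in $H_1(\mathbb{R}^3\setminus\Int\cR)$, which is generated by the meridian class); this is the same mechanism the paper invokes in Corollary \ref{cor:grad-flows}. With that sentence added, the proof is complete, and the spectral identity $\lambda_1^{\text{s}}(\cR)=\sqrt{\lambda_1^{\text{G-S}}(\Sigma)}$ follows as you say from the bijection $\lambda\leftrightarrow\lambda^2$ together with the positivity of the Dirichlet spectrum of $-L$.
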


The following elementary inequalities for the Grad-Shafranov eigenvalues will be instrumental in the proof of Theorems~\ref{thm:sym-wins} and~\ref{thm:antisym-wins}. They are proved in Appendix \ref{app:Grad-Shafranov}. In the statement we endow the half-plane $\RR^+ \times \RR$ with coordinates $(r,z)$.

\begin{lemma}\label{lem:basic-G-S-ineq}
Let $U\subset U' \subset \RR^2$ be bounded domains whose closures are contained in $\RR^+ \times \RR$. Then,
\begin{equation*}
\lambda^{\text{G-S}}_1(U) \geq \lambda^{\text{G-S}}_1(U').
\end{equation*}
Moreover, letting $\lambda^D_1(U)$ denote the first Dirichlet eigenvalue of the Laplacian $\partial_{rr}+\partial_{zz}$, we have
\begin{equation*}
\lambda_1^D(U) + \frac{3}{4 r_\text{max}^2} \leq \lambda^{\text{G-S}}_1(U) \leq \lambda_1^D(U) + \frac{3}{4 r_\text{min}^2},
\end{equation*}
where 
\begin{equation*}
r_\text{min} = \inf_{(r,z) \in {U}} r,\qquad r_\text{max} = \sup_{(r,z) \in {U}} r.
\end{equation*}
\end{lemma}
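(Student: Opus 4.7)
The plan is to reduce the Grad-Shafranov eigenvalue problem to a Schrödinger-type eigenvalue problem via the substitution $u = r^{1/2} w$, and then apply the Courant-Fischer min-max principle.

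First I would compute that if $u(r,z) = r^{1/2} w(r,z)$ on a domain $U$ with closure in $\RR^+\times\RR$, then a direct calculation gives
\begin{equation*}
L u \;=\; r^{1/2}\Bigl(\Delta w - \tfrac{3}{4 r^{2}} w\Bigr),
\end{equation*}
where $\Delta = \partial_{rr}+\partial_{zz}$. Since $r>0$ on $\overline{U}$, the map $w \mapsto u = r^{1/2}w$ is a bounded invertible correspondence between $H^1_0(U)$ (for $w$) and the natural form domain for the Dirichlet problem of $L$ (for $u$). Consequently, the Dirichlet eigenvalue problem $L u = -\mu u$, $u|_{\partial U}=0$ is equivalent to the Dirichlet eigenvalue problem
\begin{equation*}
-\Delta w + \tfrac{3}{4 r^{2}} w \;=\; \mu\, w, \qquad w|_{\partial U}=0,
\end{equation*}
with the same eigenvalues. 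In particular, by min-max,
\begin{equation*}
\lambda^{\text{G-S}}_1(U) \;=\; \inf_{w\in H^1_0(U)\setminus\{0\}} \frac{\displaystyle\int_U \bigl(|\nabla w|^{2} + \tfrac{3}{4 r^{2}}|w|^{2}\bigr)\,dr\,dz}{\displaystyle\int_U |w|^{2}\,dr\,dz}.
\end{equation*}

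With this characterisation in hand, both conclusions of the lemma follow routinely. For the monotonicity, the inclusion $U \subset U'$ yields an inclusion $H^1_0(U)\hookrightarrow H^1_0(U')$ by extension by zero, and taking the infimum of the same Rayleigh quotient over a larger set can only decrease it, giving $\lambda^{\text{G-S}}_1(U) \geq \lambda^{\text{G-S}}_1(U')$. For the sandwich inequality, I would use that on $\overline{U}$ one has $\tfrac{3}{4 r_{\max}^{2}} \leq \tfrac{3}{4 r^{2}} \leq \tfrac{3}{4 r_{\min}^{2}}$: the lower bound on the potential applied inside the Rayleigh quotient and then the variational characterisation of $\lambda^D_1(U)$ gives the lower estimate, while the upper estimate follows by testing the Rayleigh quotient with a first Dirichlet eigenfunction of $-\Delta$ on $U$ and using the upper bound on the potential.

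No step here is genuinely hard. The only place to be careful is the algebraic computation of the conjugation $L \mapsto -\Delta + \tfrac{3}{4r^2}$ under $u = r^{1/2}w$, in particular cancelling the cross-terms from $\partial_{rr}$ against the $-\tfrac{1}{r}\partial_r$ term so that the first-order contributions vanish and the potential coefficient is exactly $\tfrac{3}{4r^2}$, which is what produces the constant $\tfrac{3}{4}$ appearing in the statement.
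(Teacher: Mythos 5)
Your proposal is correct and is essentially the paper's own argument: the paper also conjugates by $r^{1/2}$ (it writes $v_u = r^{-1/2}u$ and verifies the quadratic-form identity $B_r[u,u]=\int_U|\nabla v_u|^2\,dr\,dz+\int_U \tfrac{3}{4r^2}v_u^2\,dr\,dz$ with $\|v_u\|_{L^2}=\|u\|_{L^2,r}$, which is precisely the weak form of your operator identity $Lu=r^{1/2}(\Delta w-\tfrac{3}{4r^2}w)$), and then concludes by sandwiching the potential and taking infima of the Rayleigh quotients. Your conjugation formula checks out ($-\tfrac14 r^{-3/2}w-\tfrac12 r^{-3/2}w=-\tfrac34 r^{-3/2}w$ and the first-order terms cancel), and the remaining variational steps match the paper's.
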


\subsection{The antisymmetric curl spectrum}\label{sec:intro-antisym-spec}
Since we will use it later, it is convenient to introduce the antisymmetric spectrum in a general Riemannian setting.
Let $(M,g)$ be a compact oriented Riemannian 3-manifold with boundary. We set $\cX(M)$ to be the set of smooth vector fields on $M$ and $n$ the outward unit normal on $\partial M$. Assume that $M$ admits a (non-trivial) vector field $K$ tangent to the boundary which is Killing, i.e., $\cL_K g = 0$. We let
\begin{align*}
\cB(M) &\coloneqq \{B \in \cX(M) : \Div B = 0,\, B \cdot n = 0\},\\
\cB_K(M) &\coloneqq \cL_K(\cB(M)).
\end{align*}
A vector field $B$ is said to be a \emph{$K$-antisymmetric curl eigenfield} with corresponding \emph{$K$-antisymmetric curl eigenvalue} $\lambda$ if $B \in \cB_K(M)$ and $\curl B = \lambda B$. The set of $K$-antisymmetric curl eigenvalues is discrete and may be written as
\begin{equation}\label{eq:minimal-antisym-fact}
\cdots < \lambda^K_{-2}(M) < \lambda^K_{-1}(M) < 0 < \lambda^K_1(M) < \lambda^K_2(M) < \cdots. 
\end{equation}
See Corollary~\ref{cor:antisym-spectral} in Appendix~\ref{app:func-analysis-antisym-spec}.

The following decomposition of the spectrum of the curl operator into its symmetric and antisymmetric part is elementary; for the sake of completeness we include a proof in Appendix~\ref{app:func-analysis-antisym-spec}.

\begin{proposition}\label{prop:decomp}
Let $\lambda \in \RR \setminus \{0\}$ and set
\begin{equation*}
E_\lambda(M) = \{B \in \cX(M) : \curl B = \lambda B,\, B \cdot n = 0\}.    
\end{equation*}
Then, $E_{\lambda}(M)$ is finite-dimensional and has the $L^2$-orthogonal decomposition
\begin{equation*}
E_{\lambda}(M) = E_\lambda(M) \cap (\ker \cL_K) \oplus E_\lambda(M) \cap \cB_K(M),
\end{equation*}
The second factor further decomposes as
\begin{equation*}
E_\lambda(M) \cap \cB_K(M) = \bigoplus_{i = 1}^n E_{\lambda,K,k_i}(M), 
\end{equation*}
for some $n \in \NN$ and numbers $k_1,...,k_n > 0$, where
\begin{equation*}
E_{\lambda,K,k}(M) \coloneqq \{B \in E_{\lambda}(M) : \cL_K^2 B = -k^2 B\}.     
\end{equation*}
\end{proposition}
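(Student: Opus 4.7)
My plan is to exploit the interaction between the curl operator and the Lie derivative $\cL_K$. First, I would record three structural facts. (a) Since $K$ is Killing and tangent to $\partial M$, its flow $\phi_t$ is by isometries of $M$ that preserve the boundary, and hence preserves both the divergence-free condition and the boundary condition $B \cdot n = 0$; moreover $\phi_t$ commutes with $\curl$. Differentiating, $\cL_K$ commutes with $\curl$, and preserves $\cB(M)$ and each eigenspace $E_\lambda(M)$. (b) Because $\phi_t$ acts by isometries, it acts by $L^2$-orthogonal transformations on $\cB(M)$, so its infinitesimal generator $\cL_K$ is formally skew-symmetric on this space; a short integration-by-parts argument using that $K$ is tangent to the boundary gives this directly. (c) By the compactness of $\curl_\cA^{-1}$ already invoked in the paper (or the analogous compact inverse adapted to the flux-through-generators boundary data, which is what matters for the spectral decomposition of $\cB(M)$), each eigenspace $E_\lambda(M)$ is finite-dimensional.

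With these in hand, I would restrict $\cL_K$ to the finite-dimensional Euclidean space $E_\lambda(M)$ equipped with the $L^2$ inner product. It acts as a skew-symmetric endomorphism, so real linear algebra gives an orthogonal decomposition
\begin{equation*}
E_\lambda(M) = \ker \bigl(\cL_K|_{E_\lambda}\bigr) \oplus \bigoplus_{i=1}^{n} V_i,
\end{equation*}
where each $V_i$ is a $\cL_K$-invariant subspace on which $\cL_K^2 = -k_i^2 \,\id$ for some $k_i>0$, and the $V_i$ corresponding to distinct $k_i$ are $L^2$-orthogonal (distinct eigenvalues of the self-adjoint $-\cL_K^2$). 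Grouping the $V_i$ by the common value of $k_i$ produces precisely the spaces $E_{\lambda,K,k_i}(M)$ in the statement, giving the claimed orthogonal decomposition of $E_\lambda(M) \cap \cL_K(E_\lambda(M))$.

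What remains is the identification $E_\lambda(M) \cap \cB_K(M) = \cL_K\bigl(E_\lambda(M)\bigr)$, so that the ``$\cB_K$-piece'' in the statement matches the image of $\cL_K|_{E_\lambda}$. Here I would use the spectral completeness of $\curl$ on $\cB(M)$: for any $B \in \cB(M)$, decompose $B = \sum_\mu B_\mu$ with $B_\mu \in E_\mu(M)$ in the $L^2$-sense. Since $\cL_K$ commutes with $\curl$ and preserves each $E_\mu(M)$, we have $\cL_K B = \sum_\mu \cL_K B_\mu$ with $\cL_K B_\mu \in E_\mu(M)$. If $\cL_K B$ lies in $E_\lambda(M)$, then $\cL_K B_\mu = 0$ for every $\mu \neq \lambda$, and hence $\cL_K B = \cL_K B_\lambda \in \cL_K(E_\lambda(M))$. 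This gives $\cB_K(M) \cap E_\lambda(M) \subset \cL_K(E_\lambda(M))$; the reverse inclusion is automatic. Combined with the linear algebra step, this also yields the $L^2$-orthogonal splitting $E_\lambda(M) = \bigl(E_\lambda(M) \cap \ker \cL_K\bigr) \oplus \bigl(E_\lambda(M) \cap \cB_K(M)\bigr)$.

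The main obstacle I anticipate is a clean verification of fact (b), namely that tangency of $K$ to $\partial M$ really does kill all boundary terms when one checks skew-symmetry of $\cL_K$ on the $L^2$-closure of $\cB(M)$; one has to be a little careful because $\cL_K$ on general vector fields is not self-adjoint, only when restricted to the divergence-free, boundary-tangent subspace (using $\cL_K = \nabla_K - \nabla K$ for Killing $K$, and the fact that $\Div(K \cdot \langle B, B'\rangle) $ integrates to a boundary term that vanishes by $K \cdot n = 0$). Everything else is routine spectral theory of compact operators combined with real skew-symmetric normal form.
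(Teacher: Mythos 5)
Your core argument coincides with the paper's: $\cL_K$ preserves $E_\lambda(M)$ and is $L^2$-skew-symmetric there (Lemma \ref{lem:antisym-of-K}), so the spectral theorem applied to the symmetric, negative semidefinite operator $\cL_K^2$ on the finite-dimensional space $E_\lambda(M)$ yields the orthogonal splitting into $E_\lambda(M)\cap\ker\cL_K$ and the spaces $E_{\lambda,K,k_i}(M)$. Where you go beyond the paper is in explicitly identifying the non-kernel part with $E_\lambda(M)\cap\cB_K(M)$ --- a point the paper's proof leaves implicit --- and there your route is heavier than necessary and needs one correction: $\cB(M)$ is not the closed span of curl eigenfields; one must also include the Neumann harmonic fields $\cH_N(M)$ (cf.\ Theorem \ref{thm:flux-free-spectral}), so the expansion should read $B=\nu+\sum_\mu B_\mu$ with $\nu\in\cH_N(M)$. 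Since $\cL_K\nu=0$ for such $\nu$ (a fact used in Lemma \ref{lem:compatibility-of-antisym}), your conclusion survives, but you should also justify applying $\cL_K$ term by term to an $L^2$-convergent series (for instance by noting that the flow of $K$ is unitary on the $L^2$-closure of $\cB(M)$ and commutes with $\curl^{-1}$, hence with each spectral projection, and then differentiating). A lighter alternative avoids spectral completeness entirely: by skew-symmetry, every element of $\cB_K(M)=\cL_K(\cB(M))$ is $L^2$-orthogonal to $\ker\cL_K$, so $E_\lambda(M)\cap\cB_K(M)$ lies in the orthogonal complement of $E_\lambda(M)\cap\ker\cL_K$ inside $E_\lambda(M)$, which your normal form shows is exactly $\bigoplus_i E_{\lambda,K,k_i}(M)=\cL_K(E_\lambda(M))$; the reverse inclusion is, as you say, automatic. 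With that adjustment your proof is correct.
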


Now, as in Section~\ref{sec:sym-amp-intro}, we consider the case of a solid torus of revolution $\cR \subset \RR^3$. There is a Killing field $K$ given by the rotation, cf. Equation~\eqref{eq:Killing-in-R3}. In this case, we abbreviate the qualifier ``$K$-antisymmetric" to simply ``antisymmetric" and write the antisymmetric curl eigenvalues as 
\begin{equation}\label{eq:minimal-antisym-fact-R-case}
\cdots < \lambda^*_{-2}(\cR) < \lambda^*_{-1}(\cR) < 0 < \lambda^*_1(\cR) < \lambda^*_2(\cR) < \cdots. 
\end{equation}
For $\lambda \in \RR \setminus \{0\}$, we set
\begin{align*}
A_\lambda(\cR) &= \{B \in E_\lambda(\cR) : B \text{ satisfies the Amp\`erian boundary condition}\},\\
A_\lambda^{\text{s}}(\cR) &= \{B \in A_\lambda(\cR) : \cL_K B = 0\}. 
\end{align*}
By Corollary \ref{cor:grad-flows} it is obvious that $E_\lambda(\cR) \cap \cB_K(\cR) \subset A_\lambda(\cR)$, and so, by Proposition \ref{prop:decomp}, one has the $L^2$-orthogonal decomposition
\begin{equation}\label{eq:Amperian-decomp-on-R}
A_\lambda(\cR) = A_\lambda^{\text{s}}(\cR) \oplus (E_\lambda(\cR) \cap \cB_K(\cR)).
\end{equation}
Moreover, one has the following dynamical property of antisymmetric curl eigenfields. 

\begin{corollary}\label{cor:grad-flows}
Let $B$ be a antisymmetric curl eigenfield on $\mathcal R$. Then, it restricts to a gradient field on $\partial \mathcal R$.
\end{corollary}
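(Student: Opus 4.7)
The plan is to show that the pullback $\beta := \iota^* B^\flat$ of the metric-dual $1$-form of $B$ along the inclusion $\iota: \partial \cR \hookrightarrow \cR$ is exact on the boundary $2$-torus $\partial \cR$; this will immediately give $B|_{\partial \cR} = \nabla_{\partial \cR} f$ for some $f \in C^\infty(\partial \cR)$, which is the desired conclusion. Since $H^1_{\mathrm{dR}}(\partial \cR) \cong \RR^2$, exactness reduces to (i) closedness of $\beta$ and (ii) vanishing of its periods over a longitudinal and a meridional generator of $H_1(\partial \cR)$.

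For (i), the standard identity $dB^\flat = \iota_{\curl B}\,\mathrm{vol}$ in $\RR^3$ pulls back under $\iota$ to $d\beta = (\curl B \cdot n)|_{\partial \cR}\,\omega$, where $\omega$ is the induced area form on $\partial \cR$. Since $\curl B = \lambda B$ and $B \cdot n = 0$, this vanishes.

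For (ii), I use the antisymmetry hypothesis $B = \cL_K A$ with $A \in \cB(\cR)$. In cylindrical coordinates $(r, \varphi, z)$ the Killing field is $K = \partial_\varphi$, and $\cL_K$ acts on coordinate-basis components simply as $\partial_\varphi$: one has $B^j = \partial_\varphi A^j$ for $j \in \{r, \varphi, z\}$. On a longitudinal loop $\gamma_L$ parameterised by $\varphi \mapsto (r_0, \varphi, z_0)$, $\oint_{\gamma_L} \beta = r_0^2 \int_0^{2\pi} \partial_\varphi A^\varphi\,d\varphi = 0$ by $2\pi$-periodicity of $A$ in $\varphi$. On a meridional loop $\gamma_M(\varphi_0)$ at fixed $\varphi_0$, the period is independent of $\varphi_0$ by closedness of $\beta$, so averaging over $\varphi_0 \in [0, 2\pi]$ and swapping the order of integration reduces the inner integrands to $\int_0^{2\pi} \partial_\varphi A^j\,d\varphi = 0$ for $j \in \{r, z\}$, giving a vanishing meridional period.

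I do not foresee a genuine obstacle; the substance is just the averaging argument combined with closedness. The one subtlety is procedural: the Amp\`erian boundary condition would instantly force the longitudinal period to vanish, but in the logic of the paper it is stated as a consequence of this corollary rather than as a hypothesis, so that avenue is off-limits. The longitudinal period must therefore be handled via the $2\pi$-periodicity of the potential $A$ instead.
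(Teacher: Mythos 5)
Your proof is correct, and while it shares the paper's skeleton --- show that $\iota^*B^\flat$ is a closed $1$-form on the boundary $2$-torus and then kill both periods to get exactness --- it diverges in how the periods are handled. The paper's proof is a two-line affair: closedness is quoted from the tangency of $B$ (as in your step (i)), and the vanishing of the two periods is attributed to $B$ satisfying the Amp\`erian and flux-free boundary conditions, with a pointer to Lemma~\ref{lem:compatibility-of-antisym} (the flux-free condition kills the meridional period via Stokes on a cross-sectional disk together with $\curl B=\lambda B$; the Amp\`erian condition is precisely the vanishing of the longitudinal period). You instead compute both periods directly from the defining property $B=\cL_K A$ with $A\in\cB(\cR)$: the longitudinal period vanishes because $g(B,K)=\partial_\varphi\bigl(g(A,K)\bigr)$ integrates to zero over a $K$-orbit, and the meridional period vanishes by the homology-invariance-plus-averaging argument. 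This is a genuinely different justification of the key step, and it buys something: as you note, the paper's Section~\ref{sec:intro-antisym-spec} \emph{derives} the inclusion $E_\lambda(\cR)\cap\cB_K(\cR)\subset A_\lambda(\cR)$ (i.e.\ the Amp\`erian condition for antisymmetric eigenfields) from this corollary, while the corollary's stated proof invokes that same Amp\`erian condition; your direct periodicity argument for the longitudinal period is exactly what is needed to break that apparent circularity, at the modest cost of a slightly longer computation. The only points worth making explicit in a write-up are that the cylindrical components $A^r,A^\varphi,A^z$ are smooth and single-valued because $\cR$ does not meet the axis $r=0$ and $A$ is a globally defined field on $\cR$, and that the meridians at different $\varphi_0$ are mutually homologous in $\partial\cR$ so that closedness indeed makes their periods equal.
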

\begin{proof}
Since $B$ is tangent to the boundary, it is well known that its metric-dual $1$-form when pulled back on $\partial\mathcal R$ is closed~\cite{EPS23}. The result then easily follows from the fact that $B$ satisfies both the Amp\`erian and the flux-free boundary conditions (cf. Lemma~\ref{lem:compatibility-of-antisym}).     
\end{proof}

\subsection{A key lemma}\label{sec:method-of-proof}

An elementary but crucial lemma in the proofs of Theorems~\ref{thm:sym-wins} and~\ref{thm:antisym-wins} is the following.

\begin{lemma}\label{lem:governing-principle}
Let $\cR \subset \RR^3$ be a solid torus of revolution. Then:
\begin{enumerate}
    \item If $\lambda^{\text{s}}_1 < \lambda^*_1$, the first (positive) Amp\`erian curl eigenvalue is simple and its corresponding eigenfield is rotationally symmetric. \label{item:sym-conclusion}
    \item If $\lambda^{\text{s}}_1 > \lambda^*_1$, there exist no rotationally symmetric eigenfields corresponding to the first (positive) Amp\`erian curl eigenvalue. \label{item:antisym-conclusion}
\end{enumerate}
\end{lemma}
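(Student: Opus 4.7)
The plan is to reduce both parts of the lemma to direct consequences of the orthogonal decomposition \eqref{eq:Amperian-decomp-on-R} together with the standard simplicity of the first Dirichlet eigenvalue of an elliptic operator. The starting observation is that, for any $\lambda>0$, the space $A_\lambda(\cR)$ is non-zero precisely when at least one of the two summands $A^{\text{s}}_\lambda(\cR)$ and $E_\lambda(\cR) \cap \cB_K(\cR)$ is non-zero, i.e.\ when $\lambda$ is either a positive symmetric Amp\`erian curl eigenvalue or a positive antisymmetric curl eigenvalue. Consequently, the first positive Amp\`erian curl eigenvalue is $\min\{\lambda^{\text{s}}_1(\cR),\,\lambda^*_1(\cR)\}$.

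For part (i), assume $\lambda^{\text{s}}_1 < \lambda^*_1$. Then the first positive Amp\`erian curl eigenvalue equals $\lambda^{\text{s}}_1$ and lies strictly below $\lambda^*_1$, so it cannot be an antisymmetric eigenvalue, which forces $E_{\lambda^{\text{s}}_1}(\cR) \cap \cB_K(\cR) = \{0\}$. The decomposition \eqref{eq:Amperian-decomp-on-R} collapses to $A_{\lambda^{\text{s}}_1}(\cR) = A^{\text{s}}_{\lambda^{\text{s}}_1}(\cR)$, so every first Amp\`erian eigenfield is rotationally symmetric. For simplicity I would invoke Proposition~\ref{prop:amp-sym-is-G-S}, which gives a linear bijection between $A^{\text{s}}_{\lambda^{\text{s}}_1}(\cR)$ and the Dirichlet eigenspace of the Grad-Shafranov operator $L$ on $\Sigma$ at $\lambda^{\text{G-S}}_1(\Sigma)$. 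Since $Lu = r\,\Div(\nabla u/r)$, the operator $L$ is elliptic and self-adjoint on $L^2(\Sigma, r^{-1}dr\,dz)$ with Dirichlet conditions, so a standard Krein-Rutman argument yields that its first Dirichlet eigenvalue is simple, and hence so is $\lambda^{\text{s}}_1(\cR)$.

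For part (ii), assume $\lambda^{\text{s}}_1 > \lambda^*_1$. Then the first positive Amp\`erian curl eigenvalue equals $\lambda^*_1$ and lies strictly below $\lambda^{\text{s}}_1$, so it is not a symmetric Amp\`erian eigenvalue, forcing $A^{\text{s}}_{\lambda^*_1}(\cR) = \{0\}$. By definition of $A^{\text{s}}$, this is precisely the statement that no Amp\`erian eigenfield at the first positive eigenvalue satisfies $\cL_K B = 0$, i.e.\ no first eigenfield is rotationally symmetric.

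The argument is essentially an exercise in unpacking \eqref{eq:Amperian-decomp-on-R}. The only non-bookkeeping ingredient is the simplicity of the first Dirichlet eigenvalue of $L$ in part (i), which follows from standard elliptic theory once $L$ is viewed in the correct weighted inner product; I do not anticipate any genuine obstacle there, and indeed such simplicity seems likely to be already addressed in Appendix~\ref{app:Grad-Shafranov}.
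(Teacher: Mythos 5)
Your proposal is correct and follows essentially the same route as the paper: both use the decomposition \eqref{eq:Amperian-decomp-on-R} to identify the first Amp\`erian eigenvalue with $\min\{\lambda^{\text{s}}_1,\lambda^*_1\}$, then collapse the decomposition in each case, invoking Proposition~\ref{prop:amp-sym-is-G-S} together with the simplicity of the first Dirichlet eigenvalue of the Grad-Shafranov operator for part (i). The paper likewise treats that simplicity as a standard fact (it follows from the weighted variational characterization in Appendix~\ref{app:Grad-Shafranov}), so your Krein--Rutman remark is consistent with what the authors assume.
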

\begin{proof}
Equation~\eqref{eq:Amperian-decomp-on-R} implies that the set of Amp\`erian curl eigenvalues is the union of the sets of symmetric and antisymmetric eigenvalues. 
In particular,
\begin{equation}\label{eq:governing-for-smallest-evalue}
\lambda_1 = \min\{\lambda_1^{\text{s}},\lambda_1^*\}.
\end{equation}

If $\lambda_1^{\text{s}} < \lambda_1^*$, then from Equations~\eqref{eq:Amperian-decomp-on-R} and~\eqref{eq:governing-for-smallest-evalue},
\begin{equation*}
A_{\lambda_1}(\cR) = A_{\lambda_1}^{\text{s}}(\cR).
\end{equation*}
It then follows from Proposition~\ref{prop:amp-sym-is-G-S} and the simplicity of the first Dirichlet eigenvalue of the Grad-Shafranov operator on the cross-section $\Sigma$ of $\cR$ that $\lambda_1^2 = \lambda^{\text{G-S}}_1$ and that the vector space $A_{\lambda_1}^{\text{s}}(\cR)$ is simple. Conclusion \ref{item:sym-conclusion} thus follows.

If $\lambda_1^{\text{s}} > \lambda_1^*$, then from Equations~\eqref{eq:Amperian-decomp-on-R} and~\eqref{eq:governing-for-smallest-evalue},
\begin{gather*}
A_{\lambda_1}(\cR) = E_{\lambda_1}(\cR) \cap \cB_K(\cR),
\end{gather*}
while $A_{\lambda_1}^\text{s}(\cR)$ is a trivial vector space. Conclusion \ref{item:antisym-conclusion} thus follows.
\end{proof}

\section{Proof of Theorem \ref{thm:sym-wins}}\label{sec:proof-of-1st-thm}

Proposition~\ref{prop:amp-sym-is-G-S} and Lemma~\ref{lem:basic-G-S-ineq} imply that
\begin{equation}\label{eq:thm1.2-sym-ineq}
\begin{split}
\lambda^{\text{s}}_1(\cR_{a,R}) = \sqrt{\lambda^{\text{G-S}}_1(D_{a,R})} &\leq \sqrt{\lambda_1^D(D_{a,R}) + \frac{3}{4(R+a)^2}}\\
&= \sqrt{\frac{j_{0,1}^2}{a^2} + \frac{3}{4(R+a)^2}},
\end{split}
\end{equation}
where $j_{0,1}$ denotes the first positive zero of the Bessel function $J_0$.

Now we focus on $\lambda^*_1(\cR_{a,R})$, the first positive antisymmetric curl eigenvalue on $\cR_{a,R}$. Recall that we denote by $K=\partial_\varphi$ the Killing field on $\cR_{a,R}$ as discussed in Section~\ref{sec:prelim}.
The following lemma, which is proved in Section~\ref{sec:antisymm-ineq-warped-prod}, establishes an equality enabling passage to an averaged metric on $\mathcal R_{a,R}$.

\begin{lemma}\label{lem:warped-prod-lower-bound}
Setting $r_{1/2} = \sqrt{R-a}\sqrt{R+a}$, consider the averaged metric
\begin{equation*}
g_{r_{1/2}} = dr^2 + dz^2 + r^2_{1/2}d\varphi^2    
\end{equation*}
on $\mathcal R_{a,R}$ (obviously, $K$ is a Killing field of this metric). Then, letting $\widetilde\lambda^{*}_1(\mathcal R_{a,R})$ be the first positive $K$-antisymmetric curl eigenvalue on $(\mathcal R_{a,R},g_{r_{1/2}})$, one has
\begin{equation*}
\lambda^{*}_1(\mathcal R_{a,R}) \geq \left(\frac{R-a}{R+a}\right)^{\frac{1}{2}}\widetilde\lambda^{*}_1(\mathcal R_{a,R}).
\end{equation*}
\end{lemma}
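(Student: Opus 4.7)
The strategy is to reformulate the antisymmetric curl spectrum in the language of differential 2-forms, in which helicity becomes metric-independent, so that only the $L^2$-norm of 2-forms needs to be compared between the two metrics. Denote the Euclidean metric on $\cR_{a,R}$ by $g$. The map $B\mapsto \beta:=\iota_B dV_g$ is a bijection between $\cB(\cR_{a,R},g)$ and the space $\cZ^2_\partial(\cR_{a,R})$ of smooth closed $2$-forms whose pullback to $\partial\cR_{a,R}$ vanishes (a standard reformulation of the divergence-free and tangency conditions). Under this correspondence, $\cB_K(\cR_{a,R},g)=\cL_K\cB(\cR_{a,R},g)$ identifies with $\cL_K\cZ^2_\partial(\cR_{a,R})$, which is \emph{independent of the metric} since both $\cL_K$ and $\cZ^2_\partial$ are. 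The same identification applied to $g_{r_{1/2}}$ puts $\cB_K(\cR_{a,R},g_{r_{1/2}})$ in bijection with the same space.

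Since $H^2_{dR}(\cR_{a,R})=0$, each such $\beta$ admits a $1$-form potential $\alpha$ with $d\alpha=\beta$, and the helicity $\cH(\beta) := \int_{\cR_{a,R}} \alpha \wedge \beta$ is gauge-invariant (using that $\beta$ pulls back to zero on $\partial\cR_{a,R}$) and, crucially, metric-independent. A direct check shows that the energy coincides with the $L^2$-norm of the 2-form: $\cE_g(B) = \|\beta\|_{L^2(g)}^2$, because $B^\flat = \star_g\beta$ and $\star_g$ is an isometry. Combined with the standard variational characterization of the first positive antisymmetric curl eigenvalue via the self-adjoint compact antisymmetric Biot--Savart operator of Corollary~\ref{cor:antisym-spectral}, one obtains
\[
\frac{1}{\lambda^{*}_1(\cR_{a,R})} = \sup_{0\neq \beta \in \cL_K\cZ^2_\partial(\cR_{a,R})} \frac{\cH(\beta)}{\|\beta\|_{L^2(g)}^2}, \qquad \frac{1}{\widetilde\lambda^{*}_1(\cR_{a,R})} = \sup_{0\neq \beta \in \cL_K\cZ^2_\partial(\cR_{a,R})} \frac{\cH(\beta)}{\|\beta\|_{L^2(g_{r_{1/2}})}^2},
\]
the two suprema being taken over the \emph{same} space of 2-forms.

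The heart of the proof is a pointwise density comparison. Writing $\beta = \beta_{rz}\,dr\wedge dz + \beta_{r\varphi}\,dr\wedge d\varphi + \beta_{z\varphi}\,dz\wedge d\varphi$ in cylindrical coordinates, one readily computes
\[
|\beta|_g^2\,dV_g = \left(r\beta_{rz}^2 + \frac{\beta_{r\varphi}^2 + \beta_{z\varphi}^2}{r}\right) dr\,d\varphi\,dz, \qquad |\beta|_{g_{r_{1/2}}}^2\,dV_{g_{r_{1/2}}} = \left(r_{1/2}\beta_{rz}^2 + \frac{\beta_{r\varphi}^2 + \beta_{z\varphi}^2}{r_{1/2}}\right) dr\,d\varphi\,dz.
\]
Since both ratios $r/r_{1/2}$ and $r_{1/2}/r$ lie in $[\sqrt{(R-a)/(R+a)},\sqrt{(R+a)/(R-a)}]$ on $\cR_{a,R}$, one has pointwise $|\beta|_g^2\,dV_g \geq \sqrt{(R-a)/(R+a)}\,|\beta|_{g_{r_{1/2}}}^2\,dV_{g_{r_{1/2}}}$. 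Integrating and substituting into the variational formulas above yields
\[
\frac{1}{\lambda^{*}_1(\cR_{a,R})} \leq \sqrt{\tfrac{R+a}{R-a}}\cdot \frac{1}{\widetilde\lambda^{*}_1(\cR_{a,R})},
\]
which is equivalent to the claimed inequality.

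The only delicate point --- hardly an obstacle --- is establishing the variational principle on the infinite-dimensional space $\cL_K\cZ^2_\partial(\cR_{a,R})$ rigorously; this reduces to the compactness and self-adjointness of the antisymmetric Biot--Savart operator on the $L^2$-closure of $\cB_K$, as asserted in Appendix~\ref{app:func-analysis-antisym-spec}. Once that is in place, the rest is the explicit and elementary pointwise comparison above.
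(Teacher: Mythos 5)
Your proposal is correct and is essentially the paper's own argument phrased invariantly: the identification of both antisymmetric spaces with the metric-independent space $\cL_K\cZ^2_\partial$ of 2-forms is exactly the paper's map $\tau(B)=(r/r_{1/2})B$ (which preserves $\iota_B\,dV$ and hence helicity), and your pointwise comparison of $|\beta|^2_g\,dV_g$ with $|\beta|^2_{g_{r_{1/2}}}\,dV_{g_{r_{1/2}}}$ is the same computation yielding the same constant $\bigl(\tfrac{R+a}{R-a}\bigr)^{1/2}$, concluded via the same variational characterization from Corollary~\ref{cor:antisym-spectral}. One tiny imprecision: gauge-invariance of $\int\alpha\wedge\beta$ on a solid torus needs not only that $\beta$ pulls back to zero on the boundary but also the flux-free condition (orthogonality to Neumann harmonic fields), which holds here by Lemma~\ref{lem:compatibility-of-antisym} since $\cB_K\subset\cB_{\text{FF}}$.
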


The following lemma establishes a lower bound for the first antisymmetric curl eigenvalue $\widetilde\lambda^{*}_1(\mathcal R_{a,R})$ (with the metric $g_{r_{1/2}}$); its proof is presented in Section \ref{sec:analysis-disk-anti-sym}.

\begin{lemma}\label{lem:applying-universal-disk-ineq}
There exists $j^* > j_{0,1}$ such that, for any $a > 0$ ,
\begin{equation*}
\widetilde \lambda_1^*(\mathcal R_{a,R}) \geq \frac{j^*}{a}.    
\end{equation*}
\end{lemma}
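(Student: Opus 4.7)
The plan is to exploit the isometry between $(\cR_{a,R}, g_{r_{1/2}})$ and the flat tube $D_a \times \RR/L\ZZ$, where $D_a\subset\RR^2$ is the open disk of radius $a$ and $L = 2\pi r_{1/2}$. The change of variables $(r, z, \varphi) \mapsto (r-R, z, r_{1/2}\varphi)$ identifies $g_{r_{1/2}}$ with the flat Euclidean metric and $K$ with $r_{1/2}\partial_{\tilde\varphi}$; since antisymmetry with respect to a Killing field is preserved under nonzero constant rescaling, the $K$-antisymmetric curl spectrum of $(\cR_{a,R}, g_{r_{1/2}})$ coincides with the $\partial_{\tilde\varphi}$-antisymmetric curl spectrum on $D_a\times\RR/L\ZZ$.

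On the flat tube, Proposition~\ref{prop:decomp} combined with translation invariance in $\tilde\varphi$ and rotational symmetry of $D_a$ reduces the antisymmetric curl eigenvalue problem to a family of scalar problems indexed by $(m,n)\in\ZZ\times(\ZZ\setminus\{0\})$, namely Fourier modes of the form $\hat B(\rho)e^{im\theta+ik_n\tilde\varphi}$ with $k_n=2\pi n/L$. On each $(m,n)$-block, the Beltrami equation admits an explicit Chandrasekhar--Kendall representation in terms of Bessel functions: setting $\alpha^2 = \lambda^2 - k_n^2$, the longitudinal component of $B$ is proportional to $J_m(\alpha\rho)$, and the boundary condition $B\cdot\nu=0$ on $\partial D_a$ reduces to the transcendental equation
\begin{equation*}
k_n a\,\alpha a\,J_m'(\alpha a) \,+\, m\,\lambda a\,J_m(\alpha a)\,=\,0.
\end{equation*}
Writing $x=\alpha a$ and $y=k_n a$, the lemma reduces to the stronger (unquantized) statement that the infimum of $x^2+y^2$ over $(m,x,y)$ with $m\in\ZZ$, $x,y>0$ satisfying $yxJ_m'(x)+m\sqrt{x^2+y^2}\,J_m(x)=0$ is strictly greater than $j_{0,1}^2$.

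For $m=0$ the equation collapses to $J_0'(x)=0$, forcing $x\geq j_{1,1}>j_{0,1}$, so $x^2+y^2\geq j_{1,1}^2$. For $m\neq 0$, using the recursion $J_m'(x)=J_{m-1}(x)-(m/x)J_m(x)$, I would show that the equation forces $J_{m-1}(x)$ and $J_m(x)$ to have opposite signs (with the analogous argument for $m<0$ via $J_{-m}=(-1)^mJ_m$). This immediately gives $x>j_{1,1}$ when $|m|\geq 2$, which handles those cases. For $|m|=1$ it gives only $x>j_{0,1}$, which is not yet strong enough; the relevant branch is the one that begins at $x=j_{1,1}$ when $y=0$ and decreases as $y$ grows.

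The main obstacle is thus the quantitative analysis of the $|m|=1$ branch: one must rule out the possibility that $x^2+y^2$ approaches $j_{0,1}^2$ along it. I would do so by combining a perturbative expansion of the transcendental equation at $y=0$ (implicit differentiation gives $x'(0)=-1$, so $x(y)=j_{1,1}-y+O(y^2)$ and therefore $x^2+y^2=j_{1,1}^2-2j_{1,1}y+2y^2+O(y^3)$, which stays close to $j_{1,1}^2$ for small $y$) with a direct check that the global minimum of $x^2+y^2$ along the entire branch is attained at an interior point of the first quadrant at which $x$ is bounded below by a definite constant strictly exceeding $j_{0,1}$. Taking $j^*$ to be the square root of the resulting infimum — a concrete Bessel-function constant greater than $j_{0,1}$ — yields $\widetilde\lambda_1^*(\cR_{a,R})\geq j^*/a$, as required.
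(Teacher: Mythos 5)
Your overall route is the paper's: pass to the flat tube $D_a\times\RR/L\ZZ$ via the isometry, decompose into Fourier modes in $(\theta,\tilde\varphi)$, reduce to the transcendental boundary condition $\frac{\lambda m}{a}J_m(\mu a)+\ell\mu J_m'(\mu a)=0$, and bound $(a\lambda)^2=x^2+y^2$ below mode by mode. However, there is a genuine gap in your sign analysis for negative $m$. With the recurrence $J_m'(x)=J_{m-1}(x)-\frac{m}{x}J_m(x)$ and $n\coloneqq|m|$, the equation for $m=-n<0$ (with $y>0$ and $\lambda=\sqrt{x^2+y^2}>0$) becomes
\begin{equation*}
yx\,J_{n-1}(x)\;=\;n\bigl(y+\sqrt{x^2+y^2}\bigr)J_n(x),
\end{equation*}
in which \emph{both} coefficients are positive, so it forces $J_{n-1}(x)$ and $J_n(x)$ to have the \emph{same} sign --- a condition satisfied throughout $(0,j_{n-1,1})$ and hence vacuous there. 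Your claimed conclusion $x>j_{|m|-1,1}$ therefore does not follow for these modes, and they are not removable by symmetry: conjugation flips $(\ell,m)$ to $(-\ell,-m)$ simultaneously, so after normalising $y>0$ you must handle all $m\in\ZZ$. The gap is fixable: for $m<0$ use instead $J_n'(x)=\frac{n}{x}J_n(x)-J_{n+1}(x)$, which yields $yxJ_{n+1}(x)=n\bigl(y-\sqrt{x^2+y^2}\bigr)J_n(x)$ with coefficients of opposite sign, hence $x>j_{n,1}\geq j_{1,1}$. The paper sidesteps the case split entirely by using the symmetric recurrence pair (Equation \eqref{eq:Bessel-recurrence-rel}) to rewrite the boundary condition as $(\lambda+\ell)J_{m-1}(\mu a)+(\lambda-\ell)J_{m+1}(\mu a)=0$, whose coefficients are both positive whenever $\lambda>|\ell|$, giving "$J_{m-1}$ and $J_{m+1}$ have opposite signs" uniformly in the sign of $\ell$ and $m$, whence $\mu a>j_{m-1,1}$ (Equation \eqref{eq:jm-11-bound-for-kappa}).

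The second weak point is the critical case $|m|=1$, where the pointwise bound only gives $x>j_{0,1}$ and the quantity $x^2+y^2$ decreases along the branch emanating from $(x,y)=(j_{1,1},0)$. Your "direct check that the global minimum \dots is attained at an interior point at which $x$ is bounded below by a definite constant strictly exceeding $j_{0,1}$" is precisely the statement that needs proof, not a proof of it. The paper's Lemma \ref{lem:main-bounding-work} supplies the missing argument by a compactness split: for $|y|\geq y_0$ the pointwise bound gives $x^2+y^2>j_{0,1}^2+y_0^2$, while for $|y|<y_0$ small the continuity of the nondegenerate first zero of the limiting equation at $y=0$ (which is $j_{1,1}>j_{0,1}$) keeps $x$ near $j_{1,1}$; the two regimes together give a uniform constant $j_1^*>j_{0,1}$. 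Your sketch contains both ingredients for $m=+1$ (the bound $x>j_{0,1}$ and the perturbative expansion at $y=0$), so assembling them along these lines, together with the repair of the $m<0$ case above, would complete the proof.
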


The previous lemmas imply that we can write
\begin{equation*}
\lambda^*_1(\cR_{a,R}) \geq \left(\frac{R-a}{R+a}\right)^{\frac{1}{2}}\frac{j^*}{a}.
\end{equation*}
Recalling \eqref{eq:thm1.2-sym-ineq}, we have altogether that, for all $0 < a < R$,
\begin{align*}
a \lambda^{\text{s}}_1(\cR_{a,R}) &\leq \sqrt{j_{0,1}^2 + \frac{3a^2}{4(R+a)^2}},\\
a \lambda^*_1(\cR_{a,R}) &\geq \left(\frac{R-a}{R+a}\right)^{\frac{1}{2}}j^*,
\end{align*}
but
\begin{equation*}
\lim_{a \to 0^+} \sqrt{j_{0,1}^2 + \frac{3a^2}{4(R+a)^2}}= j_{0,1} < j^* = \lim_{a \to 0^+} \left(\frac{R-a}{R+a}\right)^{\frac{1}{2}}j^*,
\end{equation*}
and so, for sufficiently small $a > 0$, one has that
\begin{equation*}
\lambda^{\text{s}}_1(\cR_{a,R}) < \lambda^*_1(\cR_{a,R}),    
\end{equation*}
and so by Lemma \ref{lem:governing-principle} we are done.

\section{Proof of Theorem \ref{thm:antisym-wins}}\label{sec:proof-of-2nd-thm}

Let $L > 0$ and $0 < a < b$. Let $n\in \NN$ and $\cR \subset \RR^3$ be a smooth solid torus of revolution satisfying the set-theoretic relations in the statement of the theorem. Setting the rectangle
\begin{equation*}
R \coloneqq (a,b) \times (-L_{n+1}/2,L_{n+1}/2),    
\end{equation*}
with $L_n\coloneqq nL$, one has that $\Int \Sigma \subset R$, and hence Proposition~\ref{prop:amp-sym-is-G-S} and Lemma~\ref{lem:basic-G-S-ineq} imply
\begin{equation}\label{eq:thm1.3-sym-ineq}
\begin{split}
\lambda^{\text{s}}_1(\cR) = \sqrt{\lambda^{\text{G-S}}_1(\Sigma)} \geq \sqrt{\lambda^{\text{G-S}}_1(R)} &\geq \sqrt{\lambda^{D}_1(R) + \frac{3}{4b^2}}\\
&= \sqrt{\frac{\pi^2}{(b-a)^2} + \frac{3}{4b^2} + \frac{\pi^2}{L_{n+1}^2}}.
\end{split}
\end{equation}

We now focus on the first (positive) antisymmetric curl eigenvalue $\lambda^*_1(\cR)$.
It is convenient to introduce the Riemannian 3-manifold with boundary
\begin{equation*}
M_{a,b,L_n} \coloneqq A_{a,b} \times \RR/L_n\ZZ,
\end{equation*}
with the metric
\begin{equation*}
g = dx^2 + dy^2 + ds^2,
\end{equation*}
and the Killing field
\begin{equation*}
P = -y\partial_x + x\partial_y,    
\end{equation*}
where $(x,y)\in A_{a,b}$ and $s\in \mathbb R/L_n\ZZ$ denote Cartesian coordinates. Then, we have the following lemma proven in Section \ref{sec:upper-bound-antisym-rectangular-section}.

\begin{lemma}\label{lem:antisym-inequality}
Let $\lambda^P_1(M_{a,b,L_n})$ denote the first $P$-antisymmetric curl eigenvalue on $M_{a,b,L_n}$. Then,
\begin{equation*}
\lambda^*_1(\cR) \leq \left(\frac{\pi^2 b^2}{L_n^2} + 1\right)\lambda^P_1(M_{a,b,L_n}).     
\end{equation*}
\end{lemma}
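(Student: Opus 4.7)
My plan is to bound $\lambda^*_1(\cR)$ from above by exhibiting an admissible test field in $\cB_K(\cR)$ built from a first $P$-antisymmetric eigenfield on $M_{a,b,L_n}$, and then invoking the variational characterization of the smallest positive antisymmetric curl eigenvalue. Since the restriction of $\curl^{-1}$ to the $L^2$-closure of $\cB_K(\cR)$ is compact and self-adjoint, and since $\lambda^*_1(\cR)>0$ exists by \eqref{eq:minimal-antisym-fact-R-case}, the Courant--Fischer principle gives
\[
\frac{1}{\lambda^*_1(\cR)}\;=\;\sup_{0\neq B\in\cB_K(\cR)}\frac{\cH(B)}{\|B\|^2_{L^2(\cR)}},
\]
so any admissible $\tilde B$ yields $\lambda^*_1(\cR)\leq \|\tilde B\|^2/\cH(\tilde B)$.

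Writing $\lambda_0:=\lambda^P_1(M_{a,b,L_n})$, let $B_0$ be a corresponding first positive $P$-antisymmetric eigenfield on $M_{a,b,L_n}$. Since $\curl B_0=\lambda_0 B_0$, the rescaling $A_0:=B_0/\lambda_0$ is a divergence-free vector potential of $B_0$ tangent to $\partial M_{a,b,L_n}$. Identifying $\cC_{a,b,L_n}\subset\cR$ with the fundamental domain $A_{a,b}\times[-L_n/2,L_n/2]$ of $M_{a,b,L_n}$ and choosing the cutoff $\phi(z):=\cos(\pi z/L_n)$ on $[-L_n/2,L_n/2]$, I define the test potential
\[
\tilde A := \begin{cases} \phi(z)\,A_0 & \text{on }\cC_{a,b,L_n},\\ 0 & \text{on }\cR\setminus \cC_{a,b,L_n},\end{cases}\qquad \tilde B := \curl\tilde A.
\]
Since $\phi(\pm L_n/2)=0$, $\tilde A$ is continuous on $\cR$ (hence in $H^1$), and since $A_{0,r}=B_{0,r}/\lambda_0=0$ on the lateral boundary $\partial A_{a,b}\times[-L_n/2,L_n/2]\subset\partial\cR$, $\tilde A$ is tangent to $\partial \cR$. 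Consequently $\tilde B\in L^2(\cR)$ is divergence-free, weakly tangent to $\partial\cR$, and $K$-antisymmetric (inherited from the nontrivial $\theta$-Fourier modes of $B_0$), so $\tilde B\in \cB_K(\cR)$.

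A cylindrical-coordinate computation using $\tilde B = \phi B_0 + \phi'(z)\,\hat z\times A_0$, the vanishing cross terms $B_0\cdot(\hat z\times B_0)=0$ and $A_0\cdot(\hat z\times A_0)=0$, and the pointwise bound $|\hat z\times B_0|^2\leq |B_0|^2$ gives
\[
\cH(\tilde B)=\frac{1}{\lambda_0}\int_{\cC_{a,b,L_n}}\!\!\phi^2\,|B_0|^2\,dV,\qquad \|\tilde B\|^2\leq \int_{\cC_{a,b,L_n}}\!\!\Bigl[\phi^2+\frac{(\phi')^2}{\lambda_0^2}\Bigr]|B_0|^2\,dV.
\]
Choosing $B_0$ as the $(|k|,m)=(1,0)$-mode first eigenfield (which has no $s$-dependence, and which realizes $\lambda^P_1$ by monotonicity of the mode eigenvalue in $k^2$ and $m^2$), $|B_0|^2$ factors out of the $z$-integration, and the Poincar\'e equality $\int(\phi')^2\,dz=(\pi/L_n)^2\int\phi^2\,dz$ produces the crude bound $\lambda^*_1(\cR)\leq \lambda_0+\pi^2/(\lambda_0 L_n^2)$. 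The final step is to absorb $\pi^2/(\lambda_0 L_n^2)\leq \lambda_0\,\pi^2 b^2/L_n^2$, which is equivalent to $\lambda_0\,b\geq 1$; this follows from identifying $\lambda^P_1$ as $\sqrt{\mu^{k=1}_1(A_{a,b})}$ (the square root of the first mode-$1$ Dirichlet Laplacian eigenvalue on the annulus) together with Dirichlet-monotonicity against the disk of radius $b$, which yields $\mu^{k=1}_1(A_{a,b})\geq j_{1,1}^2/b^2$. The principal technical subtlety is the admissibility check for $\tilde B$ (continuity of $\tilde A$ through the cut, tangency on $\partial\cR$, and the correct interpretation of $K$-antisymmetry in the $L^2$ sense), after which the Rayleigh estimate proceeds mechanically.
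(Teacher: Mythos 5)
Your overall strategy (cutoff in the periodic direction applied to a vector potential of a first $P$-antisymmetric eigenfield, then the Rayleigh-quotient characterization of $\lambda_1^*$) is the same as the paper's, but two steps do not go through as written.

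The main gap is the admissibility of $\tilde B$. You take the \emph{tangent} potential $A_0=B_0/\lambda_0$ and assert that, because $\tilde A=\phi A_0$ is tangent to $\partial\cR$, the field $\tilde B=\curl\tilde A$ is tangent to $\partial\cR$. Tangency of a potential does not imply tangency of its curl: writing $\tilde B=\phi B_0+\phi'(s)\,\hat z\times A_0$, on the lateral boundary (normal $\hat r$) one has $\tilde B\cdot\hat r=\phi\,B_{0,r}-\phi'(s)\,A_{0,\theta}=-\phi'(s)\,B_{0,\theta}/\lambda_0$, and $B_{0,\theta}$ has no reason to vanish there. So $\tilde B\notin\cB(\cR)$, hence not in $\cB_K(\cR)$, the variational principle of Corollary~\ref{cor:antisym-spectral} does not apply to it, and the identity $\cH(\tilde B)=(\tilde B,\tilde A)_{L^2}$ is not available either. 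This is exactly why the paper uses the \emph{normal-pointing} potential $A=-\tfrac{i}{m}B\times P$ from Lemma~\ref{lem:vector-pot-and-antisymm-lamda-ineq}: $f(s)A$ remains normal-pointing, so its curl is automatically divergence-free and tangent to the boundary, and the cross term in the helicity vanishes because $P\cdot\overline{A_f}=0$.

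A second unjustified step is the claim that $\lambda_1^P(M_{a,b,L_n})$ is realized by the $(\ell,m)=(0,\pm1)$ mode ``by monotonicity of the mode eigenvalue in $k^2$''. The eigenvalue equation couples the axial wavenumber $\ell$ and $\lambda$ through a Robin-type boundary condition (Lemma~\ref{lem:eigen-equations}), and no such monotonicity is established; indeed Lemma~\ref{lem:existence-of-small-eigenvalue} produces an eigenvalue $\pi/(b-a)$ at $\ell\neq0$, $m=1$, which for the relevant parameters is \emph{below} the $(0,1)$ Dirichlet-mode value $\sqrt{\mu_1^{k=1}(A_{a,b})}$. Your argument leans on this identification twice: to factor $|B_0|^2$ out of the $z$-integral, and to absorb $\pi^2/(\lambda_0L_n^2)$ via $\lambda_0 b\ge1$. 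The paper avoids both: it works with a general mode $(\ell,m)$, $m\neq0$, shows the relevant densities are $s$-independent via the complex formalism, and absorbs the derivative term using only $m^2\geq1$, needing no lower bound on $\lambda_0$.
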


Lastly, we have the following lemma about the existence of a certain $P$-antisymmetric curl eigenvalue. Its proof is found in Section~\ref{sec:analysis-annulus-anti-sym}. 

\begin{lemma}\label{lem:existence-of-small-eigenvalue}
There exist $0 < a < b$ and $L > 0$ such that for all $n \in \NN$, the number
\begin{equation*}
\lambda = \frac{\pi}{b-a}    
\end{equation*}
is a $P$-antisymmetric curl eigenvalue on $M_{a,b,L_n}$.
\end{lemma}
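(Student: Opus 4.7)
I would prove the lemma by a Chandrasekhar--Kendall construction combined with an intermediate value argument in a Robin parameter. For any scalar $\psi$ on $M_{a,b,L}$ satisfying the Helmholtz equation $\Delta \psi + \lambda^2 \psi = 0$, the vector field $B := \lambda \curl(\psi \hat s) + \curl\curl(\psi \hat s)$ is automatically divergence-free and satisfies $\curl B = \lambda B$, as a direct calculation using $\curl\curl = \nabla \Div - \Delta$ and the Helmholtz equation shows. Taking the ansatz $\psi = f(r)\, e^{i(\varphi + k s)}$ with angular mode $m = 1$ gives $\cL_P B = i B \neq 0$, so $B$ is $P$-antisymmetric; taking $k = 2\pi/L$ makes $\psi$, and hence $B$, descend to every $M_{a,b,L_n}$ with $L_n = n L$. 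The Helmholtz equation reduces to Bessel's equation of order $1$ with parameter $\kappa^2 := \lambda^2 - k^2$, while a direct computation gives $B_r = i(\lambda f/r + k f')\, e^{i(\varphi + k s)}$, so the tangency condition $B \cdot n = 0$ at $r = a, b$ becomes the Robin-type boundary condition $\lambda f(r) + k r f'(r) = 0$. It thus suffices to produce a single triple $(a, b, k)$ with $\lambda = \pi/(b-a)$ at which this Bessel--Robin problem admits a non-trivial solution.

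Fixing $\lambda = \pi/(b-a)$ and writing $\eta := k/\lambda$, the Liouville substitution $\phi := \sqrt{r}\, f$ converts Bessel's equation into the Schr\"odinger form
\begin{equation*}
-\phi''(r) + \frac{3}{4 r^2} \phi(r) = \kappa^2 \phi(r),\qquad \phi'(r) = \frac{\zeta}{r} \phi(r)\ \text{at}\ r = a, b,
\end{equation*}
where $\zeta := (1 - 2/\eta)/2$. The problem becomes: find $\eta \in (0, 1)$ (so that $\kappa^2 > 0$ and the Chandrasekhar--Kendall representation remains nondegenerate) and $a < b$ such that the target $\kappa^2(\eta) := \lambda^2(1 - \eta^2)$ lies in the Schr\"odinger--Robin spectrum. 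Denote the lowest eigenvalue by $\kappa_1^2(\eta)$, viewed as a continuous function of $\eta$ (continuity follows from standard perturbation theory for self-adjoint operators with Robin data depending smoothly on $\eta$).

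I compare $\kappa_1^2(\eta)$ with the target $\kappa^2(\eta)$ at two values of $\eta$. As $\eta \to 0^+$, the Robin condition degenerates to Dirichlet; combining Poincar\'e's inequality with the strict positivity of the potential yields
\begin{equation*}
\kappa_1^2(\eta) \;\geq\; \frac{\pi^2}{(b-a)^2} + \frac{3}{4 b^2} \;>\; \lambda^2 \;=\; \kappa^2(0),
\end{equation*}
so $\kappa_1^2 > \kappa^2$ at this endpoint. On the other hand, for $\eta$ slightly less than $\sqrt{3} - 1$ one has $\zeta < -\sqrt{3}/2$, hence $\zeta^2 > 3/4$; choosing the trial function $\phi(r) := e^{-\mu(r-a)}$ with $\mu := -\zeta/a > 0$ and performing the Rayleigh-quotient computation (valid whenever $\mu(b-a) \gg 1$, i.e.\ when $a$ is small relative to $|\zeta|(b-a)$) gives the upper bound
\begin{equation*}
\kappa_1^2(\eta) \;\leq\; \frac{3/4 - \zeta^2}{a^2} \;<\; 0,
\end{equation*}
while $\kappa^2(\eta) > 0$. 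Thus $\kappa_1^2 < \kappa^2$ for such $\eta$, and by the intermediate value theorem there exists $\eta_* \in (0, \sqrt{3} - 1)$ with $\kappa_1^2(\eta_*) = \kappa^2(\eta_*)$. At this $\eta_*$ the corresponding Schr\"odinger eigenfunction lifts (via $\phi \mapsto f = \phi/\sqrt{r}$) to a nontrivial Bessel--Robin solution, producing the desired $P$-antisymmetric Beltrami eigenfield on $M_{a,b,L}$ with eigenvalue $\pi/(b-a)$, and one takes $L := 2\pi/(\eta_* \lambda)$.

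\textbf{Main obstacle.} The delicate technical step is the trial-function upper bound: one must ensure that $(a, b)$ are chosen so that the asymptotic approximations $\int|\phi'|^2 \sim \mu/2$, $\int|\phi|^2 \sim 1/(2\mu)$ and $\int V|\phi|^2 \sim 3/(8\mu a^2)$ in the Rayleigh quotient are accurate enough uniformly for $\eta$ in a neighborhood of $\sqrt{3} - 1$, and that the resulting estimate $(3/4 - \zeta^2)/a^2$ is indeed negative there, so that the sign change $\kappa_1^2(\eta) - \kappa^2(\eta): + \to -$ across the range occurs strictly. This is a careful but routine bookkeeping exercise, reducing the lemma to the existence claim furnished by the intermediate value theorem.
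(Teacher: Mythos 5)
Your reduction of the problem is sound and essentially parallels the paper's: the Chandrasekhar--Kendall ansatz with $m=1$, the quantisation $k=2\pi/L$ so that the field descends to every $M_{a,b,L_n}$, and the Robin-type boundary condition $\lambda f + k r f' = 0$ at $r=a,b$ all match the reduction the paper carries out in its appendix on flat tubes. The gap is in the intermediate value argument, specifically in the endpoint $\eta\to 0^+$.

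You claim that as $\eta\to 0^+$ the Robin condition ``degenerates to Dirichlet'' and hence $\kappa_1^2(\eta)\geq \pi^2/(b-a)^2+3/(4b^2)>\kappa^2(0)$. This is false for the \emph{lowest} eigenvalue. Writing the condition in terms of the outward normal, at the outer boundary $r=b$ one has $\partial_n\phi=(\zeta/b)\phi$ with $\zeta\to-\infty$, which is the coercive direction and does converge to Dirichlet; but at the inner boundary $r=a$ the outward normal is $-\partial_r$, so the condition reads $\partial_n\phi=(-\zeta/a)\phi$ with parameter $-\zeta/a\to+\infty$. This is the attractive Robin regime: the boundary term $(\zeta/a)|\phi(a)|^2$ in the quadratic form is negative and unbounded, and the lowest eigenvalue diverges like $-\zeta^2/a^2\to-\infty$ as $\eta\to 0^+$. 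Your own trial function $\phi=e^{-\mu(r-a)}$ with $\mu=-\zeta/a$ already exhibits this: since $\zeta=1/2-1/\eta$ is increasing in $\eta$, one has $\zeta^2>3/4$ on the \emph{entire} interval $(0,\sqrt3-1)$, so the bound $\kappa_1^2(\eta)\leq(3/4-\zeta^2)/a^2<0<\kappa^2(\eta)$ holds throughout that interval, not just near its right end. Hence $\kappa_1^2-\kappa^2$ is negative at both of your endpoints and no sign change is established; the IVT as stated produces nothing. Repairing this would require tracking a higher branch of the Robin spectrum (e.g.\ the second eigenvalue, which is the one converging to the first Dirichlet eigenvalue as the inner Robin parameter blows up) and then showing that \emph{that} branch drops below the target somewhere, which your trial-function computation does not address.

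For comparison, the paper avoids this trap by running the intermediate value theorem in a different parameter: it fixes the ratio $r=\ell/\lambda$ and varies the inner radius $a\in(0,b)$, studying the $2\times2$ Bessel determinant $g_r(a)$ built from $J_0,J_2,Y_0,Y_2$. Positivity near $a=b^-$ comes from large-argument Bessel asymptotics, and negativity near $a=0^+$ comes from the logarithmic singularity of $Y_0$, which forces $g_r(a)\to-\infty$ provided a certain explicit combination of $J$-Bessel values is negative --- a concrete inequality verified for $r=\sqrt{1-(2.87)^2/\pi^2}$ by a rigorous Taylor-remainder computation. That route never needs to control a full eigenvalue branch of a Robin family, only the sign of one explicit analytic function at two ends of an interval.
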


We are now ready to complete the proof of Theorem~\ref{thm:antisym-wins}. Fixing $b>0$, take numbers $0 < a < b$ and $L > 0$ satisfying Lemma~\ref{lem:existence-of-small-eigenvalue}. Let $n\in \NN$ and $\cR \subset \RR^3$ be a smooth solid torus of revolution satisfying the set-theoretic relations in the theorem. Then, from Equation~\eqref{eq:thm1.3-sym-ineq}, we have 
\begin{equation}\label{eq:specific-sym-upperbound}
\lambda^{\text{s}}_1(\cR) > \sqrt{\frac{\pi^2}{(b-a)^2} + \frac{3}{4b^2}}.
\end{equation}
On the other hand, $\pi/(b-a)$ is a $P$-antisymmetric curl eigenvalue on $M_{a,b,L_n}$ and hence, by Lemma \ref{lem:antisym-inequality}, 
\begin{equation}\label{eq:specific-antisym-upperbound}
\lambda^*_1(\cR) \leq \left(\frac{\pi^2 b^2}{L^2_{n}} + 1\right) \frac{\pi}{b-a}.
\end{equation}
Fixing some $N \in \NN$ such that
\begin{equation*}
\left(\frac{\pi^2 b^2}{L^2_{N}} + 1\right)\frac{\pi}{b-a} \leq \sqrt{\frac{\pi^2}{(b-a)^2} + \frac{3}{4b^2}}, 
\end{equation*}
we infer from \eqref{eq:specific-sym-upperbound} and \eqref{eq:specific-antisym-upperbound},
\begin{equation*}
\lambda^{\text{s}}_1(\cR) > \left(\frac{\pi^2 b^2}{L^2_{n}} + 1\right) \frac{\pi}{b-a} \geq \lambda^*_1(\cR).
\end{equation*}
for all $n\geq N$, and we are done by Lemma~\ref{lem:governing-principle}.

\section{Proof of Lemma \ref{lem:warped-prod-lower-bound}}\label{sec:antisymm-ineq-warped-prod}

In this section we prove the following estimate:
\begin{equation*}
C^{-1}\widetilde \lambda_1^*(\mathcal R_{a,R})\leq  \lambda_1^*(\mathcal R_{a,R})\leq C\widetilde \lambda_1^*(\mathcal R_{a,R}),
\end{equation*}
with
\begin{equation*}
C\coloneqq\Big(\frac{R+a}{R-a}\Big)^{1/2}.
\end{equation*}

As before, $\mathcal B_K(\mathcal R_{a,R})$ is the space of $K$-antisymmetric solenoidal vector fields on $\cR_{a,R}$ (tangent to the boundary), and we use the notation $\mathcal B_{K,1/2}(\mathcal R_{a,R})$ for the corresponding space when $\mathcal R_{a,R}$ is endowed with the averaged metric $g_{r_{1/2}}$. It is straightforward to check that if $B\in \mathcal B_K(\mathcal R_{a,R})$ then $\tau(B)\coloneqq fB\in \mathcal B_{K,1/2}(\mathcal R_{a,R})$ for the function 
\begin{equation*}
f\coloneqq\frac{r}{r_{1/2}}.
\end{equation*}
Moreover, it is clear that
\begin{equation*}
i_B\,rdr \wedge d\varphi \wedge dz = i_{\tau(B)}\,r_{1/2}dr \wedge d\varphi \wedge dz,
\end{equation*}
so denoting the helicity of $\tau(B)$ with respect to the metric $g_{r_{1/2}}$ as $\mathcal H_{1/2}(\tau(B))$ we easily deduce that
\begin{equation}\label{eq:equal-helicity}
\cH_{1/2}(\tau(B)) =\cH(B).
\end{equation}
Now we compute
\begin{equation*}
(B,B)_{L^2} = \int_{\mathcal R_{a,R}} |B|^2 rdrd\varphi dz = \int_{\mathcal R_{a,R}} f |B|^2 r_{1/2}dr d\varphi dz.
\end{equation*}
Moreover, writing $B$ in coordinates as $B=B_r\partial_r+B_\varphi \partial_\varphi+B_z\partial_z$, we have
\begin{equation*}
f |B|^2 = \frac{r_{1/2}}{r}~\frac{r^2}{r_{1/2}^2} (B_r^2 + B_z^2) + \frac{r}{r_{1/2}}~r^2B_\varphi^2,
\end{equation*}
which we can compare with
\begin{equation*}
g_{r_{1/2}}(\tau(B),\tau(B)) = \frac{r^2}{r_{1/2}^2}\left(B_r^2 +B_z^2\right)+r^2B_\varphi^2.
\end{equation*}
Observing that
\begin{align*}
\max\left(\frac{r_{1/2}}{r}\right) = C = \max\left(\frac{r}{r_{1/2}}\right),\\
\min\left(\frac{r_{1/2}}{r}\right) = C^{-1} =  \min\left(\frac{r}{r_{1/2}}\right),
\end{align*}
we obtain the pointwise inequality
\begin{equation*}
C^{-1} g_{r_{1/2}}(\tau(B),\tau(B)) \leq f |B|^2 \leq C g_{r_{1/2}}(\tau(B),\tau(B)), 
\end{equation*}
and so, multiplying by $r_{1/2}$ and integrating on $\mathcal R_{a,R}$,
\begin{equation*}
C^{-1} \|\tau(B)\|^2_{L^2,r_{1/2}} \leq \|B\|^2_{L^2} \leq C \|\tau(B)\|^2_{L^2,r_{1/2}}, 
\end{equation*}
where the subindex $r_{1/2}$ in the scalar product denotes that it is computed with respect to the metric $g_{r_{1/2}}$. Taking Equation \eqref{eq:equal-helicity} into account, we therefore obtain
\begin{equation}\label{eq:Rayleigh-quotient-estimates}
C^{-1} \frac{\cH_{{1/2}}(\tau(B))}{\|\tau(B)\|^2_{L^2,r_{1/2}}} \leq \frac{\cH(B)}{\|B\|^2_{L^2}} \leq C \frac{\cH_{{1/2}}(\tau(B))}{\|\tau(B)\|^2_{L^2,r_{1/2}}}.  
\end{equation}
Finally, taking maxima over $B\in \mathcal B_K(\mathcal R_{a,R}) \setminus \{0\}$, and using the fact that $\tau$ defines a linear isomorphism between $B_K(\mathcal R_{a,R})$ and $\mathcal B_{K,1/2}(\mathcal R_{a,R})$, the estimate~\eqref{eq:Rayleigh-quotient-estimates} and the variational characterization of the first antisymmetric curl eigenvalue  (cf. Corollary~\ref{cor:antisym-spectral}) yield
\begin{equation*}
C^{-1} \left(\widetilde\lambda_1^*(\mathcal R_{a,R})\right)^{-1} \leq \frac{1}{\lambda_1^*(\mathcal R_{a,R})} \leq C \left(\widetilde\lambda_1^*(\mathcal R_{a,R})\right)^{-1},
\end{equation*}
which proves the result.

\section{A complex-valued formulation for antisymmetric curl eigenfields}\label{sec:complex-antisym-fields}

The purpose of this section is to introduce some complex algebra associated with the antisymmetric curl spectrum used in the proof of the remaining lemmas. Let $(M,g)$ be a compact  Riemannian 3-manifold with boundary, and let $\cX(M,\CC)$ denote the space of smooth complex-valued vector fields on $M$. For any $Z \in \cX(M,\CC)$, there exist unique $X,Y \in \cX(M)$ such that $Z = X + iY$ and we set
\begin{equation*}
\Re Z = X, \qquad \Im Z = Y, \qquad \overline{Z} = X-iY.
\end{equation*}
Let $g_\CC$ denote inner product such that, for vector fields $Z,W \in \cX(M,\CC)$,
\begin{equation*}
g_\CC(Z,W) \coloneqq g(Z,\overline{W}).    
\end{equation*}
Letting $\mu$ denote the Riemannian volume form on $M$, for $Z,W \in \cX(M,\CC)$ we set
\begin{equation*}
(Z,W)_{L^2} = \int_M g_\CC(Z,W)\mu = \int_M g(Z,\overline{W})\mu.\end{equation*}
For an $\RR$-linear subspace $S \subset \cX(M)$, we let $S_\CC$ denote the $\CC$-linear subspace of $\cX(M,\CC)$ given by 
\begin{equation*}
S_\CC \coloneqq \{Z \in \cX(M,\CC) : \Re Z \in S \text{ and } \Im Z \in S\}.    
\end{equation*}
Let
\begin{equation*}
\cB_{\CC,\text{FF}}(M) = (\cB_{\text{FF}}(M))_{\CC},
\end{equation*}
where $\cB_{\text{FF}}(M)$ is defined in Appendix~\ref{app:func-analysis-antisym-spec}. It is easy to prove but nevertheless important to observe that helicity extends nicely to the complex setting.

\begin{lemma}\label{lem:helicity-is-nice}
The $\CC$-linear extension $\curl^{-1} : \cB_{\CC,\text{FF}}(M) \to \cB_{\CC,\text{FF}}(M)$ of $\curl^{-1}$ on $\cB_{\text{FF}}(M)$ is symmetric with respect to the (complex) $L^2$ inner product. In particular, the function $\cH_\CC$ on $\cB_{\CC,\text{FF}}(M)$ given by
\begin{equation*}
\cH_{\CC}(B) = (B,\curl^{-1}B)_{L^2}   
\end{equation*}
is real-valued. For any $B \in \cB_{\CC,\text{FF}}(M)$ and any $A \in \cX(M,\CC)$ with $\curl A = B$, one has that
\begin{equation*}
\cH_\CC(B) = (B,A)_{L^2}.    
\end{equation*}
Moreover, for all $B \in \cB_{\CC,\text{FF}}(M)$, if $\cH : \cB_{\text{FF}}(M) \to \RR$ denotes helicity on $\cB_{\text{FF}}(M)$,
\begin{equation*}
\cH_\CC(B) = \cH(\Re B) + \cH(\Im B).    
\end{equation*}
\end{lemma}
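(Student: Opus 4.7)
The entire statement reduces to the symmetry of $\curl^{-1}$ on the real space $\cB_{\text{FF}}(M)$ (established in Appendix~\ref{app:func-analysis-antisym-spec}), with careful bookkeeping of the complex conjugation built into $(Z,W)_{L^2} = \int_M g(Z,\overline{W})\mu$.

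First I would verify the complex-symmetry identity $(\curl^{-1}Z, W)_{L^2} = (Z, \curl^{-1}W)_{L^2}$ for $Z, W \in \cB_{\CC,\text{FF}}(M)$. Writing $Z = X + iY$, $W = U + iV$ with $X,Y,U,V \in \cB_{\text{FF}}(M)$ and expanding both sides, each becomes a sum of four real integrals of the form $\int_M g(\curl^{-1}\alpha, \beta)\mu$; real symmetry of $\curl^{-1}$ matches them term-by-term. The realness of $\cH_\CC(B)$ then follows in one line by combining this identity with the Hermitian property $(Z,W)_{L^2} = \overline{(W,Z)_{L^2}}$:
\begin{equation*}
\overline{\cH_\CC(B)} = (\curl^{-1}B, B)_{L^2} = (B, \curl^{-1}B)_{L^2} = \cH_\CC(B).
\end{equation*}

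For the potential formula, given $A \in \cX(M,\CC)$ with $\curl A = B$, I would set $A_0 = \curl^{-1}B \in \cB_{\CC,\text{FF}}(M)$ and reduce the claim to $(B, A - A_0)_{L^2} = 0$, since $\cH_\CC(B) = (B, A_0)_{L^2}$ by definition. Now $\curl(A - A_0) = 0$, so the metric dual $1$-form of $\overline{A - A_0}$ is closed, while $\Div B = 0$ and $B\cdot n = 0$ make $*B^\flat$ a closed relative $2$-form on $(M,\partial M)$. The integral $(B, A - A_0)_{L^2}$ is then the Poincar\'e--Lefschetz pairing of these two cohomology classes (applied separately to real and imaginary parts), which vanishes by the defining flux-free property of $B$. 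This is the only substantive step and, although it is the standard gauge-invariance of helicity, I expect it to be the main obstacle by virtue of requiring a careful topological setup.

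Finally, writing $B = X + iY$, direct expansion gives
\begin{equation*}
\cH_\CC(B) = \int_M g(X, \curl^{-1}X)\mu + \int_M g(Y, \curl^{-1}Y)\mu + i\int_M \bigl[g(Y, \curl^{-1}X) - g(X, \curl^{-1}Y)\bigr]\mu.
\end{equation*}
The real part is $\cH(\Re B) + \cH(\Im B)$, and the imaginary cross-terms cancel by the real symmetry of $\curl^{-1}$, completing the proof.
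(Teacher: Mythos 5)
Your proof is correct and follows what is surely the intended (omitted) argument: everything reduces, after splitting into real and imaginary parts, to the real symmetry of $\curl^{-1}$ and the real gauge-invariance of helicity recorded in Appendix~\ref{app:func-analysis-antisym-spec}. The only place you do more work than necessary is the Poincar\'e--Lefschetz step: since the paper already recalls that $\cH(B)=(B,A)_{L^2}$ for \emph{any} real potential $A$ of a flux-free field $B$, the vanishing of $(X,C)_{L^2}$ for $X\in\cB_{\text{FF}}(M)$ and $\curl C=0$ follows in one line by comparing the two potentials $\curl^{-1}X$ and $\curl^{-1}X+C$, so the topological setup you flag as the main obstacle can be bypassed entirely.
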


Henceforth assume that $(M,g)$ admits a non-trivial Killing field $K \in \cX(M)$ tangent to the boundary. For $\lambda \in \RR \setminus \{0\}$ and $k \in \RR$, we set
\begin{align*}
E_{\CC,\lambda}(M) &= \{B \in \cX(M,\CC) : \curl B = \lambda B,\, B \cdot n = 0\}\\
E_{\CC,\lambda,K,k}(M) &= \{B \in E_{\CC,\lambda}(M) : \cL_K B = ik B\}.
\end{align*}

Then, it is easy to prove the following. In the statement, we use the notation introduced in Proposition~\ref{prop:decomp}.

\begin{lemma}\label{lem:complex-decomp}
For $\lambda \in \RR \setminus \{0\}$ and $k > 0$, one has the $L^2$-orthogonal decomposition
\begin{equation*}
(E_{\lambda,K,k}(M))_{\CC} = E_{\CC,\lambda,K,k}(M) \oplus E_{\CC,\lambda,K,-k}(M). 
\end{equation*}
\end{lemma}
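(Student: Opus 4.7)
The decomposition is the standard eigenspace splitting of a skew-adjoint operator with eigenvalues $\pm ik$, applied to $\cL_K$ on the complexified finite-dimensional space $(E_{\lambda,K,k}(M))_\CC$. I would proceed in three routine steps: containment of the candidate pieces in the ambient space, orthogonality via skew-adjointness, and surjectivity via an explicit spectral projection.

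First I would verify the inclusion $E_{\CC,\lambda,K,\pm k}(M) \subset (E_{\lambda,K,k}(M))_\CC$. If $Z = X + iY \in E_{\CC,\lambda,K,k}(M)$, then $\cL_K Z = ikZ$ forces $\cL_K^2 Z = -k^2 Z$, so separating real and imaginary parts gives $\cL_K^2 X = -k^2 X$ and $\cL_K^2 Y = -k^2 Y$; combined with $\curl X = \lambda X$, $\curl Y = \lambda Y$ and $X \cdot n = Y \cdot n = 0$ (which follow immediately from $\curl Z = \lambda Z$ and $Z \cdot n = 0$), we get $X, Y \in E_{\lambda,K,k}(M)$. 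The same reasoning works for $-k$ since $\cL_K^2$ only sees $k^2$.

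Next, I would establish orthogonality. The key input is that $\cL_K$ is skew-adjoint on smooth vector fields tangent to $\partial M$: because $\cL_K g = 0$ and $\cL_K \mu = 0$, for any smooth real vector fields $X, Y$ tangent to $\partial M$,
\begin{equation*}
\int_M \bigl[g(\cL_K X, Y) + g(X, \cL_K Y)\bigr] \mu = \int_M \cL_K\bigl(g(X,Y)\bigr)\mu = \int_M \cL_K\bigl(g(X,Y)\mu\bigr) = \int_{\partial M} g(X,Y) \, i_K \mu = 0,
\end{equation*}
the last equality holding since $K$ is tangent to $\partial M$. Extending $\CC$-linearly and using that $\cL_K$ commutes with conjugation gives $(\cL_K Z, W)_{L^2} = -(Z, \cL_K W)_{L^2}$ for complex vector fields. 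Then for $Z \in E_{\CC,\lambda,K,k}(M)$ and $W \in E_{\CC,\lambda,K,-k}(M)$, using sesquilinearity,
\begin{equation*}
ik (Z,W)_{L^2} = (\cL_K Z, W)_{L^2} = -(Z, \cL_K W)_{L^2} = -(Z, -ikW)_{L^2} = -ik (Z,W)_{L^2},
\end{equation*}
forcing $(Z,W)_{L^2} = 0$ since $k \neq 0$.

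Finally I would produce the decomposition via the projectors
\begin{equation*}
P_\pm \coloneqq \tfrac{1}{2}\bigl(I \mp i k^{-1} \cL_K\bigr).
\end{equation*}
On $(E_{\lambda,K,k}(M))_\CC$ one has $\cL_K^2 = -k^2 I$, whence a direct computation gives $P_+ + P_- = I$, $P_\pm^2 = P_\pm$, $P_+ P_- = 0$, and $\cL_K P_\pm = \pm ik \, P_\pm$. Since $K$ is Killing, $\cL_K$ commutes with $\curl$ and preserves tangency to $\partial M$, so $P_\pm$ map $(E_{\lambda,K,k}(M))_\CC$ into $E_{\CC,\lambda}(M)$; combined with the eigenvalue identity, $P_\pm Z \in E_{\CC,\lambda,K,\pm k}(M)$. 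The identity $Z = P_+ Z + P_- Z$ then yields the decomposition, which together with the orthogonality proved above completes the claim.

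The only subtle points are the skew-adjointness boundary computation and the book-keeping of the sesquilinearity when verifying orthogonality; both are standard and I do not anticipate a genuine obstacle.
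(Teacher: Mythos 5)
Your proof is correct and complete: the inclusion of the candidate summands, the orthogonality via skew-adjointness of $\cL_K$ (with the sesquilinearity handled correctly), and the spanning via the spectral projectors $P_\pm = \tfrac{1}{2}(I \mp ik^{-1}\cL_K)$ together give exactly the claimed $L^2$-orthogonal decomposition. The paper states this lemma without proof (``it is easy to prove''), and your argument is the standard one it implicitly has in mind, relying on the same facts the paper uses elsewhere (Lemma~\ref{lem:antisym-of-K} and the invariance $\cL_K(E_\lambda(M)) \subset E_\lambda(M)$ from the proof of Proposition~\ref{prop:decomp}).
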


We now state a useful vector potential formula for vector fields which generalise those in Lemma \ref{lem:complex-decomp}.

\begin{lemma}\label{lem:vector-pot-and-antisymm-lamda-ineq}
Let $k \in \RR\setminus \{0\}$ and $B \in \cX(M,\CC)$ be a complex vector field satisfying
\begin{equation*}
\Div B = 0, \qquad \cL_K B = ik B, \qquad B \cdot n = 0.    
\end{equation*}
Then $B \in \cB_{\CC,\text{FF}}(M)$. Letting $\lambda^K_1(M)$ and $\lambda^K_{-1}(M)$ respectively denote the first positive and first negative $K$-antisymmetric curl eigenvalues, we have that
\begin{equation}\label{eq:helicity-inequality}
\frac{1}{\lambda^K_{-1}(M)} \leq \frac{\cH_\CC(B)}{\|B\|_{L^2}^2} \leq \frac{1}{\lambda^K_1(M)}.
\end{equation}
In addition, the vector field
\begin{equation*}
A = -\frac{i}{k} B \times K,    
\end{equation*}
is a normal-pointing vector potential for $B$. In particular, the helicity of $B$ is given by
\begin{equation}\label{eq:helicity-formula}
\cH_\CC(B) = \frac{i}{k}(B,B\times K)_{L^2}.    
\end{equation}
If $\lambda \neq 0$ and $B$ satisfies that $\curl B = \lambda B$, then
\begin{equation}\label{eq:Beltrami-helicity}
\frac{\cH_\CC(B)}{\|B\|_{L^2}^2} = \frac{1}{\lambda}.    
\end{equation}
\end{lemma}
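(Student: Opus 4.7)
The plan is to construct the proposed vector potential explicitly and use it as a lever to obtain everything else. First I would invoke the standard identity $\curl(X \times Y) = X\,\Div Y - Y\,\Div X + \cL_Y X$, valid on any oriented Riemannian 3-manifold, with $X = B$ and $Y = K$. Since $\Div K = 0$ ($K$ is Killing), $\Div B = 0$, and $\cL_K B = ikB$ by hypothesis, this immediately gives $\curl(B \times K) = ikB$, so $A \coloneqq -\tfrac{i}{k}B \times K$ satisfies $\curl A = B$. Because $B$ and $K$ are both tangent to $\partial M$, the cross product $B \times K$ must be orthogonal to $\partial M$ at every boundary point, so $A|_{\partial M}$ is parallel to $n$, i.e., $A$ is normal-pointing. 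For any embedded surface $\Sigma \subset M$ with $\partial \Sigma \subset \partial M$, Stokes' theorem then gives $\int_\Sigma B \cdot \nu = \oint_{\partial \Sigma} A \cdot d\ell = 0$, since $A$ is along the normal on $\partial \Sigma$. Hence $B$ satisfies the flux-free condition and lies in $\cB_{\CC,\text{FF}}(M)$.

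Plugging this particular $A$ into Lemma~\ref{lem:helicity-is-nice} and using the conjugate-linearity of the $L^2$ inner product in its second argument (with $K$ real) produces $\cH_\CC(B) = \tfrac{i}{k}(B, B \times K)_{L^2}$, which is \eqref{eq:helicity-formula}. For the inequality \eqref{eq:helicity-inequality}, I would decompose $B = X + iY$ with $X, Y \in \cX(M)$. The condition $\cL_K B = ikB$ forces $\cL_K X = -kY$ and $\cL_K Y = kX$, so both $X$ and $Y$ lie in $\cL_K(\cB(M)) = \cB_K(M)$; moreover, the real and imaginary parts of $A$ provide the real normal-pointing vector potentials $\tfrac{1}{k}Y \times K$ and $-\tfrac{1}{k}X \times K$ for $X$ and $Y$ respectively, so both are flux-free by the same Stokes' argument. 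Lemma~\ref{lem:helicity-is-nice} then yields $\cH_\CC(B) = \cH(X) + \cH(Y)$, while trivially $\|B\|^2_{L^2} = \|X\|^2_{L^2} + \|Y\|^2_{L^2}$. The variational characterization of $\lambda^K_{\pm 1}(M)$ provided by Corollary~\ref{cor:antisym-spectral} gives $1/\lambda^K_{-1}(M) \leq \cH(Z)/\|Z\|^2_{L^2} \leq 1/\lambda^K_1(M)$ for every nonzero $Z \in \cB_K(M)$; applied to $Z = X$ and $Z = Y$, and combined with the elementary fact that a weighted average of ratios lies between their extremes, this produces \eqref{eq:helicity-inequality}. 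Finally, for \eqref{eq:Beltrami-helicity}, when $\curl B = \lambda B$ with $\lambda \in \RR \setminus \{0\}$ the field $A' \coloneqq \lambda^{-1}B$ is a second global vector potential for $B$, and a second application of Lemma~\ref{lem:helicity-is-nice} gives $\cH_\CC(B) = (B, \lambda^{-1}B)_{L^2} = \lambda^{-1}\|B\|^2_{L^2}$.

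The main technical subtlety is the curl identity $\curl(X \times Y) = X\,\Div Y - Y\,\Div X + \cL_Y X$ on a general Riemannian 3-manifold, which requires a short coordinate-free verification via Cartan's magic formula and the metric characterization of the cross product. Everything else reduces to careful sign-tracking in the passage from the complex condition $\cL_K B = ikB$ to the bracket $[K, B]$, and to bookkeeping of real and imaginary parts and of the various gauge choices among the possible vector potentials of $B$.
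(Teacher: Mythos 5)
Your proof is correct and follows essentially the same route as the paper's: the same vector potential $A=-\tfrac{i}{k}B\times K$ via $\curl(B\times K)=\cL_K B$, the same splitting $B=X+iY$ with $X,Y\in\cB_K(M)$ fed into Corollary~\ref{cor:antisym-spectral} and Lemma~\ref{lem:helicity-is-nice}, and the same gauge $A'=B/\lambda$ for the Beltrami case. The only cosmetic difference is that you deduce membership in $\cB_{\CC,\text{FF}}(M)$ from the normal-pointing potential via a Stokes/integration-by-parts argument, whereas the paper simply cites the inclusion $\cB_K(M)\subset\cB_{\text{FF}}(M)$ from Lemma~\ref{lem:compatibility-of-antisym}; both are valid.
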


\begin{proof}
Writing $B = X + iY$, we have that $X,Y \in \cB(M)$. Moreover, from $\cL_K B = ik B$ we have that $\cL_K^2 B = -k^2 B$ so that in fact $X,Y \in \cB_K(M)$. Hence, $B \in \cB_{\CC,\text{FF}}(M)$ by Lemma~\ref{lem:compatibility-of-antisym}. Moreover, because $X,Y \in \cB_{K}(M)$, the estimate~\eqref{eq:helicity-inequality} follows from Corollary \ref{cor:antisym-spectral} and the equalities
\begin{equation*}
\cH_\CC(B) = \cH(X) + \cH(Y), \qquad \|B\|_{L^2}^2 = \|X\|_{L^2}^2 + \|Y\|_{L^2}^2.
\end{equation*}
The first equality coming from Lemma \ref{lem:helicity-is-nice}. Concerning the vector field $A$,
\begin{align*}
\curl A = -\frac{i}{k} \curl(B \times K)
= -\frac{i}{k}\cL_K B=B.
\end{align*}
Hence, Equation \eqref{eq:helicity-formula} holds. If $\lambda \neq 0$ and $B$ satisfies that $\curl B = \lambda B$, then Equation \eqref{eq:Beltrami-helicity} follows from Lemma \ref{lem:helicity-is-nice} and the fact that $A' = B/\lambda$ is a vector potential for $B$.
\end{proof}

Lastly we introduce some useful notation for flat tubes. For $0 < a < b$ we set
\begin{align*}
D_a &= \{(x,y) \in \RR^2 : x^2+y^2 \leq a^2\},\\  
A_{a,b} &= \{(x,y) \in \RR^2 : a^2 \leq x^2+y^2 \leq b^2\}. 
\end{align*}
For $L > 0$ and $\Sigma \in \{D_a,A_{a,b}\}$, we set
\begin{equation*}
M_{\Sigma,L} = \Sigma \times \RR/L\ZZ.     
\end{equation*}
Let $(x,y) \in \Sigma$ and $s \in \RR/L\ZZ$ denote the usual Cartesian coordinates and consider the flat metric $g$ on $M_{\Sigma,L}$ and associated Killing fields $T$ and $P$ given by
\begin{equation*}
g = dx^2 + dy^2 + ds^2, \quad T = \partial_s, \quad P = -y\partial_x + x\partial_y.
\end{equation*}
Note that the set of eigenvalues of $\cL_T$ and $\cL_P$ acting on either $C^{\infty}(M,\CC)$ or $\cX(M,\CC)$ are respectively given by $(2\pi/L)\ZZ$ and $\ZZ$. With this in mind, Lemma \ref{lem:complex-decomp} and Proposition \ref{prop:decomp} imply the following decomposition.

\begin{lemma}\label{lem:new-decomp}
Let $\lambda \in \RR \setminus \{0\}$. Then, the eigenspace $E_{\CC,\lambda}(M_{\Sigma,L})$ has the $L^2$-orthogonal decomposition
\begin{equation*}
E_{\CC,\lambda}(M_{\Sigma,L}) = \bigoplus_{i=1}^N E_{\CC,\lambda,\ell_i,m_i}(M_{\Sigma,L}),
\end{equation*}
for some $N \in \NN$ and numbers $\ell_1,...,\ell_N \in (2\pi/L)\ZZ$ and $m_1,...,m_N \in \ZZ$, where for $\ell \in (2\pi/L)\ZZ$ and $m \in \ZZ$, we have set
\begin{equation*}
E_{\CC,\lambda,\ell,m}(M_{\Sigma,L})\coloneqq  \{B \in E_{\CC,\lambda}(M) : \cL_T B = i\ell B,\, \cL_P B = im B\}.
\end{equation*}
\end{lemma}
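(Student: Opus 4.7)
The plan is to simultaneously diagonalize the two commuting Lie derivatives $\cL_T$ and $\cL_P$ on the finite-dimensional complex Hilbert space $E_{\CC,\lambda}(M_{\Sigma,L})$, whose finite-dimensionality follows from the complexification of Proposition~\ref{prop:decomp}. First I would verify that both $\cL_T$ and $\cL_P$ preserve $E_{\CC,\lambda}(M_{\Sigma,L})$: since $T$ and $P$ are Killing fields tangent to $\partial M_{\Sigma,L}$, Lie differentiation by either of them commutes with curl and preserves both $\Div B = 0$ and $B\cdot n=0$; these are the same observations underpinning Proposition~\ref{prop:decomp} and Lemma~\ref{lem:complex-decomp}. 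Next, $[T,P]=0$ because $T=\partial_s$ while $P$ has no $s$-dependence, so $[\cL_T,\cL_P]=0$. Finally, the same ``Killing and tangent to the boundary'' properties that yield the symmetry of $i\cL_K$ on $\cB_K(M)$ in Appendix~\ref{app:func-analysis-antisym-spec} show that $i\cL_T$ and $i\cL_P$ are Hermitian on $E_{\CC,\lambda}(M_{\Sigma,L})$ with respect to the complex $L^2$ inner product.

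Having established preservation, commutation, and the Hermitian property, I would invoke the standard result that a commuting pair of Hermitian operators on a finite-dimensional complex inner product space admits a joint orthonormal eigenbasis. This produces an $L^2$-orthogonal direct sum decomposition of $E_{\CC,\lambda}(M_{\Sigma,L})$ into joint eigenspaces, each of which is precisely some $E_{\CC,\lambda,\ell,m}(M_{\Sigma,L})$ for a pair of real numbers $(\ell,m)$, and with only finitely many nonzero summands by dimension count.

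It then remains to identify the admissible $\ell$ and $m$. The flow of $T=\partial_s$ is the translation $(x,y,s)\mapsto(x,y,s+t)$ on $M_{\Sigma,L}$, which is periodic of period $L$; analogously, the flow of $P$ is rotation in the $(x,y)$-plane, which is $2\pi$-periodic. For a joint eigenvector $B$ with eigenvalues $i\ell$ and $im$, exponentiating the Lie derivative action and equating it with the (trivial) flow action at the respective period forces $e^{i\ell L}=1$ and $e^{2\pi i m}=1$, hence $\ell\in(2\pi/L)\ZZ$ and $m\in\ZZ$. The only subtle point is making the flow-periodicity step precise at the vector-field level, which I would justify by expanding $e^{t\cL_T}B$ as a pushforward along the flow via the chain rule, or equivalently by noting that every smooth complex vector field on $M_{\Sigma,L}$ admits a Fourier expansion in the variables $s$ and $\varphi$ with modes lying in $(2\pi/L)\ZZ$ and $\ZZ$ respectively.
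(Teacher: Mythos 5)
Your proposal is correct and follows essentially the same route as the paper: the paper obtains the decomposition by combining Proposition~\ref{prop:decomp} and Lemma~\ref{lem:complex-decomp} (whose proofs are exactly the spectral-theorem argument for the skew-adjoint operators $\cL_T$ and $\cL_P$ on the finite-dimensional eigenspace) with the observation that the eigenvalues of $\cL_T$ and $\cL_P$ are quantized in $(2\pi/L)\ZZ$ and $\ZZ$. You merely perform the simultaneous diagonalization of the two commuting Hermitian operators directly rather than applying the cited results sequentially, and your flow-periodicity justification of the quantization matches the paper's remark preceding the lemma.
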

For the remaining sections, we use the notation $M_{a,L} \coloneqq M_{D_a,L}$ and $M_{a,b,L} \coloneqq M_{A_{a,b},L}$.

\section{Proof of Lemma \ref{lem:applying-universal-disk-ineq}}\label{sec:analysis-disk-anti-sym}

For $a > 0, L > 0$, we first consider the Riemannian 3-manifold $M_{a,L}$ with boundary as in Section \ref{sec:complex-antisym-fields}. The proof of Lemma \ref{lem:applying-universal-disk-ineq} is done through analysing the appropriate eigenvalue equation. Specifically, from Lemma \ref{lem:eigen-equations} in Appendix \ref{app:flat-tubes}, given $\ell \in 2\pi/L \ZZ$ and $m \in \ZZ$, there exist a nontrivial $B$ and $\lambda \neq 0$ with
\begin{align*}
B \in E_{\CC,\lambda,\ell,m}(M_{a,L}),
\end{align*}
if and only if $\lambda^2 > \ell^2$ and, setting $\mu \coloneqq  \sqrt{\lambda^2-\ell^2}$, $\lambda$ satisfies
\begin{equation}\label{eq:eigenvalue-equation-disk-again}
\frac{\lambda m}{a} J_m(\mu a) + \ell \mu J_m'(\mu a) = 0.    
\end{equation}

We introduce an algebraic manipulation which will prove useful here as well as in Section \ref{sec:analysis-annulus-anti-sym}. First, recall that for $m \in \RR$, if $Z_m \in \{J_m,Y_m\}$ denotes a Bessel function of order $m$ of the first or second kind, then for $x \neq 0$,
\begin{equation}\label{eq:Bessel-recurrence-rel}
Z_m'(x) = \frac{Z_{m-1}(x) - Z_{m+1}(x)}{2}, \qquad \frac{m Z_m(x)}{x} = \frac{Z_{m-1}(x) + Z_{m+1}(x)}{2}.
\end{equation}
In particular, Equation \eqref{eq:Bessel-recurrence-rel} implies that, for any $\ell \in \RR$, $\lambda \in \RR$, $c\neq 0$ and $\mu \neq 0$,
\begin{equation}\label{eq:bessel-manipulation}
\frac{\lambda m}{c} Z_m(\mu c) + \ell \mu Z_m'(\mu c)
= \frac{\mu}{2}((\lambda+\ell) Z_{m-1}(\mu c) + (\lambda-\ell) Z_{m+1}(\mu c)).
  \end{equation}

Setting $U \coloneqq  \{(\alpha,\kappa) \in \RR^2 : \kappa^2 > \alpha^2\}$, for $m \in \ZZ$, we consider the functions $F_m : U \to \RR$ given by
\begin{equation*}
F_m(\alpha,\kappa) \coloneqq  \frac{\kappa+\alpha}{\kappa-\alpha}J_{m-1}(\sqrt{\kappa^2-\alpha^2}) + J_{m+1}(\sqrt{\kappa^2-\alpha^2}).
\end{equation*}
Our primary interest is analysing the zeros of these functions. This is because, through Equation \eqref{eq:bessel-manipulation}, we see that, for $\lambda \in \RR\setminus\{0\}$, $\ell \in 2\pi/L \ZZ$, $m \in \ZZ$, one has that $\lambda^2 > \ell^2$ and Equation \eqref{eq:eigenvalue-equation-disk-again} is satisfied if and only if $(a\ell,a\lambda) \in U$ and
\begin{equation}\label{eq:connection-between-kappa-and-lambda}
F_m(a \ell,a \lambda) = 0.   
\end{equation}

The following three elementary lemmas allow us to study the zeros of the functions $F_m$. In the statements, for $m \in \NN_0 = \{0,1,2,...\}$ and $k \in \NN$, we let $j_{m,k}$ denote the $k^{\text{th}}$ positive zero of the Bessel function $J_m$.

\begin{lemma}\label{lem:m-is-zero}
For $\alpha \neq 0$, we have $(\alpha,\kappa) \in F_0^{-1}(0)$ if and only if there exists $k \in \NN$ with
\begin{equation*}
\kappa = \pm\sqrt{j_{1,k}^2 + \alpha^2}.
\end{equation*}
\end{lemma}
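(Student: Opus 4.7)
The proof is short and essentially a direct calculation exploiting the identity $J_{-1}(x) = -J_1(x)$. I would proceed as follows.

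First, specialising the definition of $F_m$ at $m=0$, one has
\begin{equation*}
F_0(\alpha,\kappa) = \frac{\kappa+\alpha}{\kappa-\alpha}J_{-1}\bigl(\sqrt{\kappa^2-\alpha^2}\bigr) + J_{1}\bigl(\sqrt{\kappa^2-\alpha^2}\bigr).
\end{equation*}
Using $J_{-1} = -J_1$, this collapses to
\begin{equation*}
F_0(\alpha,\kappa) = \left(1 - \frac{\kappa+\alpha}{\kappa-\alpha}\right) J_{1}\bigl(\sqrt{\kappa^2-\alpha^2}\bigr) = \frac{-2\alpha}{\kappa-\alpha}\, J_{1}\bigl(\sqrt{\kappa^2-\alpha^2}\bigr).
\end{equation*}
Note that the denominator $\kappa-\alpha$ cannot vanish on $U$, since $(\alpha,\kappa)\in U$ forces $\kappa^2 > \alpha^2$.

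The second step is to read off the zero set. Since $\alpha\neq 0$ by hypothesis, the prefactor $-2\alpha/(\kappa-\alpha)$ is nonzero, and therefore $F_0(\alpha,\kappa) = 0$ if and only if $J_1\bigl(\sqrt{\kappa^2-\alpha^2}\bigr) = 0$. Again, because $(\alpha,\kappa) \in U$ we have $\sqrt{\kappa^2-\alpha^2} > 0$ strictly, so the zero in question must be a positive zero of $J_1$; that is, $\sqrt{\kappa^2-\alpha^2} = j_{1,k}$ for some $k\in\NN$. Squaring and solving for $\kappa$ yields $\kappa = \pm\sqrt{j_{1,k}^2 + \alpha^2}$, and conversely every such $\kappa$ lies in $U$ and gives $F_0(\alpha,\kappa)=0$. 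This completes the equivalence.

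There is no real obstacle here; the only thing to be careful about is tracking that $\sqrt{\kappa^2-\alpha^2}$ is strictly positive on $U$, which rules out the spurious solution at the origin (where $J_1(0)=0$ but the argument $\sqrt{\kappa^2-\alpha^2}$ is not attained). The same algebraic manipulation $J_{-1}=-J_1$ will not be available for $m\neq 0$, which is presumably why the lemma as stated is restricted to $m=0$ and the subsequent lemmas (for other values of $m$) will require genuinely different arguments.
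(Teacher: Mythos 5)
Your proof is correct and is essentially identical to the paper's: the paper also uses $J_{-1}=-J_1$ to collapse $F_0$ to $\bigl(1-\frac{\kappa+\alpha}{\kappa-\alpha}\bigr)J_1\bigl(\sqrt{\kappa^2-\alpha^2}\bigr)$ and then reads off the zero set from $\alpha\neq 0$. You simply spell out the remaining bookkeeping (the prefactor being nonzero on $U$ and the argument of $J_1$ being strictly positive) that the paper leaves implicit.
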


\begin{proof}
The result follows noticing that
\begin{equation*}
F_0(\alpha,\kappa) = \left(1-\frac{\kappa+\alpha}{\kappa-\alpha}\right)J_{1}(\sqrt{\kappa^2-\alpha^2}),
\end{equation*}
and that $\alpha \neq 0$.
\end{proof}

\begin{lemma}\label{lem:main-bounding-work}
For $m \in \NN$, setting
\begin{equation*}
j^*_m \coloneqq \inf\{\kappa > 0 : (\alpha,\kappa) \in F_m^{-1}(0) \text{ for some } \alpha \in \RR \},   
\end{equation*}
one has that
\begin{equation}\label{eq:inf-upper-bound}
j^*_m > j_{m-1,1}.
\end{equation}
\end{lemma}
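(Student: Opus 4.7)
The plan is to show that $F_m$ is strictly positive on a region of $U$ whose closure meets the vertical line $\alpha=0$ at the point $(0,j_{m-1,1})$, which prevents any zero of $F_m$ from having $\kappa$ close to $j_{m-1,1}$. Using the substitution $\mu = \sqrt{\kappa^2-\alpha^2}$, I will focus on the sub-region
\begin{equation*}
U^+ \coloneqq \{(\alpha,\kappa) \in U : \kappa > 0,\, \mu \leq j_{m-1,1}\}.
\end{equation*}
As a sanity check on the axis $\alpha=0$, the Bessel recurrence \eqref{eq:Bessel-recurrence-rel} gives $F_m(0,\kappa) = \frac{2m}{\kappa} J_m(\kappa)$, so the smallest positive zero there is $\kappa = j_{m,1}$, which is classically strictly larger than $j_{m-1,1}$.

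On $U^+$ each factor in the definition of $F_m$ has a definite sign: the prefactor $(\kappa+\alpha)/(\kappa-\alpha)$ is strictly positive since $\alpha \in (-\kappa,\kappa)$; the classical strict monotonicity of the first positive zeros $j_{0,1} < j_{1,1} < j_{2,1} < \cdots$ of Bessel functions implies $j_{m+1,1} > j_{m-1,1}$, so $J_{m+1}(\mu) > 0$ on $(0, j_{m-1,1}]$; and $J_{m-1}(\mu) \geq 0$ on $(0, j_{m-1,1}]$, vanishing only at $\mu = j_{m-1,1}$. Adding these contributions gives $F_m > 0$ throughout $U^+$. Consequently, any zero of $F_m$ with $\kappa > 0$ must satisfy $\mu > j_{m-1,1}$, so $\kappa^2 = \mu^2 + \alpha^2 > j_{m-1,1}^2$, and hence $\kappa > j_{m-1,1}$. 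This already yields the weak bound $j_m^\ast \geq j_{m-1,1}$.

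To upgrade to the strict inequality \eqref{eq:inf-upper-bound}, I will argue by contradiction. Suppose there exists a sequence $(\alpha_n,\kappa_n) \in F_m^{-1}(0)$ with $\kappa_n \searrow j_{m-1,1}$. The previous paragraph forces $\mu_n \in (j_{m-1,1}, \kappa_n)$, so both $\mu_n \to j_{m-1,1}$ and $\alpha_n = \pm \sqrt{\kappa_n^2-\mu_n^2} \to 0$. Passing to the limit in
\begin{equation*}
\frac{\kappa_n+\alpha_n}{\kappa_n-\alpha_n}\, J_{m-1}(\mu_n) + J_{m+1}(\mu_n) = 0,
\end{equation*}
using continuity of the Bessel functions together with $J_{m-1}(j_{m-1,1}) = 0$, yields $J_{m+1}(j_{m-1,1}) = 0$, which contradicts $j_{m+1,1} > j_{m-1,1}$. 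The argument reduces to a careful sign analysis, and the only substantive input beyond the definition of $F_m$ is the strict monotonicity of the first zeros across Bessel orders; I do not foresee any serious technical obstacle.
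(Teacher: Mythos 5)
Your argument is correct. The first half coincides with the paper's: both establish the sign condition $F_m>0$ on the region where $0<\sqrt{\kappa^2-\alpha^2}\leq j_{m-1,1}$, which is exactly the bound \eqref{eq:jm-11-bound-for-kappa} and already gives $j^*_m\geq j_{m-1,1}$ together with the stronger pointwise statement $\kappa>\sqrt{j_{m-1,1}^2+\alpha^2}$ for every zero. Where you diverge is in upgrading to strictness. The paper splits into two regimes: for $|\alpha|\geq\alpha_0$ the zeros are uniformly bounded below by $\sqrt{j_{m-1,1}^2+\alpha_0^2}$, and for small $|\alpha|$ it invokes analyticity of $F_m(\alpha,\cdot)$, the identity $F_m(0,\kappa)=\tfrac{2m}{\kappa}J_m(\kappa)$, and continuity of the nondegenerate zero $j_{m,1}$ to conclude that the first positive zero stays near $j_{m,1}$. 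You instead run a single compactness argument: a sequence of zeros with $\kappa_n\searrow j_{m-1,1}$ is squeezed to force $\mu_n\to j_{m-1,1}$ and $\alpha_n\to 0$, and continuity of $F_m$ at the interior point $(0,j_{m-1,1})\in U$ then yields $J_{m+1}(j_{m-1,1})=0$, contradicting $j_{m+1,1}>j_{m-1,1}$. Your route is more elementary and self-contained (it avoids the perturbation-of-nondegenerate-zeros step, which as written in the paper is the least explicit part of the argument), at the cost of giving slightly less information: the paper's argument also locates the first zero near $j_{m,1}$ for small $|\alpha|$, whereas yours only excludes accumulation at $j_{m-1,1}$. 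Since the lemma only asserts $j^*_m>j_{m-1,1}$, this loss is immaterial. The only cosmetic slip is writing $\mu_n\in(j_{m-1,1},\kappa_n)$ where $\mu_n\in(j_{m-1,1},\kappa_n]$ is what follows (equality when $\alpha_n=0$); the limit argument is unaffected.
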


\begin{proof}
For $\kappa > 0$ if $(\alpha,\kappa) \in U$ and
\begin{equation*}
\kappa \leq \sqrt{j_{m-1,1}^2+\alpha^2},
\end{equation*}
then $0 < \sqrt{\kappa^2-\alpha^2} \leq j_{m-1,1}$ so that, because $j_{m-1,1} < j_{m+1,1}$,
\begin{equation*}
J_{m-1}(\sqrt{\kappa^2-\alpha^2}) \geq 0, \qquad J_{m+1}(\sqrt{\kappa^2-\alpha^2}) > 0,
\end{equation*}
and hence $F_m(\alpha,\kappa) > 0$. Therefore, for any $(\alpha,\kappa) \in F_m^{-1}(0)$,
\begin{equation}\label{eq:jm-11-bound-for-kappa}
\kappa > \sqrt{j_{m-1,1}^2 + \alpha^2}.
\end{equation}

Next, we observe that the function $F_m(\alpha,\cdot)$ is analytic in the closed set $S_\alpha:=\{\kappa \in \RR : \kappa \geq \sqrt{j_{m-1,1}^2 + \alpha^2}\}$, so its zero set consists of (countably many) isolated points. When $\alpha=0$ and $\kappa>0$, with the help of Equation \eqref{eq:Bessel-recurrence-rel} we see that
\begin{equation}\label{eq:alpha-zero-case}
F_m(0,\kappa) =  \frac{2m}{\kappa}J_m(\kappa). 
\end{equation}
We then conclude that $(0,\kappa)\in F_m^{-1}(0)$ if and only if $\kappa$ is a zero of $J_m$, and hence the first zero of $F_m(0,\kappa)$ in $S_0$ is $j_{m,1}>j_{m-1,1}$, which is nondegenerate. Finally, the continuity of nondegenerate zeros implies that the first positive zero of $F_m(\alpha,\cdot)$ is close to $j_{m,1}$ provided that $|\alpha|$ is small enough. While for $|\alpha|\geq \alpha_0>0$, the positive zeros of $F_m(\alpha,\cdot)$ are bounded below by $\sqrt{j_{m-1,1}^2+\alpha_0^2}$ from Equation~\eqref{eq:jm-11-bound-for-kappa}. The result follows.
\end{proof}

\begin{lemma}\label{lem:universal-bound-for-kappa}
Set $j^* = \min\{j^*_1,j_{1,1}\}$ with $j^*_1$ is as per Lemma \ref{lem:main-bounding-work}. Then, $j^* > j_{0,1}$ and for $m \in \ZZ$ and $(\alpha,\kappa) \in F^{-1}_m(0)$ with $\alpha \neq 0$,
\begin{equation}\label{eq:main-goal-for-kappa}
|\kappa| \geq j^*.
\end{equation}
\end{lemma}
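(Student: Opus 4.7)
The plan is to reduce the lemma to Lemmas~\ref{lem:m-is-zero} and~\ref{lem:main-bounding-work} by exploiting two elementary symmetries of the functions $F_m$ that allow me to dispose of the cases $m \leq 0$ and $\kappa < 0$ in one sweep. I would split the argument into a short verification of $j^{*} > j_{0,1}$ and a case analysis on $m$.

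For the strict inequality $j^{*} > j_{0,1}$: since $j^{*} = \min\{j^{*}_{1},j_{1,1}\}$, I would apply Lemma~\ref{lem:main-bounding-work} at $m=1$ to get $j^{*}_{1} > j_{0,1}$ and combine this with the standard interlacing fact $j_{1,1} > j_{0,1}$ for the positive zeros of $J_0$ and $J_1$.

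For the bound $|\kappa| \geq j^{*}$ I would proceed by cases. When $m = 0$, Lemma~\ref{lem:m-is-zero} produces $|\kappa| = \sqrt{j_{1,k}^2 + \alpha^2} \geq j_{1,1} \geq j^{*}$ for some $k \in \mathbb{N}$. For $m \geq 1$, I would first record the identity
\begin{equation*}
F_m(\alpha,\kappa) = F_m(-\alpha,-\kappa),
\end{equation*}
which is immediate from the invariance of $\sqrt{\kappa^2-\alpha^2}$ and of $(\kappa+\alpha)/(\kappa-\alpha)$ under the simultaneous sign flip. This lets me assume $\kappa > 0$, and then Lemma~\ref{lem:main-bounding-work} gives $\kappa \geq j^{*}_m$; this is $\geq j^{*}$ directly when $m = 1$, and for $m \geq 2$ it is $> j_{m-1,1} \geq j_{1,1} \geq j^{*}$, using the monotonicity of $j_{n,1}$ in $n$. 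For $m \leq -1$, setting $m = -m'$ with $m' \geq 1$ and using $J_{-n} = (-1)^n J_n$ for $n \in \mathbb{Z}$, a one-line computation would yield
\begin{equation*}
F_{-m'}(\alpha,\kappa) = (-1)^{m'-1}\,\frac{\kappa+\alpha}{\kappa-\alpha}\,F_{m'}(-\alpha,\kappa).
\end{equation*}
Since $(\alpha,\kappa) \in U$ forces $\kappa \neq \pm\alpha$, the prefactor is nonzero, so zeros of $F_{-m'}$ with $\alpha \neq 0$ correspond under $\alpha \mapsto -\alpha$ to zeros of $F_{m'}$ with $\alpha \neq 0$, and the bound from the $m \geq 1$ case transfers without change.

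No step really qualifies as \emph{the} hard part; this lemma is essentially bookkeeping around Lemmas~\ref{lem:m-is-zero} and~\ref{lem:main-bounding-work}. The only mild subtlety will be writing the two algebraic identities for $F_m$ carefully enough that the sign and parity issues are transparent, so that the reduction of negative indices and negative $\kappa$ to the positive/positive case is airtight.
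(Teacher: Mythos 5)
Your proposal is correct and follows essentially the same route as the paper: the same two symmetry identities $F_m(-\alpha,-\kappa)=F_m(\alpha,\kappa)$ and $F_{-m}(\alpha,\kappa)=(-1)^{m+1}\frac{\kappa+\alpha}{\kappa-\alpha}F_m(-\alpha,\kappa)$ reduce everything to $m\in\NN_0$, $\kappa>0$, which is then handled by Lemma~\ref{lem:m-is-zero} for $m=0$ and Lemma~\ref{lem:main-bounding-work} for $m\geq 1$. The only cosmetic difference is the order in which you dispose of negative $\kappa$ versus negative $m$, and your sign $(-1)^{m'-1}$ agrees with the paper's $(-1)^{m+1}$.
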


\begin{proof}
We know that $j^*_1 > j_{0,1}$ and $j_{1,1} > j_{0,1}$, so $j^* > j_{0,1}$. To prove \eqref{eq:main-goal-for-kappa}, first note that for $m \in \ZZ$ and $(\alpha,\kappa) \in U$, we have that $(-\alpha,-\kappa),(-\alpha,\kappa) \in U$ and 
\begin{equation*}
\begin{split}
F_m(-\alpha,-\kappa) &= F_m(\alpha,\kappa),\\
F_{-m}(\alpha,\kappa) &= (-1)^{m+1} \frac{\kappa+\alpha}{\kappa-\alpha}F_m(-\alpha,\kappa).
\end{split}
\end{equation*}

Hence, it suffices to prove Equation \eqref{eq:main-goal-for-kappa} in the case that $m \in \NN_0$ and $\kappa > 0$. If $m = 0$, by Lemma~\ref{lem:m-is-zero} we know that the first positive zero of $F_0(\alpha,\cdot)$ is $\sqrt{j_{1,1}^2 + \alpha^2}$, which is obviously bigger than $j^*$ when $\alpha\neq 0$.

If $m =1$, by Lemma~\ref{lem:main-bounding-work} we see that
$\kappa\geq j^*_1 \geq j^*$, and if $m \geq 2$, we have
\begin{equation*}
\kappa\geq j^*_{m} > j_{m-1,1} \geq j_{1,1} \geq j^*,    
\end{equation*}
which completes the proof of~\eqref{eq:main-goal-for-kappa}.
\end{proof}

We are now ready to complete the proof of Lemma \ref{lem:applying-universal-disk-ineq} (recall the notation from Section \ref{sec:proof-of-1st-thm}). Setting $L = 2\pi r_{1/2}$, it is straightforward to check that $(x,y,s) \mapsto (r=x+R,z=y,\varphi=r_{1/2}^{-1}s)$ is an isometry between $M_{a,L}$ and $(\cR_{a,R},g_{r_{1/2}})$, and the Killing $T$ becomes the Killing $K$ in $(\cR_{a,R},g_{r_{1/2}})$, up to an inessential constant factor. Accordingly, if $\lambda > 0$ denotes the first positive $T$-antisymmetric curl eigenvalue on $M_{a,L}$, then
\begin{equation*}
\widetilde \lambda_1^*(\mathcal R_{a,R}) = \lambda.
\end{equation*}
From Proposition \ref{prop:decomp} and lemmas \ref{lem:complex-decomp} and \ref{lem:new-decomp}, we infer that there exist $0 \neq \ell \in 2\pi/L \ZZ$, $m \in \ZZ$, such that $E_{\CC,\lambda,\ell,m}(M_{a,L}) \neq \{0\}$. As discussed in the beginning of this section, this implies $\lambda^2 > \ell^2$ and that $\lambda$ solves Equation \eqref{eq:eigenvalue-equation-disk-again}, which in turn means that $(a\ell,a\lambda) \in U$ solves Equation \eqref{eq:connection-between-kappa-and-lambda} and so, by Lemma \ref{lem:universal-bound-for-kappa},
\begin{equation*}
\widetilde \lambda_1^*(\mathcal R_{a,R}) = \lambda \geq \frac{j^*}{a}.
\end{equation*}

\section{Proof of Lemma \ref{lem:antisym-inequality}}\label{sec:upper-bound-antisym-rectangular-section}

Before considering domains in Euclidean space, we have the following lemma for constructing appropriate test functions. Let $0 < a < b$ and $L > 0$ and consider the Riemannian 3-manifold $M_{a,b,L}$ with boundary as in Section \ref{sec:complex-antisym-fields} and recall the notation from that section.

\begin{lemma}\label{lem:calculations-for-modded-field}
Let $f\in C^\infty(\mathbb R/L\mathbb Z)$ and $0 \neq m \in \ZZ$, $\ell \in (2\pi/L) \ZZ$. Let $B \in \cX(M_{a,b,L},\CC)$ be such that
\begin{equation*}
\Div B = 0,\quad \cL_T B = i\ell B,\quad \cL_P B = imB, \quad B \cdot n = 0.
\end{equation*}
Define
\begin{align*}
A &= -\frac{i}{m} B \times P,\\
A_f &= f(s) A,\\
B_f &= \curl A_f.
\end{align*}
Then,
\begin{equation}\label{eq:Bf-is-admissible}
\Div B_f = 0, \quad \cL_P B_f = i m B_f, \quad B_f \cdot n = 0,
\end{equation}
and
\begin{align}
\cH_\CC(B_f) &= \frac{1}{L}\int_{\RR/L\ZZ} f^2 ds~\cH_\CC(B),\label{eq:helicity-of-modded-B}\\
\|B_f\|_{L^2}^2 &\leq \left(\frac{b^2}{m^2L}\int_{\RR/L\ZZ}(f')^2ds + \frac{1}{L}\int_{\RR/L\ZZ} f^2 ds\right)\|B\|_{L^2}^2.\label{eq:energy-of-modded-B}
\end{align}
\end{lemma}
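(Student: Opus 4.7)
The plan is to derive an explicit expression for $B_f$ and then verify each claim by direct computation, exploiting the identity $\curl(fA) = f\curl A + \nabla f \times A$, the orthogonality $g(T,P) = 0$, and the fact that the $s$-dependence $e^{i\ell s}$ forced by $\cL_T B = i\ell B$ makes certain integrands $s$-independent while others average out over $\RR/L\ZZ$.

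First I would note that Lemma~\ref{lem:vector-pot-and-antisymm-lamda-ineq} applied with $K = P$ and $k = m$ gives $\curl A = B$ and shows $A$ is normal-pointing on $\partial M_{a,b,L}$. Since $\nabla f = f'(s)T$ and $T \times (B \times P) = B\,g(T,P) - P\,g(T,B) = -B^s P$ (using $T \perp P$), the curl identity yields the explicit formula
\begin{equation*}
B_f = fB + \frac{if'}{m}B^s P, \qquad B^s \coloneqq g(B,T).
\end{equation*}
From this, \eqref{eq:Bf-is-admissible} follows immediately: divergence-freeness is automatic since $B_f = \curl A_f$; the $\cL_P$-equation reduces, using that $P$ is Killing (so $\cL_P$ commutes with $\curl$), to checking $\cL_P A_f = im A_f$, which follows from $\cL_P(B\times P) = (\cL_P B)\times P = imB\times P$ together with $P(f(s)) = 0$; and the boundary condition uses that $P$ is the angular field on the annulus, hence tangent to $\partial A_{a,b}\times \RR/L\ZZ$, so both terms in $B_f$ are tangent.

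For the helicity identity \eqref{eq:helicity-of-modded-B}, I would invoke $\cH_\CC(B_f) = (B_f, A_f)_{L^2}$ from Lemma~\ref{lem:helicity-is-nice} and expand using the explicit formulas for $B_f$ and $A_f$. The cross term containing $g(P,\overline{B}\times P) = \det(P,\overline B, P) = 0$ vanishes pointwise. The remaining integrand is $(if^2/m)\,g(B,\overline B \times P)$, and since $B$ and $\overline B$ carry compensating $s$-phases $e^{\pm i\ell s}$, this factor is $s$-independent. Comparing with $\cH_\CC(B) = (i/m)\int_M g(B,\overline B\times P)\,d\mu$ via Fubini isolates exactly the factor $\frac{1}{L}\int_{\RR/L\ZZ} f^2\,ds$.

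Finally, for the $L^2$-bound \eqref{eq:energy-of-modded-B}, I would expand $|B_f|^2 = g(B_f, \overline{B_f})$ into three types of terms: $f^2|B|^2$, $(f')^2 m^{-2}|B^s|^2|P|^2$, and a cross term proportional to $ff'\,\Imag(B^s\overline{B^P})$ with $B^P \coloneqq g(B,P)$. The quantity $B^s\overline{B^P}$ is again $s$-independent (same phase cancellation), so the cross term integrates to zero along $\RR/L\ZZ$ because $\int_0^L f'f\,ds = \tfrac12\int_0^L (f^2)'\,ds = 0$. The remaining two terms are bounded using $|P|^2 \leq b^2$ on $A_{a,b}$ and $|B^s|^2 \leq |B|^2$, and the $s$-independence of $|B|^2$ lets me factor each integral as $\frac{1}{L}\int_0^L f^2\,ds\cdot \|B\|_{L^2}^2$ or $\frac{b^2}{m^2L}\int_0^L (f')^2\,ds\cdot \|B\|_{L^2}^2$, yielding exactly the claimed inequality. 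The main obstacle is technical rather than conceptual: carefully tracking which integrands are $s$-independent (so they split off the $f$-integrals cleanly), which vanish by the scalar triple-product identity, and which vanish by periodicity of $f$.
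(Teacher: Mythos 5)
Your proposal is correct and follows essentially the same route as the paper: the same explicit formula $B_f = fB + \tfrac{i}{m}f'(B\cdot T)P$, the same use of $T$-invariance of the relevant integrands to split off the $s$-integrals, the same vanishing of the cross terms (by the triple-product identity for the helicity and by $\int f f'\,ds=0$ for the energy), and the same final bounds $|P|^2\le b^2$, $|B\cdot T|^2\le|B|^2$. The only cosmetic difference is that you check $B_f\cdot n=0$ directly from the explicit formula (tangency of $B$ and $P$) rather than from $A_f$ being normal-pointing; both are fine.
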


\begin{proof}
From Lemma \ref{lem:vector-pot-and-antisymm-lamda-ineq} we have that $\curl A = B$ and that $A$ is normal-pointing. Hence, $A_f = f(s) A$ is normal-pointing. Moreover, because $\cL_P B = imB$, we have that $\cL_P A = im A$. Moreover, $ds(P) = 0$. Therefore,
\begin{equation*}
\cL_P A_f = im A_f,
\end{equation*}
and hence $B_f = \curl A_f$ satisfies Equation \eqref{eq:Bf-is-admissible}. Now,
\begin{equation*}
B_f = \curl A_f = f'(s) T \times A + f(s) B.
\end{equation*}
Moreover,
\begin{align*}
T \times A = \frac{i}{m}(B \cdot T)P,
\end{align*}
and hence
\begin{equation}\label{eq:expression-for-Bf}
B_f = \frac{i}{m} f'(s)(B \cdot T)P + f(s)B.
\end{equation}
Now, because $f$ and $P$ are real, we have that
\begin{equation*}
\overline{A_f} =  \frac{i}{m} f(s) \overline{B} \times P,
\end{equation*}
and in particular, $P \cdot \overline{A_f} = 0$, so Equation \eqref{eq:expression-for-Bf} implies that
\begin{equation}\label{eq:modified-helicity-density}
B_f \cdot \overline{A_f} = f(s)B \cdot A_f = f^2(s) B \cdot \overline{A}.
\end{equation}

We will henceforth make repeated use of the following observation. If $G \in C^{\infty}(M,\CC)$ is a function with $\cL_T G = 0$, and $F \in C^{\infty}(\RR/L\ZZ,\CC)$, then $G \equiv G(x,y)$ and
\begin{equation}\label{eq:Fubini}
\begin{split}
\int_M F(s)G \mu = \frac{1}{L}\int_{\RR/L\ZZ}Fds \int_M G \mu.
\end{split}
\end{equation}

Now, observe that
\begin{equation}\label{eq:deducing-dependence}
\cL_T(B \cdot \overline{A}) = i\ell B \cdot \overline{A} + B \cdot (-i\ell \overline{A}) = 0.
\end{equation}
By Equation \eqref{eq:modified-helicity-density}, we may therefore integrate $B_f \cdot \overline{A_f}$ using Equation \eqref{eq:Fubini}. This yields Equation \eqref{eq:helicity-of-modded-B}.

Concerning the energy of $B_f$, we use Equation \eqref{eq:expression-for-Bf} and expand
\begin{align*}
|B_f|^2 
= \frac{1}{m^2} f'(s)^2 r^2|B \cdot T|^2 + 2\Re \left(\frac{i}{m}f'(s)f(s) (B \cdot T)(\overline{B} \cdot P)\right) + f(s)^2|B|^2,
\end{align*}
where we used that $f$ is real-valued and that $|P|^2 = x^2+y^2 = r^2$. It is useful to note that, with the same reasoning as that in Equation \eqref{eq:deducing-dependence}, we may deduce that
\begin{equation}\label{eq:remaining-z-independent-quantities}
\begin{aligned}
\cL_T (r^2|B \cdot T|^2)  = 0,\\
\cL_T ((B \cdot T)(\overline{B} \cdot P)) = 0,\\
\cL_T |B|^2 = 0.
\end{aligned}
\end{equation}
Hence, applying Equation \eqref{eq:Fubini}, we deduce
\begin{equation*}
\int_M f'(s)f(s) (B \cdot T)(\overline{B} \cdot P) \mu = \frac{1}{L}\int_{\RR/L\ZZ} f' f ds \int_M (B \cdot T)(\overline{B} \cdot P) \mu = 0.
\end{equation*}

The cross-term of $|B_f|^2$ therefore integrates to zero, and so we can write that
\begin{align*}
\|B_f\|_{L^2}^2 = \frac{1}{m^2}\int_M f'(s)^2r^2|B\cdot T|^2 \mu + \int_{M} f^2(s)|B|^2 \mu.     
\end{align*}
Once again, using equations \eqref{eq:remaining-z-independent-quantities} and \eqref{eq:Fubini},
\begin{equation}\label{eq:energy-formula}
\|B_f\|_{L^2}^2 = \frac{1}{m^2L}\int_{\RR/L\ZZ}(f')^2ds \int_Mr^2|B\cdot T|^2\mu + \frac{1}{L}\int_{\RR/L\ZZ} f^2 ds~\|B\|_{L^2}^2. 
\end{equation}
Lastly, because $|T| = |\partial_s| = 1$ and $a \leq r \leq b$, we can say that $|B \cdot T|^2 \leq |B|^2$ and $r^2 \leq b^2$. Applying these inequalities to Equation \eqref{eq:energy-formula} yields \eqref{eq:energy-of-modded-B}.
\end{proof}

We continue with the proof of Lemma \ref{lem:antisym-inequality}. Let $\cR \subset \RR^3$ is a smooth solid torus of revolution satisfying
\begin{equation*}
\cC_{a,b,L} \subset \cR, \qquad \partial A_{a,b} \times [-L/2,L/2] \subset \partial \cR.
\end{equation*}
Recall the Killing field
\begin{equation*}
K = -Y\partial_X + X\partial_Y
\end{equation*}
on $\cR$ tangent to $\partial \cR$ where $(X,Y,Z) \in \cR$ denote Cartesian coordinates.

Let $\pi : A_{a,b} \times \RR \to M_{a,b,L}$ denote the projection given by $\pi(x,y,z) = (x,y,[z])$,  which restricts to an isometric embedding of
\begin{equation*}
U_{a,b,L} \coloneqq  A_{a,b} \times (-L/2,L/2) \subset \RR^3.
\end{equation*}
It is clear that $U_{a,b,L}\subset \cR$ and $\partial U_{a,b,L} \subset \partial \cR$.

Let $\hat{f}_\infty\coloneqq \sqrt{2}\cos(\pi t/L)$ be the first Dirichlet Laplacian eigenfunction on $[-L/2,L/2]$.
Then, we observe that
\begin{equation*}
\frac{1}{L}\int_{-L/2}^{L/2} \hat{f}_{\infty}(t)^2 dt = 1, \qquad \frac{1}{L}\int_{-L/2}^{L/2} \hat{f}'_{\infty}(t)^2 dt = \frac{\pi^2}{L^2}. 
\end{equation*}
Obviously, there exists a sequence $\{\hat{f}_j : [-L/2,L/2] \to \RR\}_{j \in \NN}$ of compactly supported smooth functions on $(-L/2,L/2)$ such that $\hat{f}_j \to \hat{f}$ in the $H^1$ norm on $[-L/2,L/2]$. 

Let $f_j \in C^{\infty}(\RR/L\ZZ)$ be the unique functions such that $f_j([t]) = \hat{f}_j(t)$ for $t \in [-L/2,L/2]$. The $H^1$-convergence
directly implies the limits
\begin{equation}\label{eq:fn-limits}
\frac{1}{L}\int_{\RR/L\ZZ} f_{j}^2ds \to 1, \qquad \frac{1}{L}\int_{\RR/L\ZZ} (f'_{j})^2ds  \to \frac{\pi^2}{L^2}.     
\end{equation}

On the other hand, letting $\lambda = \lambda^1_P(M_{a,b,L}) > 0$ denote the first $P$-antisymmetric curl eigenvalue on $M_{a,b,L}$, then by Proposition \ref{prop:decomp} and lemmas \ref{lem:complex-decomp} and \ref{lem:new-decomp}, there exist $m \in \ZZ \setminus \{0\}$, $\ell \in (2\pi/L)\ZZ$ and $B$ such that
\begin{equation*}
B \neq 0, \quad \curl B = \lambda B, \quad \cL_T B = i\ell B,\quad \cL_P B = imB, \quad B \cdot n = 0.
\end{equation*}

On the other hand, for $j \in \NN$, let $A_j \coloneqq  A_{f_j}$ and $B_j \coloneqq  B_{f_j}$,
as defined in Lemma \ref{lem:calculations-for-modded-field}. Then, we have that
\begin{equation}\label{eq:Bn-is-admissible}
\Div B_j = 0, \quad \cL_P B_j = i m B_j, \quad B_j \cdot n = 0,
\end{equation}
and by Lemma \ref{lem:vector-pot-and-antisymm-lamda-ineq},
\begin{align}
\cH_\CC(B_j) &= \frac{1}{L}\int_{\RR/L\ZZ} f^2_j ds~\cH_\CC(B) = \frac{1}{\lambda} \frac{1}{L}\int_{\RR/L\ZZ} f^2_j ds~\|B\|_{L^2}^2,\label{eq:helicity-of-Bn}\\
\|B_j\|_{L^2}^2 &\leq \left(\frac{b^2}{m^2L}\int_{\RR/L\ZZ}(f'_j)^2ds + \frac{1}{L}\int_{\RR/L\ZZ} f^2_j ds\right)\|B\|_{L^2}^2.\label{eq:energy-of-Bn}
\end{align}

Now, from the definition of $f_j$, $\supp A_j \subset \pi(U_{a,b,L})$,
and therefore $\supp B_j \subset  \pi(U_{a,b,L})$.    
Since $U_{a,b,L} \subset \cR$, we can then uniquely lift $A_j$ to $\hat{A}_j$ and $B_j$ to $\hat{B}_j$, vector fields on $\cR$ with
$\supp \hat{A}_j \subset U_{a,b,L}$ and $\supp \hat{B}_j \subset U_{a,b,L}$.    

By Equation \eqref{eq:Bn-is-admissible}, it follows that $\hat{B}_j$ satisfies on $\cR$ that
\begin{equation}\label{eq:Bn-hat-is-admissible}
\Div \hat{B}_j = 0, \quad \cL_K \hat{B}_j = i m \hat{B}_j, \quad \hat{B}_j \cdot n = 0.
\end{equation}
Moreover, we still have that $\curl \hat{A}_j = \hat{B}_j$. Therefore, 
\begin{equation}\label{eq:equality-of-hat-and-nonhat}
\begin{split}
\cH_\CC(\hat{B}_j) &= \cH_\CC(B_j),\\
\|\hat{B}_j\|_{L^2}^2 &= \|B_j\|_{L^2}^2. 
\end{split}
\end{equation}

It is clear from Equation \eqref{eq:helicity-of-Bn}, that $\cH(\hat{B}_j) > 0$. Moreover, because $\hat{B}_j$ satisfies Equation \eqref{eq:Bn-hat-is-admissible}, by Lemma \ref{lem:vector-pot-and-antisymm-lamda-ineq} we can write that,
\begin{equation*}
\frac{1}{\lambda^*_1(\cR)} \geq \frac{\cH_\CC(\hat{B}_j)}{\|\hat{B}_j\|_{L^2}^2} = \frac{\cH_\CC(B_j)}{\|B_j\|_{L^2}^2},
\end{equation*}
and hence, together with equations \eqref{eq:helicity-of-Bn} and \eqref{eq:energy-of-Bn}, we conclude that, for $j \in \NN$
\begin{equation*}
\lambda^*_1(\cR) \leq \frac{\|B_j\|_{L^2}^2}{\cH_\CC(B_j)} \leq \frac{\frac{b^2}{m^2L}\int_{\RR/L\ZZ}(f'_j)^2ds + \frac{1}{L}\int_{\RR/L\ZZ} f^2_j ds}{\frac{1}{\lambda} \frac{1}{L}\int_{\RR/L\ZZ} f^2_j ds}.
\end{equation*}
The lemma then follows taking the limit $j\to\infty$ of the right hand side with Equation \eqref{eq:fn-limits}
\begin{equation*}
\lambda^*_1(\cR) \leq \frac{\frac{b^2}{m^2}\frac{\pi^2}{L^2} + 1}{1/\lambda} \leq \left(\frac{\pi^2b^2}{L^2} + 1\right)\lambda = \left(\frac{\pi^2b^2}{L^2} + 1\right)\lambda^1_P(M_{a,b,L}).
\end{equation*}

\section{Proof of Lemma \ref{lem:existence-of-small-eigenvalue}}\label{sec:analysis-annulus-anti-sym}

For $0 < a < b$ and $L > 0$, we consider the Riemannian 3-manifold with boundary $M_{a,b,L}$ as in Section \ref{sec:complex-antisym-fields} and inherit the notation from there. The proof of Lemma \ref{lem:existence-of-small-eigenvalue} is done through analysing the appropriate eigenvalue equation. Specifically, from Lemma \ref{lem:eigen-equations} in Appendix \ref{app:flat-tubes}, given $\ell \in 2\pi/L \ZZ$ and $m \in \ZZ$, there exist nontrivial $B$ and $\lambda \neq 0$ with
\begin{align*}
B \in E_{\CC,\lambda,\ell,m}(M_{a,b,L}),
\end{align*}
if and only if $\lambda^2 > \ell^2$ and, setting $\mu \coloneqq \sqrt{\lambda^2-\ell^2}$, $\lambda$ satisfies
\begin{equation*}
    \begin{vmatrix}
    \frac{\lambda m}{a} J_m(\mu a) + \ell \mu J_m'(\mu a) &  \frac{\lambda m}{a} Y_m(\mu a) + \ell \mu Y_m'(\mu a)\\
    \frac{\lambda m}{b} J_m(\mu b) + \ell \mu J_m'(\mu b) & \frac{\lambda m}{b} Y_m(\mu b) + \ell \mu Y_m'(\mu b)
    \end{vmatrix} = 0,
\end{equation*}
which, thanks to Equation \eqref{eq:bessel-manipulation} in Section \ref{sec:analysis-disk-anti-sym}, is in turn equivalent to
\begin{equation}\label{eq:eigen-value-eq-annulus-reduced}
    \begin{vmatrix}
    (\lambda+\ell) J_{m-1}(\mu a) + (\lambda-\ell) J_{m+1}(\mu a) & (\lambda+\ell) Y_{m-1}(\mu a) + (\lambda-\ell) Y_{m+1}(\mu a)\\
    (\lambda+\ell) J_{m-1}(\mu b) + (\lambda-\ell) J_{m+1}(\mu b) & (\lambda+\ell) Y_{m-1}(\mu b) + (\lambda-\ell) Y_{m+1}(\mu b)
    \end{vmatrix} = 0.
\end{equation}

We are specifically interested in the case where $m=1$. For fixed $b > 0$ and for $0 < r < 1$, we define a function $g_{r} : (0,b) \to \RR$ as follows. Firstly, for $a \in (0,b)$, set
\begin{equation}\label{eq:ansatz}
\begin{gathered}
\lambda_a \coloneqq \frac{\pi}{b-a}, \qquad \ell_{r,a} \coloneqq r\lambda_a,\\
\mu_{r,a} \coloneqq \sqrt{\lambda_a^2-\ell_{r,a}^2} = \sqrt{1-r^2}\frac{\pi}{b-a}.
\end{gathered}
\end{equation}
Then,
\begin{align*}
&\begin{vmatrix}
    (\lambda_a+\ell_{r,a}) J_0(\mu_{r,a} a) + (\lambda_{a}-\ell_{r,a}) J_2(\mu_{r,a} a) & (\lambda_a+\ell_{r,a}) Y_0(\mu_{r,a} a) + (\lambda_{a}-\ell_{r,a}) Y_2(\mu_{r,a} a)\\
    (\lambda_a+\ell_{r,a}) J_0(\mu_{r,a} b) + (\lambda_{a}-\ell_{r,a}) J_2(\mu_{r,a} b) & (\lambda_a+\ell_{r,a}) Y_0(\mu_{r,a} b) + (\lambda_{a}-\ell_{r,a}) Y_2(\mu_{r,a} b)
\end{vmatrix}\\
&= 
\lambda_a^2
\begin{vmatrix}
    (1+r) J_0(\mu_{r,a} a) + (1-r) J_2(\mu_{r,a} a) & (1+r) Y_0(\mu_{r,a} a) + (1-r) Y_2(\mu_{r,a} a)\\
    (1+r) J_0(\mu_{r,a} b) + (1-r) J_2(\mu_{r,a} b) & (1+r) Y_0(\mu_{r,a} b) + (1-r) Y_2(\mu_{r,a} b)
\end{vmatrix}.
\end{align*}
Hence, we define $g_{r} : (0,b) \to \RR$ by
\begin{equation}\label{eq:g-def}
g_r(a) \coloneqq 
\begin{vmatrix}
    (1+r) J_0(\mu_{r,a} a) + (1-r) J_2(\mu_{r,a} a) & (1+r) Y_0(\mu_{r,a} a) + (1-r) Y_2(\mu_{r,a} a)\\
    (1+r) J_0(\mu_{r,a} b) + (1-r) J_2(\mu_{r,a} b) & (1+r) Y_0(\mu_{r,a} b) + (1-r) Y_2(\mu_{r,a} b)
\end{vmatrix}.
\end{equation}
We now establish two helpful lemmas to analyse $g_r$ and its zeros.

\begin{lemma}\label{lem:g-is-eventually-positive}
For any $0<r<1$, the function $g_{r}$ is eventually positive, in the sense that, there exists $a_1 \in (0,b)$ such that $g_{r}|_{(a_1,b)} > 0$.
\end{lemma}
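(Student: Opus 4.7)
The plan is to analyse $g_r(a)$ via the large-argument asymptotics of Bessel functions, exploiting the observation that the ``phase difference'' $\mu_{r,a}(b-a) = \pi \sqrt{1-r^2}$ is independent of $a$, while the individual arguments $\mu_{r,a}a$ and $\mu_{r,a}b$ both tend to infinity as $a \to b^-$. This is what forces the determinant $g_r(a)$ into a universal, strictly positive leading-order shape.

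First I would substitute the standard asymptotic expansions $J_m(x) = \sqrt{2/(\pi x)}\cos(x - m\pi/2 - \pi/4) + O(x^{-3/2})$ and $Y_m(x) = \sqrt{2/(\pi x)}\sin(x - m\pi/2 - \pi/4) + O(x^{-3/2})$ into the combinations appearing in the determinant. Since the phase shift between $m=0$ and $m=2$ is exactly $\pi$, the cosines (resp.\ sines) in $J_0$ and $J_2$ (resp.\ $Y_0$ and $Y_2$) are \emph{opposite} signs at leading order, producing
\begin{align*}
(1+r)J_0(x) + (1-r)J_2(x) &= 2r\sqrt{\tfrac{2}{\pi x}}\cos(x-\pi/4) + O(x^{-3/2}),\\
(1+r)Y_0(x) + (1-r)Y_2(x) &= 2r\sqrt{\tfrac{2}{\pi x}}\sin(x-\pi/4) + O(x^{-3/2}),
\end{align*}
so the $(1-r)$ terms partially cancel the $(1+r)$ terms but leave a nonzero principal part because $r>0$.

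Next I would write $\xi \coloneqq \mu_{r,a}a$, $\eta \coloneqq \mu_{r,a}b$, note that $\eta - \xi = \pi\sqrt{1-r^2}$ identically, and expand the $2\times 2$ determinant defining $g_r(a)$. A standard angle-subtraction computation then gives
\begin{equation*}
g_r(a) = \frac{8 r^2}{\pi \sqrt{\xi \eta}}\,\sin(\eta - \xi) + O\!\bigl(\xi^{-2}\bigr) = \frac{8 r^2}{\pi \sqrt{\xi \eta}}\,\sin\!\bigl(\pi\sqrt{1-r^2}\bigr) + O\!\bigl(\xi^{-2}\bigr).
\end{equation*}
Since $0<r<1$, one has $0 < \pi\sqrt{1-r^2} < \pi$, hence $\sin(\pi\sqrt{1-r^2}) > 0$; the leading term is therefore strictly positive and of order $\xi^{-1}$, dominating the $O(\xi^{-2})$ remainder. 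As $a \to b^-$, $\mu_{r,a}\to \infty$ and so $\xi,\eta\to\infty$, meaning there exists $a_1 \in (0,b)$ with $g_r(a) > 0$ for all $a \in (a_1,b)$, as required.

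The main (mild) obstacle is bookkeeping the error terms: one must check that each of the four off-diagonal products contributing to the remainder is genuinely $o(\xi^{-1}\eta^{-1/2})$ rather than comparable to the leading piece. This is routine from the uniform bounds on the Bessel remainders on compact intervals bounded away from $0$, but requires care because the ``$2r$'' prefactor could in principle be small — however $r$ is fixed in the statement, so the constants in the asymptotic are uniform in $a$ as $a \to b^-$, and no difficulty arises.
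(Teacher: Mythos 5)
Your proposal is correct and follows essentially the same route as the paper: insert the large-argument asymptotics of $J_m$ and $Y_m$, exploit the exact $\pi$ phase shift between orders $0$ and $2$ to get the $2r\sqrt{2/(\pi x)}$ principal parts, and reduce the determinant to $\frac{8r^2}{\pi\sqrt{\xi\eta}}\sin(\pi\sqrt{1-r^2})>0$ plus a lower-order remainder as $a\to b^-$. The error bookkeeping you flag is handled identically in the paper (which records the remainder as $\cO((b-a)^{3/2})$, slightly weaker than your $\cO(\xi^{-2})$ but equally sufficient), so there is nothing to add.
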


\begin{proof}
We will use well-known asymptotic expressions for Bessel functions with large argument. For fixed $\alpha \in \RR$, as $x \to +\infty$,
\begin{align*}
J_\alpha(x) &= \sqrt{\frac{2}{\pi x}}\cos\left(x-\frac{1}{2}\alpha \pi - \frac{1}{4}\pi\right) + \cO(x^{-3/2})\\
Y_\alpha(x) &= \sqrt{\frac{2}{\pi x}}\sin\left(x-\frac{1}{2}\alpha \pi - \frac{1}{4}\pi\right) + \cO(x^{-3/2}).
\end{align*}
Therefore,
\begin{align*}
(1+r) J_{0}(x) + (1-r) J_{2}(x)=2r\sqrt{\frac{2}{\pi x}}\cos\left(x - \frac{1}{4}\pi\right) + \cO(x^{-3/2}),
\end{align*}
and
\begin{align*}
(1+r) Y_{0}(x) + (1-r) Y_{2}(x)= 2r\sqrt{\frac{2}{\pi x}}\sin\left(x - \frac{1}{4}\pi\right) + \cO(x^{-3/2}).
\end{align*}
Hence, because as $a \to b^{-}$,
\begin{equation*}
\mu_{r,a} \to +\infty,
\end{equation*}
after straightforward computations we can write that
\begin{align*}
g_r(a) = \frac{8r^2(b-a)}{\pi^2\sqrt{1-r^2}\sqrt{ab}}\sin(\pi\sqrt{1-r^2})+\cO((b-a)^{3/2}).
\end{align*}
We see then that there exists the limit
\begin{equation*}
\lim_{a\to b^{-}} \frac{g_r(a)}{b-a} = \frac{8r^2}{\pi^2\sqrt{1-r^2}b}\sin(\pi\sqrt{1-r^2}), 
\end{equation*}
which is positive for $r\in(0,1)$, which proves the claim.
\end{proof}

\begin{lemma}\label{lem:g-is-initially-negative}
There exists $0<r<1$ such that the function $g_{r}$ is initially negative, in the sense that, there exists $a_0 \in (0,b)$ such that $g_{r}|_{(0,a_0)} < 0$.
\end{lemma}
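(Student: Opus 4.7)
The plan is to carry out a small-argument asymptotic analysis of the determinant $g_r(a)$ as $a \to 0^+$, in analogy with the large-argument analysis performed for Lemma~\ref{lem:g-is-eventually-positive}. As $a \to 0^+$, one has $\mu_{r,a} a \to 0$ while $\mu_{r,a} b \to u_0 \coloneqq \sqrt{1-r^2}\,\pi$, which is a strictly positive constant lying in $(0,\pi)$. Using the standard near-zero Bessel asymptotics $J_0(x)\to 1$, $J_2(x)=O(x^2)$, $Y_0(x)\sim (2/\pi)\log(x/2)$, and $Y_n(x)\sim -\frac{(n-1)!}{\pi}(2/x)^n$ for $n\ge 1$, the only entry of the matrix defining $g_r(a)$ in~\eqref{eq:g-def} that diverges is the top-right one: its dominant term is $(1-r)Y_2(\mu_{r,a} a) \sim -\frac{4(1-r)}{\pi\,\mu_{r,a}^2 a^2}$. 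All other entries remain bounded, and in particular the bottom-left entry tends to $\phi(r)\coloneqq (1+r)J_0(u_0)+(1-r)J_2(u_0)$.

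Expanding the determinant, this would give the asymptotic
\begin{equation*}
g_r(a) \;=\; \frac{4(1-r)}{\pi\,\mu_{r,a}^{2}\,a^{2}}\,\phi(r)\;+\;o\!\left(\tfrac{1}{a^2}\right)\qquad\text{as }a\to 0^+,
\end{equation*}
so $g_r$ is negative on some interval $(0,a_0)$ precisely when $\phi(r)<0$. The lemma thus reduces to exhibiting one value $r\in(0,1)$ with $\phi(r)<0$. Using the Bessel recurrence $J_0(x)+J_2(x)=(2/x)J_1(x)$, I would rewrite
\begin{equation*}
\phi(r) \;=\; 2r\,J_0(u_0)\;+\;\frac{2(1-r)}{u_0}\,J_1(u_0),
\end{equation*}
and specialise to $r=1/2$, for which $u_0=\tfrac{\sqrt 3}{2}\pi\approx 2.72$ lies strictly between the first positive zeros $j_{0,1}\approx 2.405$ of $J_0$ and $j_{1,1}\approx 3.83$ of $J_1$. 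Thus $J_0(u_0)<0$ while $J_1(u_0)>0$, so the sign of $\phi(1/2)$ is governed by whether $u_0\,|J_0(u_0)|>J_1(u_0)$.

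The remaining step is the explicit verification of this scalar inequality at $u_0=\tfrac{\sqrt 3}{2}\pi$. I would establish it by truncating the power series of $J_0$ and $J_1$ to sufficiently many terms with effectively bounded remainders, or by standard monotonicity/interpolation estimates between consecutive Bessel zeros. This is the main obstacle: since the admissible range $u_0\in(0,\pi)$ sits only just past $j_{0,1}$, one cannot conclude negativity of $\phi(r)$ from a qualitative sign argument alone, and a quantitative estimate is needed. Once $\phi(1/2)<0$ is secured, the asymptotic formula above furnishes the required $a_0\in(0,b)$ on which $g_r<0$.
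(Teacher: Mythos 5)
Your small-argument asymptotic analysis is sound and is essentially the paper's argument: the only divergent entry of the matrix in \eqref{eq:g-def} is the top-right one (driven by $Y_2(\mu_{r,a}a)\to-\infty$), so the sign of $g_r(a)$ for small $a$ is governed by the sign of $\phi(r)=(1+r)J_0(u_0)+(1-r)J_2(u_0)$ with $u_0=\sqrt{1-r^2}\,\pi$, and the lemma reduces to exhibiting one $r\in(0,1)$ with $\phi(r)<0$. Your rewriting $\phi(r)=2rJ_0(u_0)+\tfrac{2(1-r)}{u_0}J_1(u_0)$ via the recurrence is also correct.

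However, the specific choice $r=1/2$ fails, and this is a genuine gap: the scalar inequality you defer to a quantitative check is false at $u_0=\tfrac{\sqrt3}{2}\pi\approx 2.7207$. There one has $J_0(u_0)\approx-0.1515$ and $J_1(u_0)\approx 0.4353$, so $u_0\,|J_0(u_0)|\approx 0.412 < 0.435\approx J_1(u_0)$, i.e.\ $\phi(1/2)\approx +0.0085>0$. The subtlety is that $\phi$ is positive at both endpoints ($\phi\to 2$ as $r\to1^-$ and $\phi\to \tfrac{2}{\pi}J_1(\pi)>0$ as $r\to0^+$) and dips below zero only on a narrow window, roughly $u_0\in(2.75,2.95)$; your $u_0\approx2.721$ sits just outside it, barely past $j_{0,1}$. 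The paper instead takes $u_0=s=\tfrac{287}{100}$ (so $r_0=\sqrt{1-s^2/\pi^2}\approx 0.407$), for which $\phi(r_0)\approx-0.014<0$, and certifies this rigorously by truncating the power series of $J_0$ and $J_2$ at order $M=5$ with an explicit remainder bound and exact rational arithmetic (Appendix~\ref{app:verification-taylor}). To repair your proof you must replace $r=1/2$ by a value of $r$ with $u_0$ inside the negativity window and then carry out the quantitative verification you sketched; as written, the verification step would refute rather than confirm your claim.
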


\begin{proof}
We let $0<r<1$ be arbitrary for now. Recall that $g$ is given by
\begin{align*}
g_r(a) =
\begin{vmatrix}
    g_{r,11}(a) & g_{r,12}(a)\\
    g_{r,21}(a) & g_{r,22}(a)
\end{vmatrix},
\end{align*}
with $g_{r,ij}(a)$ denoting entry $(i,j)$ of the matrix whose determinant defines $g_r(a)$ in Equation \eqref{eq:g-def}. Note that, as $a \to 0^+$,
\begin{align*}
g_{r,11}(a) &\to (1+r) J_0(0) + (1-r) J_2(0) = 1+r,\\
g_{r,21}(a) &\to (1+r) J_0(\sqrt{1-r^2}\pi)+(1-r) J_2(\sqrt{1-r^2}\pi),\\
g_{r,22}(a) &\to (1+r) Y_0(\sqrt{1-r^2}\pi) + (1-r) Y_2(\sqrt{1-r^2}\pi). 
\end{align*}
Recalling that for $m\geq 0$, $Y_m(x) \to -\infty$ as $x \to 0^+$, we deduce that as $a \to 0^+$,
\begin{equation*}
g_{r,12}(a) \to -\infty.
\end{equation*}
Hence, if $r_0\in (0,1)$ satisfies that
\begin{equation}\label{eq:r0}
(1+r_0) J_0(\sqrt{1-r_0^2}\pi)+(1-r_0) J_2(\sqrt{1-r_0^2}\pi) < 0,    
\end{equation}
then, as $a \to 0^+$,
\begin{equation*}
g_{r_0}(a) = g_{r_0,11}(a)g_{r_0,22}(a) - g_{r_0,12}(a)g_{r_0,21}(a) \to -\infty, 
\end{equation*}
so that, for such $r_0$, $g_{r_0}$ is initially negative. We prove in Appendix \ref{app:verification-taylor} with Taylor series that if one sets $r_0 = \sqrt{1-\frac{s^2}{\pi^2}}\in (0,1)$ with $s=\frac{287}{100}$,
then \eqref{eq:r0} is satisfied, and the lemma follows.
\end{proof}

We now continue with the Lemma \ref{lem:existence-of-small-eigenvalue}. By lemmas \ref{lem:g-is-eventually-positive}, \ref{lem:g-is-initially-negative}, we may fix $0 < r < 1$ and $0 < a < b$ such that
\begin{equation*}
g_r(a) = 0.    
\end{equation*}
Now, set $\lambda = \lambda_a = \pi/(b-a)$ and $\ell = \ell_{r,a}$ as they are defined in Equation \eqref{eq:ansatz}, and
\begin{equation*}
L = \frac{2\pi}{\ell} = \frac{2(b-a)}{r}.   
\end{equation*}
For $n \in \NN$, set $L_n = n L$. Then, by construction, 
$\ell = \frac{2\pi n}{L_n}\in (2\pi/L_n)\ZZ$, and by Equation~\eqref{eq:ansatz}, we also have that $\ell^2< \lambda^2$. Moreover, $g_r(a) = 0$ implies that Equation \eqref{eq:eigen-value-eq-annulus-reduced} is satisfied by $\lambda$, $\ell$ and $\mu = \sqrt{\lambda^2-\ell^2}$ when $m = 1$. Therefore, for $n \in \NN$,
$E_{\CC,\lambda,\ell,1}(M_{a,b,L_n}) \neq \{0\}$, which shows that $\pi/(b-a)$ is a $P$-antisymmetric curl eigenvalue on $M_{a,b,L_n}$. Lemma~\ref{lem:existence-of-small-eigenvalue} follows.

\appendix

\section{Proof of Lemma \ref{lem:basic-G-S-ineq}}\label{app:Grad-Shafranov}

Let $U \subset \RR^2$ be a bounded domain with $\overline{U} \subset \RR^+ \times \RR$. Here we consider the (Dirichlet) eigenvalues of the Grad-Shafranov operator
\begin{equation*}
L = r \Div(1/r \nabla) = \Delta - \nabla \ln(r) \cdot \nabla,     
\end{equation*}
with $(r,z) \in U$ the usual Cartesian coordinates. For $u,v \in C^\infty_c(U)$,
\begin{equation*}
-(L u,v)_{L^2,r} = B_r[u,v],     
\end{equation*}
where $(\cdot,\cdot)_{L^2,r}$ is the weighted inner product  given by
\begin{equation*}
(u,v)_{L^2,r} = \int_U uv/r~drdz,   
\end{equation*}
and $B_r : H^1_0(U) \times H^1_0(U) \to \RR$ is the bilinear form given by
\begin{equation*}
B_r[u,v] = \int_U (\nabla u \cdot \nabla v)/r~drdz.
\end{equation*}
The eigenfunctions of $L$, that is, the functions $u \in H^1_0(U)$ satisfying (weakly) that
\begin{equation*}
L u = -\lambda u \text { on } U,
\end{equation*}
can be used to form an $(\cdot,\cdot)_{L^2,r}$-orthonormal basis of $L^2(U)$ and the first eigenvalue $\lambda^{\text{G-S}}_1(U)$ satisfies that
\begin{equation}\label{eq:min-char-of-eigenvalue}
\lambda^{\text{G-S}}_1(U) = \min_{0 \neq u \in H^1_0(U)} \frac{B_r[u,u]}{\|u\|^2_{L^2,r}} = \inf_{0 \neq u \in C^\infty_c(U)} \frac{B_r[u,u]}{\|u\|^2_{L^2,r}}.
\end{equation}
This makes it obvious that domain monotonicity is satisfied: if $U' \subset \RR^2$ satisfies $\overline{U'} \subset \RR^+ \times \RR$ and $U \subset U'$, then $\lambda^{\text{G-S}}_1(U) \geq \lambda^{\text{G-S}}_1(U')$.

To prove the second inequality in Lemma \ref{lem:basic-G-S-ineq}, note that for any $u \in C^\infty_c(U)$, setting $v_u = r^{-1/2}u \in C^\infty_c(U)$, we have that $\|v_u\|_{L^2} = \|u\|_{L^2,r}$
and after a straightforward computation, we get
\begin{align*}
\int_U |\nabla v_u|^2~drdz = -\int_U f u^2/r~drdz + B_r[u,u],
\end{align*}
with $f(r) \coloneqq \frac{3}{4r^2}$ and therefore
\begin{align*}
\frac{\int_U |\nabla v_u|^2~drdz}{\|v_u\|_{L^2}^2} + \inf_{(r,z) \in U}f(r) \leq \frac{B_r[u,u]}{\|u\|_{L^2,r}^2} \leq \frac{\int_U |\nabla v_u|^2~drdz}{\|v_u\|_{L^2}^2} + \sup_{(r,z) \in U}f(r).
\end{align*}
Taking the infimum over $0 \neq u \in C^\infty_c(U)$ (with Equation \eqref{eq:min-char-of-eigenvalue} in mind), and using the fact that $u \mapsto v_u$ is a linear isomorphism $C^\infty_c(U) \to C^\infty_c(U)$, we get that
\begin{align*}
\lambda^D_1(U) + \inf_{(r,z) \in U}f(r) \leq \lambda^{\text{G-S}}_1(U) \leq \lambda^D_1(U) + \sup_{(r,z) \in U}f(r),
\end{align*}
which proves the lemma.

\section{Functional analysis of the antisymmetric curl spectrum}\label{app:func-analysis-antisym-spec}

Let $M$ be a compact orientable Riemannian 3-manifold with boundary. We first mention some useful spectral theory for the curl in this setting. To do this, define the spaces of vector fields
\begin{align*}
\cB(M) &= \{B \in \cX(M) : \Div B = 0, \, B \cdot n = 0\},\\
\cH_N(M) &= \{B \in \cB(M) : \curl B = 0\},\\
\cB_{\text{FF}}(M) &= \{B \in \cB(M) : (B,\nu)_{L^2} = 0 \text{ for all } \nu \in \cH_N(M)\},
\end{align*}
where $n$ denotes the outward unit normal on $M$. For $\lambda \in \RR \setminus \{0\}$, we also define the space
\begin{equation}\label{eq:def-curl-eigenspace}
E_\lambda(M) =  \{B \in \cX(M) : \curl B = \lambda B,\, B \cdot n = 0\}.
\end{equation}
As is standard, $\cH_N(M)$ has dimension $b^1$, the first Betti number of $M$, and $E_\lambda(M)$ is finite dimensional for $\lambda \in \RR \setminus \{0\}$. Additionally, the following is well-known to experts.

\begin{theorem}\label{thm:flux-free-spectral}
For each $B \in \cB_{\text{FF}}(M)$ there exists a unique $A \in \cB_{\text{FF}}(M)$ such that
\begin{equation*}
\curl A = B.    
\end{equation*}
We let $\curl^{-1} : \cB_{\text{FF}}(M) \to \cB_{\text{FF}}(M)$ denote the corresponding curl inverse operator. The operator $\curl^{-1}$ extends to a compact self-adjoint operator
\begin{equation*}
\curl^{-1} : L^2\cB_{\text{FF}}(M) \to L^2\cB_{\text{FF}}(M)
\end{equation*}
on the $L^2$-completion $L^2\cB_{\text{FF}}(M)$ of $\cB_{\text{FF}}(M)$. The distinct eigenvalues of $\curl^{-1}$ can be written as
\begin{equation*}
\mu_{-1}(M) < \mu_{-2}(M) < \cdots < 0 < \cdots < \mu_2(M) < \mu_1(M), 
\end{equation*}
with $\mu_n(M) \to 0$ as $n \to \pm \infty$. For any $\lambda \in \RR$ and any $B \in L^2\cB_{\text{FF}}(M)$, we have that $B \in \cB_{\text{FF}}(M)$ and
\begin{equation}\label{eq:flux-free-curl-eq}
\curl B = \lambda B,    
\end{equation}
if and only if $\lambda \neq 0$ and
\begin{equation*}
\curl^{-1} B = \frac{1}{\lambda} B.    
\end{equation*}
In particular, the distinct \emph{flux-free curl eigenvalues}, that is, the $\lambda \in \RR$ for which Equation \eqref{eq:flux-free-curl-eq} has a non-trivial solution $B \in \cB_{\text{FF}}(M)$, can be written as
\begin{equation*}
\cdots < \lambda_{-2}^{\text{FF}}(M) < \lambda_{-1}^{\text{FF}}(M) < 0 < \lambda_1^{\text{FF}}(M) < \lambda_2^{\text{FF}}(M) < \cdots, 
\end{equation*}
with $\lambda_n^{\text{FF}}(M) \to \pm \infty$ as $n \to \pm \infty$.
\end{theorem}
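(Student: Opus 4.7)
The plan is to construct $\curl^{-1}$ on $\cB_{\text{FF}}(M)$ via Hodge theory, extend it to $L^2\cB_{\text{FF}}(M)$ as a compact self-adjoint operator, apply the spectral theorem for such operators, and then read off the flux-free curl eigenvalues by elementary algebra.

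First I would prove the existence and uniqueness of the vector potential $A$. Uniqueness is immediate: if $\curl(A_1 - A_2) = 0$ with $A_1, A_2 \in \cB_{\text{FF}}(M)$, then $A_1 - A_2$ lies in the kernel of $\curl$ restricted to $\cB(M)$, which is precisely $\cH_N(M)$, whence $A_1 - A_2 \in \cH_N(M) \cap \cB_{\text{FF}}(M) = \{0\}$. For existence I would invoke the Hodge decomposition for compact oriented Riemannian $3$-manifolds with boundary, which identifies $\cH_N(M)$ as the $L^2$-orthogonal complement of $\curl(\cB(M))$ inside $\cB(M)$; the hypothesis $B \perp \cH_N(M)$ then delivers some $A_0 \in \cB(M)$ with $\curl A_0 = B$, and subtracting off its harmonic Neumann component produces the desired $A \in \cB_{\text{FF}}(M)$.

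Second I would establish symmetry and compactness of $\curl^{-1}$. Symmetry follows from an integration by parts: for $A, A' \in \cB(M)$,
\begin{equation*}
(\curl A, A')_{L^2} - (A, \curl A')_{L^2} = \int_{\partial M}(A \times A') \cdot n\, dS = 0,
\end{equation*}
the boundary term vanishing because both $A$ and $A'$ are tangent to $\partial M$; applying this to $A = \curl^{-1} B$ and $A' = \curl^{-1} B'$ yields self-adjointness. For compactness I would invoke the classical div-curl elliptic estimate: there is a constant $C > 0$ such that every $A$ satisfying $A \cdot n = 0$ on $\partial M$ and $A \perp \cH_N(M)$ obeys $\|A\|_{H^1} \leq C(\|\curl A\|_{L^2} + \|\Div A\|_{L^2})$. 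Specialised to $A = \curl^{-1} B$ (where $\Div A = 0$) this gives $\|\curl^{-1} B\|_{H^1} \leq C \|B\|_{L^2}$, and the Rellich–Kondrachov compact embedding $H^1 \hookrightarrow L^2$ then shows $\curl^{-1}$ extends to a compact self-adjoint operator on $L^2\cB_{\text{FF}}(M)$.

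Third I would apply the spectral theorem for compact self-adjoint operators on the separable Hilbert space $L^2\cB_{\text{FF}}(M)$: this produces a sequence of real eigenvalues accumulating only at $0$, which sorted by sign and magnitude gives the listed enumeration $\mu_{\pm n}(M)$. The correspondence with $\curl$ eigenvalues is then algebraic: if $0 \neq B \in \cB_{\text{FF}}(M)$ satisfies $\curl B = \lambda B$, then $\lambda \neq 0$ (else $B \in \cH_N(M) \cap \cB_{\text{FF}}(M) = \{0\}$) and $\curl^{-1} B = B/\lambda$; conversely, if $B \in L^2\cB_{\text{FF}}(M)$ satisfies $\curl^{-1} B = B/\lambda$ with $\lambda \neq 0$, then $B = \lambda\curl^{-1} B \in \cB_{\text{FF}}(M)$ and taking $\curl$ of both sides returns $\curl B = \lambda B$.

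The main technical obstacle is the div-curl elliptic estimate that underpins compactness. The side conditions $A \cdot n|_{\partial M} = 0$ together with the orthogonality $A \perp \cH_N(M)$ are precisely what is needed to turn the overdetermined system $(\curl A, \Div A)$ into an elliptic boundary value problem with a full $H^1$ gain, and this is the one step that truly uses the Riemannian-manifold-with-boundary structure beyond the Hodge decomposition. Everything else — the integration by parts, the uniqueness argument, and the spectral theorem — is formal once this estimate is granted.
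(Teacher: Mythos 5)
The paper does not actually prove this theorem: it is stated in Appendix~\ref{app:func-analysis-antisym-spec} with the remark that it is ``well-known to experts'' (the relevant references being Giga--Yoshida \cite{GY90} and the Hodge-theoretic treatment of Cantarella--DeTurck--Gluck), so there is no in-paper argument to compare yours against. Your outline is the standard route and is essentially correct: uniqueness via $\cH_N(M)\cap\cB_{\text{FF}}(M)=\{0\}$, existence of the potential via the Hodge--Morrey--Friedrichs decomposition, symmetry via the vanishing of the boundary term $\int_{\partial M}(A\times A')\cdot n$, compactness via the Gaffney/div-curl estimate plus Rellich, and the algebraic dictionary between eigenvalues of $\curl^{-1}$ and of $\curl$. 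Two points deserve more care. First, the estimate $\|A\|_{H^1}\leq C(\|\curl A\|_{L^2}+\|\Div A\|_{L^2})$ holds for $A\cdot n=0$ only after removing the kernel $\cH_N(M)$; the usual Friedrichs inequality carries an extra $\|A\|_{L^2}$ term, and dropping it requires a compactness--contradiction argument using $A\perp\cH_N(M)$ — worth saying explicitly since it is exactly where the finite-dimensionality of $\cH_N(M)$ enters. Second, and more substantively, the spectral theorem for a compact self-adjoint injective operator only gives a sequence of real eigenvalues accumulating at $0$; it does not by itself yield the two-sided enumeration $\mu_{-1}<\mu_{-2}<\cdots<0<\cdots<\mu_2<\mu_1$ with infinitely many eigenvalues of \emph{each} sign, which is what the statement asserts (compare the inverse Dirichlet Laplacian, which is compact, self-adjoint, injective, and has no negative spectrum). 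To close this you need an additional argument, e.g.\ exhibiting subspaces of $\cB_{\text{FF}}(M)$ of arbitrarily large dimension on which the helicity form $(B,\curl^{-1}B)_{L^2}$ is positive definite, and likewise negative definite (supported Beltrami-like test fields of either chirality do the job). As written, your proposal silently assumes this bilateral infinitude.
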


We recall here for later use that helicity on $L^2\cB_{\text{FF}}(M)$ is the continuous functional $\cH : L^2\cB_{\text{FF}}(M) \to \RR$ given by
\begin{equation*}
\cH(B) = (B,\curl^{-1}B)_{L^2}.
\end{equation*}
An important property of $\cH$ is that, for any $B \in L^2\cB_{\text{FF}}(M)$ and any $A \in H^1\cX(M)$ with $\curl A = B$, one has that
\begin{equation*}
\cH(B) = (B,A)_{L^2}.    
\end{equation*}

Assume henceforth that $M$ admits a Killing field $K$ tangent to $\partial M$. The following lemma is elementary:

\begin{lemma}\label{lem:antisym-of-K}
The Lie derivative $\cL_K : \cX(M) \to \cX(M)$ is antisymmetric in the sense that
\begin{equation*}
(\cL_K X,Y)_{L^2} = -(X,\cL_K Y)_{L^2}
\end{equation*}
for $X,Y \in \cX(M)$. 
\end{lemma}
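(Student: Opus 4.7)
The plan is to exploit two defining properties of $K$: that it is Killing (so $\cL_K g = 0$, and consequently $\Div K = 0$), and that it is tangent to $\partial M$ (so $K \cdot n = 0$ on the boundary). Together with the Leibniz/product rule for the Lie derivative these will let us turn the symmetrized expression $g(\cL_K X, Y) + g(X, \cL_K Y)$ into a divergence whose integral vanishes by Stokes' theorem.

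First I would record the pointwise identity
\begin{equation*}
g(\cL_K X, Y) + g(X, \cL_K Y) = \cL_K\bigl(g(X,Y)\bigr) - (\cL_K g)(X,Y).
\end{equation*}
Since $K$ is Killing, the last term vanishes, and since $g(X,Y)$ is a scalar function, $\cL_K(g(X,Y)) = K\cdot g(X,Y)$. Next I would rewrite this as a divergence: because $\Div K = 0$ (a standard consequence of $\cL_K g = 0$, together with $\cL_K \mu = (\Div K)\mu = 0$ on an oriented Riemannian manifold), we have
\begin{equation*}
K \cdot g(X,Y) = \Div\bigl(g(X,Y)\,K\bigr) - g(X,Y)\,\Div K = \Div\bigl(g(X,Y)\,K\bigr).
\end{equation*}

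Finally I would integrate over $M$ and apply the divergence theorem:
\begin{equation*}
(\cL_K X, Y)_{L^2} + (X, \cL_K Y)_{L^2} = \int_M \Div\bigl(g(X,Y)\,K\bigr)\,\mu = \int_{\partial M} g(X,Y)\,(K\cdot n)\,dS = 0,
\end{equation*}
because $K$ is tangent to $\partial M$ by assumption. This yields the claimed antisymmetry. There is no real obstacle here; the only thing to be careful about is verifying that the Killing hypothesis indeed delivers both $(\cL_K g)(X,Y) = 0$ and $\Div K = 0$, and that the boundary tangency cancels the Stokes boundary term, all of which are standard.
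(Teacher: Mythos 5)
Your proof is correct and is precisely the standard argument the paper has in mind (it states the lemma without proof as ``elementary''): the Leibniz rule for $\cL_K$ applied to $g(X,Y)$, the Killing condition to kill $(\cL_K g)(X,Y)$ and to give $\Div K=0$, and the divergence theorem with $K\cdot n=0$ to eliminate the boundary term. No gaps.
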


We can now prove Proposition \ref{prop:decomp}  from Section \ref{sec:intro-antisym-spec}.

\begin{proof}[Proof of Proposition \ref{prop:decomp}]
Because $K$ is a Killing field tangent to $\partial M$, we have that $\cL_K(E_\lambda(M)) \subset E_{\lambda}(M)$. Thus, by Lemma \ref{lem:antisym-of-K} $\cL_K$ descends to an antisymmetric linear operator $L$ on the finite dimensional vector space $E_{\lambda}(M)$ (equipped with the $L^2$ inner product). The antisymmetry of $L : E_{\lambda}(M) \to E_{\lambda}(M)$ means that $L^2 = L \circ L : E_{\lambda}(M) \to E_{\lambda}(M)$ is a symmetric operator. Hence, from the Spectral Theorem, we obtain 
\begin{equation*}
E_{\lambda}(M) = \bigoplus_{i=1}^{n'} A_i
\end{equation*}
for some $n' \in \NN$, where $A_i$ are the distinct (mutually orthogonal) eigenspaces of $L^2$. For any $0 \neq X \in \cX(M)$ and $\gamma \in \RR$ such that
\begin{equation*}
\cL_K^2 X = \gamma X,    
\end{equation*}
we have by Lemma \ref{lem:antisym-of-K} that
\begin{equation*}
\gamma(X,X)_{L^2} = (\cL_K^2 X,X)_{L^2} = -(\cL_K X,\cL_K X)_{L^2}  
\end{equation*}
which implies that, if $\gamma \neq 0$, then $\gamma < 0$, and if $\gamma = 0$, then $\cL_K X = 0$. Thus, for each $i \in \{1,...,m\}$, either
\begin{equation*}
A_i = \ker L = E_\lambda(M) \cap (\ker \cL_K),    
\end{equation*}
or
\begin{equation*}
A_i = E_{\lambda,K,k}(M),   
\end{equation*}
for some $k > 0$. Hence, we are done.
\end{proof}

The final step towards developing the functional analysis for $K$-antisymmetric curl is the following.

\begin{lemma}\label{lem:compatibility-of-antisym}
We have the inclusions
\begin{align}
\cB_K(M) &\subset \cB_{\text{FF}}(M), \label{eq:anti-symm-implies-FF}\\
\curl^{-1}(\cB_K(M)) &\subset \cB_K(M),\label{eq:curl-inv-respects-antisym}
\end{align}
where $\curl^{-1} : \cB_{\text{FF}}(M) \to \cB_{\text{FF}}(M)$ denotes the inverse curl operator on $\cB_{\text{FF}}(M)$. 
\end{lemma}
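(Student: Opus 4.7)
The plan is to reduce both inclusions to a single key fact: $\cL_K|_{\cH_N(M)} = 0$. To establish this, given $\nu \in \cH_N(M)$, I would pass to the dual $1$-form $\nu^\flat$ and apply Cartan's formula. Since $\curl \nu = 0$ is equivalent in $3$D to $d\nu^\flat = 0$, Cartan's formula collapses to $\cL_K \nu^\flat = d(i_K \nu^\flat) = df$ with $f := g(K,\nu) \in C^\infty(M)$, so $\cL_K \nu = \nabla f$. On the other hand, because $K$ is a Killing field tangent to $\partial M$, its flow consists of isometries that preserve $\partial M$; hence $\cL_K$ commutes with $\curl$ and $\Div$ and preserves the tangency condition $B\cdot n = 0$. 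In particular $\cL_K \nu \in \cH_N(M)$, which forces $\nabla f$ to be a harmonic Neumann field, i.e.\ $\Delta f = 0$ in $M$ and $\partial_n f = 0$ on $\partial M$. A single integration by parts then gives
\[
\|\nabla f\|_{L^2}^2 = \int_{\partial M} f\, \partial_n f - \int_M f\, \Delta f = 0,
\]
so $\cL_K \nu = \nabla f = 0$.

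With this in hand, \eqref{eq:anti-symm-implies-FF} is immediate from Lemma \ref{lem:antisym-of-K}: given $B = \cL_K C$ with $C \in \cB(M)$, the commutation and tangency remarks above show $B \in \cB(M)$, and for any $\nu \in \cH_N(M)$ one has $(B,\nu)_{L^2} = -(C,\cL_K\nu)_{L^2} = 0$, so $B \in \cB_{\text{FF}}(M)$.

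For \eqref{eq:curl-inv-respects-antisym}, let $B = \cL_K C \in \cB_K(M)$; by \eqref{eq:anti-symm-implies-FF} and Theorem \ref{thm:flux-free-spectral}, $\curl^{-1} B$ is well-defined. Using the $L^2$-orthogonal decomposition $\cB(M) = \cB_{\text{FF}}(M) \oplus \cH_N(M)$ (which holds since $\cH_N(M)$ is finite-dimensional), write $C = C_{\text{FF}} + \nu_C$; since $\cL_K \nu_C = 0$, I may replace $C$ by $C_{\text{FF}} \in \cB_{\text{FF}}(M)$. Set $A' := \curl^{-1} C \in \cB_{\text{FF}}(M)$ and consider $\cL_K A'$. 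Then $\curl(\cL_K A') = \cL_K \curl A' = \cL_K C = B$; moreover $\cL_K A' \in \cB(M)$ by the same commutation argument, and the antisymmetry of $\cL_K$ combined with $\cL_K|_{\cH_N(M)} = 0$ shows $(\cL_K A', \nu)_{L^2} = -(A', \cL_K \nu)_{L^2} = 0$ for all $\nu \in \cH_N(M)$, so $\cL_K A' \in \cB_{\text{FF}}(M)$. The uniqueness part of Theorem \ref{thm:flux-free-spectral} then yields $\curl^{-1} B = \cL_K A' \in \cB_K(M)$.

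The only mildly subtle point is the key fact $\cL_K|_{\cH_N(M)} = 0$: a naive appeal to the antisymmetry of $\cL_K$ alone only shows it restricts to an antisymmetric endomorphism of the finite-dimensional real space $\cH_N(M)$, which need not vanish in general (consider, e.g., rotation on $\RR^2$). Cartan's formula combined with the Neumann boundary condition inherited by $f$ provides the additional rigidity required.
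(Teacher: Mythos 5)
Your proof is correct and follows essentially the same route as the paper: both reduce everything to the fact that $\cL_K$ annihilates $\cH_N(M)$, deduce \eqref{eq:anti-symm-implies-FF} from the antisymmetry of $\cL_K$, and obtain \eqref{eq:curl-inv-respects-antisym} by replacing the preimage with its flux-free part, applying $\cL_K$ to its vector potential, and invoking uniqueness in Theorem \ref{thm:flux-free-spectral}. The only difference is that the paper simply cites $\cL_K|_{\cH_N(M)}=0$ as well known, whereas you supply a clean self-contained proof via Cartan's formula and integration by parts --- and you are right that naive antisymmetry alone would not suffice here.
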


\begin{proof}
We will use repeatedly the well-known fact that for any Neumann harmonic vector field $\nu \in \cH_N(M)$, $\cL_K \nu = 0$. Then, Equation~\eqref{eq:anti-symm-implies-FF} follows immediately from Lemma \ref{lem:antisym-of-K}.

Concerning \eqref{eq:curl-inv-respects-antisym}, if $B \in \cB_K(M)$, then write $B = \cL_K \tilde{B}$ for some $\tilde{B} \in \cB(M)$. We may assume $\tilde{B} \in \cB_{\text{FF}}(M)$ from the first paragraph together with the fact that $\cH_N(M)$ is finite dimensional. Setting $\tilde{A} = \curl^{-1} \tilde{B}\in \cB_{\text{FF}}(M) \subset \cB(M)$, we have $\curl \tilde{A} = \tilde{B}$. Considering $A = \cL_K \tilde{A} \in \cB_K(M)$, because $K$ is Killing,
$\curl A = B$,
and hence, by~\eqref{eq:anti-symm-implies-FF}, $A = \curl^{-1} B$, and the lemma follows.
\end{proof}

Finally, we state an immediate corollary of Lemma \ref{lem:compatibility-of-antisym}, Proposition \ref{prop:decomp}, and Theorem \ref{thm:flux-free-spectral}:

\begin{corollary}\label{cor:antisym-spectral}
Let $L^2\cB_K(M)$ denote the $L^2$-completion of $\cB_K(M)$. Then, one has the natural inclusion
\begin{equation*}
L^2\cB_K(M) \subset L^2\cB_{\text{FF}}(M),    
\end{equation*}
with respect to which $L^2\cB_K(M)$ is a closed vector subspace of $L^2\cB_{\text{FF}}(M)$. Moreover, $\curl^{-1}$ descends to a compact self-adjoint operator
\begin{equation*}
\curl^{-1}_K : L^2\cB_K(M) \to L^2\cB_K(M).
\end{equation*}
Let $\lambda \in \RR$. Then for any  $B \in L^2\cB_K(M)$, we have that $B \in \cB_K(M)$ and
\begin{equation*}
\curl B = \lambda B,    
\end{equation*}
if and only if $\lambda \neq 0$ and
\begin{equation*}
\curl^{-1}_K B = \frac{1}{\lambda} B.    
\end{equation*}
In particular, the distinct $K$-antisymmetric curl eigenvalues can be written as
\begin{equation*}
\cdots < \lambda_{-2}^K(M) < \lambda_{-1}^K(M) < 0 < \lambda_1^K(M) < \lambda_2^K(M) < \cdots, 
\end{equation*}
with $\lambda_n^K(M) \to \pm \infty$ as $n \to \pm \infty$ and, for any $  B \in L^2\cB_K(M)$, we have that
\begin{equation*}
\frac{1}{\lambda^K_{-1}(M)} \leq \frac{\cH(B)}{\|B\|_{L^2}^2} \leq \frac{1}{\lambda^K_1(M)},    
\end{equation*}
with the lower bound being saturated if and only if $B \in E_{\lambda^K_{-1}}(M) \cap \cB_K(M)$, and the upper bound being saturated if and only if $B \in E_{\lambda^K_1}(M) \cap \cB_K(M)$.
\end{corollary}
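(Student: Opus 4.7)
The plan is to deduce this corollary from Theorem \ref{thm:flux-free-spectral} by restricting the compact self-adjoint operator $\curl^{-1}$ on $L^2\cB_{\text{FF}}(M)$ to the closed invariant subspace $L^2\cB_K(M)$. First I would verify the inclusion: by Lemma \ref{lem:compatibility-of-antisym} we have $\cB_K(M) \subset \cB_{\text{FF}}(M)$, so (interpreting $L^2\cB_K(M)$ as the $L^2$-closure of $\cB_K(M)$ in $L^2\cB_{\text{FF}}(M)$) the inclusion is automatic and the closure is closed by construction. Next, since Lemma \ref{lem:compatibility-of-antisym} gives $\curl^{-1}(\cB_K(M)) \subset \cB_K(M)$ and $\curl^{-1}$ is bounded on $L^2\cB_{\text{FF}}(M)$, continuity propagates this invariance to the closure. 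The restriction $\curl^{-1}_K$ of a compact self-adjoint operator to a closed invariant subspace is itself compact and self-adjoint.

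For the equivalence characterising $K$-antisymmetric curl eigenfields, the forward direction is immediate from Theorem \ref{thm:flux-free-spectral} applied inside $\cB_{\text{FF}}(M)$. The reverse direction is the main subtlety: given $B \in L^2\cB_K(M)$ with $\curl^{-1}_K B = (1/\lambda)B$ and $\lambda \neq 0$, Theorem \ref{thm:flux-free-spectral} immediately yields $B \in \cB_{\text{FF}}(M)$ (in particular smooth) and $\curl B = \lambda B$, so $B \in E_\lambda(M)$ as defined in \eqref{eq:def-curl-eigenspace}. I would then apply Proposition \ref{prop:decomp} to split
\[
E_\lambda(M) = \bigl(E_\lambda(M)\cap \ker \cL_K\bigr) \oplus \bigl(E_\lambda(M)\cap \cB_K(M)\bigr),
\]
and show that $\ker \cL_K$ is $L^2$-orthogonal to all of $L^2\cB_K(M)$. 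Indeed, for $X \in \ker \cL_K$ and any $\cL_K Z \in \cB_K(M)$, Lemma \ref{lem:antisym-of-K} gives $(X, \cL_K Z)_{L^2} = -(\cL_K X, Z)_{L^2} = 0$, and this orthogonality passes to the closure. Since $B \in L^2\cB_K(M)$, its $\ker \cL_K$-component in the above decomposition must vanish, placing $B$ in $E_\lambda(M)\cap \cB_K(M) \subset \cB_K(M)$.

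Finally, the spectral theorem for compact self-adjoint operators on Hilbert space, applied to $\curl^{-1}_K$, yields an $L^2$-orthonormal eigenbasis of $L^2\cB_K(M)$ with nonzero eigenvalues $\mu^K_n$ accumulating only at $0$. Setting $\lambda^K_n(M) = 1/\mu^K_n$ and using the preceding equivalence gives the enumeration of distinct $K$-antisymmetric curl eigenvalues with $\lambda^K_n(M) \to \pm\infty$. For the helicity bounds, since $\cH(B) = (B,\curl^{-1}B)_{L^2} = (B,\curl^{-1}_K B)_{L^2}$ for $B \in L^2\cB_K(M)$, the Rayleigh quotient for the self-adjoint $\curl^{-1}_K$ is squeezed between its largest and smallest eigenvalues, namely $1/\lambda^K_1(M)$ and $1/\lambda^K_{-1}(M)$, with saturation exactly on the corresponding eigenspaces. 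The only nontrivial step is the backward direction of the spectral characterisation, whose delicacy lies in ruling out a $\ker \cL_K$-component, and the orthogonality argument above handles this cleanly.
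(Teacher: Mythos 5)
Your argument is correct and uses exactly the ingredients the paper itself cites for this result (Lemma \ref{lem:compatibility-of-antisym}, Proposition \ref{prop:decomp}, and Theorem \ref{thm:flux-free-spectral}), which the paper declares an ``immediate corollary'' without writing out the details. Your treatment of the one genuinely delicate point --- ruling out a $\ker\cL_K$-component in the backward direction via the antisymmetry of $\cL_K$ from Lemma \ref{lem:antisym-of-K} --- is the right way to make the deduction rigorous.
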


\section{The curl spectrum on flat tubes}\label{app:flat-tubes}

To begin, let $(M,g)$ be a compact  Riemannian 3-manifold with boundary. Let $K$ be a non-trivial Killing field on $(M,g)$ tangent to $\partial M$. For $Z \in \cX(M,\CC)$, setting $h_Z\coloneqq g(Z,K)$, the following identity is elementary:

\begin{lemma}\label{lem:Lie-Killing-id}
Let $Z \in \cX(M,\CC)$. Then,
\begin{equation*}
\cL_K Z = (\curl Z) \times K + \nabla h_Z.
\end{equation*}
\end{lemma}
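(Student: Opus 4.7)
The plan is to rewrite the vector identity as an identity of $1$-forms and then apply Cartan's magic formula. For a complex-valued vector field $Z$ on $M$, let $Z^\flat = g(Z,\cdot)$ denote its metric dual $1$-form. Since $K$ is Killing, $\cL_K g = 0$, and therefore the Lie derivative commutes with the musical isomorphism: $(\cL_K Z)^\flat = \cL_K Z^\flat$. Consequently, it suffices to verify the dual identity
\begin{equation*}
\cL_K Z^\flat = dh_Z + \bigl((\curl Z) \times K\bigr)^\flat,
\end{equation*}
and then raise indices.

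Applying Cartan's magic formula to $\alpha \coloneqq Z^\flat$ gives
\begin{equation*}
\cL_K \alpha = d(i_K \alpha) + i_K(d\alpha).
\end{equation*}
The first term handles itself, since $i_K \alpha = g(Z,K) = h_Z$, so $d(i_K\alpha)=dh_Z$. For the second term, I would invoke the Riemannian definition of curl in three dimensions, namely $dZ^\flat = *(\curl Z)^\flat$, together with the algebraic identity
\begin{equation*}
i_X \bigl(*V^\flat\bigr) = (V \times X)^\flat,
\end{equation*}
valid for any vector fields $X,V$ on an oriented Riemannian $3$-manifold. This identity follows by pairing with a test vector $Y$ and using the scalar triple product: $\bigl(i_X *V^\flat\bigr)(Y) = *V^\flat(X,Y) = g(V,X\times Y) = g(V\times X,Y)$. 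Applying this with $V = \curl Z$ and $X = K$ yields
\begin{equation*}
i_K(d\alpha) = \bigl((\curl Z) \times K\bigr)^\flat.
\end{equation*}

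Combining the two contributions gives the desired $1$-form identity, and raising indices produces the stated vector identity. Since both sides are $\mathbb{C}$-linear in $Z$, the complex-valued case reduces immediately to the real-valued one. The Killing hypothesis on $K$ is used \emph{only} to commute the musical isomorphism with $\cL_K$; apart from this step, the argument is a direct consequence of Cartan's magic formula and Hodge duality in dimension three. I do not foresee any substantive obstacle here; the only place where one must be careful is fixing orientation conventions so that the sign of the Hodge-star identity is consistent with the chosen definition of the cross product.
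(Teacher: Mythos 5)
Your proof is correct and follows essentially the same route as the paper: Cartan's formula applied to $Z^\flat$, the identification $dZ^\flat = i_{\curl Z}\mu = *(\curl Z)^\flat$, the triple-product identity to convert $i_K i_{\curl Z}\mu$ into $((\curl Z)\times K)^\flat$, and the Killing property to commute $\cL_K$ with the musical isomorphism. No gaps.
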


\begin{proof}
First note that, by Cartan's formula,
\begin{align*}
\cL_K (Z^{\flat}) = i_K dZ^{\flat} + d i_K Z^{\flat} &= i_K i_{\curl Z} \mu + dh_Z\\
&= ((\curl Z) \times K)^\flat + (\nabla h_Z)^{\flat}. 
\end{align*}
On the other hand, because $K$ is Killing, $\cL_K (Z^{\flat}) = (\cL_K Z)^{\flat}$. The result follows.
\end{proof}

The following proposition shows that a curl eigenfield cannot be pointwise orthogonal to a Killing field.

\begin{proposition}\label{prop:injectivity-of-scalar-projection}
Let $\lambda \in \RR \setminus \{0\}$ and $B\in E_{\CC,\lambda}(M)$. Then $g(B,K)=0$ if and only if $B=0$.
\end{proposition}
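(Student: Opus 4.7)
The ``if'' direction is immediate. For the ``only if'' direction, suppose $h_B \coloneqq g(B,K)=0$. Applying Lemma~\ref{lem:Lie-Killing-id} to $B$ and using $\curl B=\lambda B$ yields the key identity
\begin{equation}\label{eq:plan-key}
\cL_K B = \lambda\, B \times K.
\end{equation}
My plan is to reduce the argument to the case where $B$ lies in a single $\cL_K$-eigenspace of $E_{\CC,\lambda}(M)$. By Proposition~\ref{prop:decomp} and Lemma~\ref{lem:complex-decomp}, I decompose $B = B_0 + \sum_j B_j$ with $\cL_K B_0=0$ and $\cL_K B_j = ik_j B_j$ for distinct nonzero $k_j$. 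Each scalar $h_{B_j}=g(B_j,K)$ satisfies $\cL_K h_{B_j}=ik_j h_{B_j}$ (using $\cL_K K=0$), so the $h_{B_j}$ and $h_{B_0}$ are eigenfunctions of $\cL_K$ with distinct eigenvalues, and their sum $h_B=0$ forces each to vanish separately. Hence I may assume $B$ itself is a joint eigenfunction.

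If $\cL_K B=0$, then \eqref{eq:plan-key} gives $B\times K=0$, so $B$ is pointwise parallel to $K$; combined with $B\perp K$, this forces $B=0$ on the non-empty open set $\{K\neq 0\}$. Since $\curl B=\lambda B$ and $\Div B=0$ together imply $\Delta B+\lambda^2 B=0$ (for the Hodge Laplacian on vector fields), strong unique continuation gives $B\equiv 0$.

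If $\cL_K B=ikB$ for some $k\neq 0$, then \eqref{eq:plan-key} reads $B\times K=(ik/\lambda)B$. Crossing with $K$ and using $B\perp K$ yields the scalar identity
\[
-|K|^2 B \;=\; (B\times K)\times K \;=\; \frac{ik}{\lambda}(B\times K) \;=\; -\frac{k^2}{\lambda^2}B,
\]
so $(\lambda^2|K|^2-k^2)B=0$ pointwise. Whenever $|K|$ is not identically equal to $k/|\lambda|$, the open set $\{\lambda^2|K|^2\neq k^2\}$ is non-empty and $B$ vanishes on it, whence $B\equiv 0$ by unique continuation.

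The main obstacle is the degenerate sub-case $|K|\equiv k/|\lambda|$, which genuinely occurs (for instance, for translational Killing fields on flat tubes). To handle it, I would split $B=X+iY$ with $X,Y$ real, tangent to $\partial M$, and perpendicular to $K$. The real part of $B\times K=(ik/\lambda)B$ reads $X\times K=-(k/\lambda)Y$; on $\partial M$, the left-hand side is normal to $T\partial M$ (since $X,K\in T\partial M$) while the right-hand side lies in $T\partial M$, so both must vanish. This yields $Y=0$ together with $X\times K=0$, and since $X\perp K$ with $K\neq 0$, also $X=0$. Thus $B=0$ on $\partial M$. A short boundary computation using $\Div B=0$ and $\curl B=\lambda B$ (in a local orthonormal frame with $e_3=n$) then forces $\partial_n B=0$ on $\partial M$ as well, and Aronszajn's unique continuation theorem for $\Delta B+\lambda^2 B=0$ with vanishing Cauchy data on the smooth boundary closes the argument.
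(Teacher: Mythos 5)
Your proof is correct, but it takes a genuinely different route from the paper's. The paper applies the tangent/normal dichotomy directly to the full identity $\cL_K B = \lambda B \times K$: on $\partial M$ the left-hand side is tangent and the right-hand side is normal, so $\cL_K B$ vanishes on $\partial M$; since $\cL_K B$ again lies in $E_\lambda(M)$ (as $K$ is Killing and tangent to the boundary), the cited fact that a Beltrami field vanishing on $\partial M$ vanishes identically gives $\cL_K B \equiv 0$, hence $B \times K \equiv 0$ on all of $M$, and the conclusion follows from $g(B,K)=0$ on $\{K \neq 0\}$ plus unique continuation --- no mode decomposition and no case analysis, in one stroke for an arbitrary Killing field. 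You instead Fourier-decompose in $\cL_K$ and extract the pointwise algebraic identity $(\lambda^2|K|^2 - k^2)B = 0$, which settles everything except the resonant case $|K| \equiv |k|/|\lambda|$; there you fall back on the same tangent-versus-normal observation (applied to the real and imaginary parts of $B \times K = (ik/\lambda)B$) to get $B|_{\partial M}=0$, and then invoke the same boundary-vanishing property of Beltrami fields that the paper cites (your sketch of it via vanishing Cauchy data and Aronszajn is the standard proof). The trade-off: your argument is longer and must isolate the degenerate constant-$|K|$ sub-case (which, as you note, genuinely occurs for the translational Killing fields on flat tubes used later in the paper), but it yields the extra pointwise relation between $|K|$, $k$ and $\lambda$; the paper's argument is shorter and uniform but leans on the boundary-vanishing theorem in every case rather than only in the resonant one. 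Both ultimately rest on the same two external ingredients: unique continuation and the vanishing-on-the-boundary rigidity of Beltrami fields.
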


\begin{proof}
Since $K$ is real-valued, it is enough to work with $B \in E_{\lambda}(M)$. Assume that $h_B=g(B,K)=0$. By Lemma \ref{lem:Lie-Killing-id}, we then have that
\begin{equation*}
\cL_K B = \lambda B \times K.
\end{equation*}
Because $B$ and $K$ are tangent to $\partial M$, the left hand side is tangent to $\partial M$ while the right hand side is perpendicular to $\partial M$. Thus
\begin{equation*}
(\cL_K B) |_{\partial M} = 0.
\end{equation*}
However, $\cL_K B \in E_{\lambda}(M)$ and so
\begin{equation*}
\lambda B \times K = \cL_K B = 0  
\end{equation*}
on $M$ because any Beltrami field that vanishes on the boundary of the manifold is zero everywhere, see e.g.~\cite[Lemma 2.1]{Gerner2021typical} and \cite[Proposition 3.8]{EPS23}.

In summary, $g(B,K) = 0$ and $B \times K = 0$, so $B=0$
on the non-empty open subset $V = \{x \in M : K|_x \neq 0\}$, and $B=0$ on the whole $M$ by the unique continuation property.
\end{proof}

Set
\begin{equation}\label{eq:def-of-sigma}
\sigma = g(K,K).    
\end{equation}
The following is an approach for expressing a field $B \in E_{\CC,\lambda,k}(M)$ ($k \in \RR$, $\lambda \in \RR \setminus \{0\}$) in terms of $h_B$.

\begin{proposition}\label{prop:determining-expression}
Let $\lambda \in \RR \setminus \{0\}$ and $k \in \RR$. If $ B \in E_{\CC,\lambda,k}(M)$, then $\cL_K h_B = ik h_B$ and one has
\begin{align}
(\lambda^2 \sigma - k^2) B &= \lambda^2 h_B K + \lambda \nabla h_B \times K + ik \nabla h_B\label{eq:main-expression-for-B},\\
\lambda^2\max \sigma - k^2 &> 0\label{eq:antisym-lambda-lower-bound}.
\end{align}
\end{proposition}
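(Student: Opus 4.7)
The proof proceeds in three stages. First, the identity $\cL_K h_B = ik h_B$ is immediate: since $K$ is Killing, $\cL_K g = 0$ and $\cL_K K = 0$, so
\[
\cL_K h_B = \cL_K g(B,K) = g(\cL_K B, K) + g(B, \cL_K K) = g(ik B, K) = ik\, h_B.
\]
For the main identity \eqref{eq:main-expression-for-B}, apply Lemma \ref{lem:Lie-Killing-id} to $B$ together with $\curl B = \lambda B$ and $\cL_K B = ik B$ to obtain the preparatory equation
\[
\lambda\, B \times K = ik B - \nabla h_B.
\]

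The plan for the identity is then a direct algebraic manipulation. Cross both sides with $K$ and use the BAC--CAB rule $(B \times K) \times K = h_B K - \sigma B$ to get
\[
\lambda (h_B K - \sigma B) = (ik B - \nabla h_B) \times K = ik\, B \times K - \nabla h_B \times K.
\]
Substitute $B \times K = (ikB - \nabla h_B)/\lambda$ on the right and multiply through by $\lambda$, so that the $B \times K$ term turns into $(-k^2 B - ik \nabla h_B)$. Reorganising gives exactly
\[
(\lambda^2 \sigma - k^2) B = \lambda^2 h_B K + \lambda\, \nabla h_B \times K + ik\, \nabla h_B,
\]
which is \eqref{eq:main-expression-for-B}.

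For the strict inequality \eqref{eq:antisym-lambda-lower-bound}, assume $B \neq 0$ (the statement is otherwise vacuous) and pair \eqref{eq:main-expression-for-B} with $\overline{B}$ in the $L^2$ inner product. Three terms must be computed. Since $K$ is real, $g(K,\overline{B}) = \overline{h_B}$, giving $\lambda^2\int h_B\,\overline{h_B}\,\mu = \lambda^2 \|h_B\|_{L^2}^2$. Using $\Div B = 0$ and $B \cdot n = 0$, integration by parts yields $\int g(\nabla h_B, \overline{B})\,\mu = -\int h_B \Div \overline{B}\,\mu = 0$, killing the $ik\nabla h_B$ term. For the middle term, use the triple product identity $g(\nabla h_B \times K, \overline{B}) = g(\nabla h_B, K \times \overline{B})$ and compute $K \times \overline{B}$ by conjugating the preparation equation: $\overline{B} \times K = \overline{B\times K} = \tfrac{1}{\lambda}(-ik\overline{B} - \nabla \overline{h_B})$, so $K\times \overline{B} = \tfrac{1}{\lambda}(ik\overline{B} + \nabla \overline{h_B})$. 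Hence
\[
\int_M g(\nabla h_B \times K, \overline{B})\,\mu = \frac{1}{\lambda}\Bigl(ik\int_M g(\nabla h_B,\overline{B})\,\mu + \int_M |\nabla h_B|^2\,\mu\Bigr) = \frac{1}{\lambda}\|\nabla h_B\|_{L^2}^2.
\]
Combining everything:
\[
\int_M (\lambda^2 \sigma - k^2) |B|^2\,\mu = \lambda^2 \|h_B\|_{L^2}^2 + \|\nabla h_B\|_{L^2}^2.
\]

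The conclusion follows from Proposition \ref{prop:injectivity-of-scalar-projection}: since $B \neq 0$, $h_B \not\equiv 0$, so the right-hand side is strictly positive. Hence the integrand on the left must be positive somewhere, forcing $\lambda^2 \max \sigma - k^2 > 0$. The main obstacle is purely bookkeeping, namely keeping the signs straight in the triple-product manipulations over complex vectors and verifying that the conjugation of the preparation identity gives precisely the expression needed to reduce $\int g(\nabla h_B \times K, \overline{B})$ to a positive Dirichlet-type integral of $h_B$.
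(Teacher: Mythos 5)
Your proof is correct. The first two parts (the eigenvalue relation for $h_B$ and the identity \eqref{eq:main-expression-for-B}) follow the same route as the paper: apply Lemma \ref{lem:Lie-Killing-id} to get $ikB=\lambda B\times K+\nabla h_B$, cross with $K$, and eliminate $B\times K$; your reorganisation of the algebra is equivalent to the paper's. Where you genuinely diverge is in the proof of the strict inequality \eqref{eq:antisym-lambda-lower-bound}. The paper deduces it from the helicity formula $\cH_\CC(B)=\tfrac{i}{k}(B,B\times K)_{L^2}$ of Lemma \ref{lem:vector-pot-and-antisymm-lamda-ineq}, Cauchy--Schwarz, the pointwise bound $|B\times K|^2=\sigma|B|^2-|h_B|^2\le \max\sigma\,|B|^2-|h_B|^2$, and the Beltrami identity $\cH_\CC(B)=\|B\|_{L^2}^2/\lambda$; strictness again comes from $h_B\not\equiv 0$ via Proposition \ref{prop:injectivity-of-scalar-projection}. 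You instead pair \eqref{eq:main-expression-for-B} directly against $\overline{B}$ and integrate by parts, arriving at the exact identity
\begin{equation*}
\int_M(\lambda^2\sigma-k^2)\,|B|^2\,\mu=\lambda^2\|h_B\|_{L^2}^2+\|\nabla h_B\|_{L^2}^2,
\end{equation*}
from which the conclusion follows by the same appeal to Proposition \ref{prop:injectivity-of-scalar-projection}. Your computations check out: the boundary term in the integration by parts vanishes because $B\cdot n=0$, the conjugated preparatory identity gives $K\times\overline{B}=\tfrac{1}{\lambda}(ik\overline{B}+\nabla\overline{h_B})$ with the correct signs, and the cross term drops out. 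Your route buys a slightly sharper quantitative statement and, unlike the paper's argument as written, does not divide by $k$, so it covers $k=0$ uniformly (a case that is trivial anyway since $K$ is non-trivial and $\lambda\neq 0$); the paper's route has the advantage of recycling machinery already established for the antisymmetric spectrum. Both are valid.
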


\begin{proof}
That $\cL_K h_B = ik h_B$ easily follows using that $K$ is Killing and $\cL_K B = ikB$.
For the second claim, by Lemma \ref{lem:Lie-Killing-id}, $\cL_K B = \lambda B \times K + \nabla h_B$, and
hence, 
\begin{equation}\label{eq:basic-expression}
ik B = \lambda B \times K + \nabla h_B.
\end{equation}
Now, crossing Equation~\eqref{eq:basic-expression} gives
\begin{equation}\label{eq:alt-expression-for-cross}
ik B \times K =  \lambda (h_B K- \sigma B) + \nabla h_B \times K.
\end{equation}
Substituting Equation~\eqref{eq:alt-expression-for-cross} into  Equation~\eqref{eq:basic-expression} yields
\begin{equation*}
-k^2 B = \lambda^2 (h_B K- \sigma B) + \lambda \nabla h_B \times K + ik \nabla h_B.
\end{equation*}
Equation~\eqref{eq:main-expression-for-B} then follows by rearrangement. Concerning \eqref{eq:antisym-lambda-lower-bound}, we recall from Lemma \ref{lem:vector-pot-and-antisymm-lamda-ineq} in Section \ref{sec:complex-antisym-fields} that
\begin{equation*}
\cH_\CC(B) = \frac{i}{k}(B,B\times K)_{L^2}.    
\end{equation*}
In particular,
\begin{align*}
|\cH_\CC(B)|^2 \leq \frac{1}{k^2}\|B\|_{L^2}^2\|B\times K\|_{L^2}^2.
\end{align*}
Moreover, 
\begin{equation*}
\|B\times K\|_{L^2}^2 = \int_M (\sigma g_\CC(B,B) - h_B\overline{h_B}) \mu \leq \max \sigma \|B\|_{L^2}^2 - \|h_B\|_{L^2}^2,
\end{equation*}
so that
\begin{equation}\label{eq:helicity-estimate}
|\cH_\CC(B)|^2 \leq \frac{\max \sigma}{k^2}\|B\|_{L^2}^4 - \frac{1}{k^2}\|h_B\|_{L^2}^2\|B\|_{L^2}^2.
\end{equation}
Additionally, because $B \neq 0$, $h_B \neq 0$ by Proposition~\ref{prop:injectivity-of-scalar-projection} and so estimate \eqref{eq:helicity-estimate} implies that
\begin{equation*}
\frac{1}{\lambda^2}\|B\|_{L^2}^4 < \frac{\max \sigma}{k^2}\|B\|_{L^2}^4,
\end{equation*}
and hence~\eqref{eq:antisym-lambda-lower-bound}.
\end{proof}

Let $0 < a < b$ and $L > 0$. Henceforth assume that
\begin{equation*}
M = \Sigma \times \RR/L\ZZ,    
\end{equation*}
where $\Sigma \in \{D_a,A_{a,b}\}$ and
\begin{align*}
D_a &= \{(x,y) \in \RR^2 : x^2+y^2 \leq a^2\},\\  
A_{a,b} &= \{(x,y) \in \RR^2 : a^2 \leq x^2+y^2 \leq b^2\}. 
\end{align*}
Let $(x,y)\in \Sigma$ and $s : M \to \RR/L\ZZ$ denote the Cartesian coordinates on $M$. Consider the flat metric $g$ and associated Killing fields $T$ and $P$ given by
\begin{equation*}
g = dx^2 + dy^2 + ds^2, \quad T = \partial_s, \quad P = -y\partial_x + x\partial_y.
\end{equation*}

Now, for $\lambda \in \RR \setminus \{0\}$ and $\ell \in (2\pi/L)\ZZ$, we set
\begin{equation*}
\sigma_{\lambda,\ell} \coloneqq  \lambda^2-\ell^2.    
\end{equation*}
Then, by Lemma \ref{lem:vector-pot-and-antisymm-lamda-ineq}, if $E_{\CC,\lambda,\ell,m}(M) \neq \{0\}$ for some $m \in \ZZ$, then because $g(T,T) = 1$, we have that $\lambda^2 > \ell^2$.
Hence, we may assume that $\sigma_{\lambda,\ell} > 0$ when characterising the spaces $E_{\CC,\lambda,\ell,m}(M)$. We complete the characterisation in two steps. In what follows, we set $h_B\coloneqq g(B,T)$.

\begin{lemma}\label{lem:abstract-BVP-for-h-radially-sym}
 $B \in E_{\CC,\lambda,\ell,m}(M)$ if and only if there exists $h \in C^{\infty}(M,\CC)$ such that
\begin{equation*}
B = \lambda^2 h T + \lambda \nabla h \times T + i\ell \nabla h,    
\end{equation*}
where $h$ satisfies the equations
\begin{equation}\label{eq:h-differential-equations-for-flat-tube}
\Delta h = -\lambda^2 h,\qquad T(h) = i\ell h,\qquad P(h) = im h,
\end{equation}
together with the boundary condition
\begin{equation}\label{eq:h-boundary-condition-disk}
\frac{\lambda m}{a} h|_{\partial M} + \ell \frac{\partial h}{\partial n} = 0,
\end{equation}
if $\Sigma = D_a$, and together with the boundary condition
\begin{equation}\label{eq:h-boundary-condition-annulus}
\begin{split}
-\frac{\lambda m}{a} h  + \ell \frac{\partial h}{\partial n} &= 0 \text{ on } \partial_1 M,\\
\frac{\lambda m}{b} h + \ell \frac{\partial h}{\partial n} &= 0 \text{ on } \partial_2 M,
\end{split}
\end{equation}
if $\Sigma = A_{a,b}$, where 
\begin{align*}
\partial_1 M &= \{(x,y) \in \RR^2 : x^2+y^2 = a^2\} \times \RR/L\ZZ,\\
\partial_2 M &= \{(x,y) \in \RR^2 : x^2+y^2 = b^2\} \times \RR/L\ZZ.
\end{align*}
\end{lemma}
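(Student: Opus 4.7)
The plan is to read this as a direct specialization of Proposition~\ref{prop:determining-expression} to $K = T$ (so $\sigma = g(T,T) \equiv 1$, $k = \ell$), together with a boundary computation. For the forward implication, assume $B \in E_{\CC,\lambda,\ell,m}(M)$. If $B = 0$ take $h = 0$; otherwise Proposition~\ref{prop:determining-expression} delivers $\lambda^2 > \ell^2$ and the identity $(\lambda^2 - \ell^2) B = \lambda^2 h_B T + \lambda \nabla h_B \times T + i\ell \nabla h_B$, so that $h := h_B/(\lambda^2-\ell^2)$ produces the stated representation. Because $T$ and $P$ are Killing and $[T,P]=0$, one has $\cL_T h_B = g(\cL_T B,T) = i\ell h_B$ and $\cL_P h_B = g(\cL_P B,T)+g(B,\cL_P T) = im h_B$, giving the two first-order eigenequations. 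Taking the divergence of the representation formula and using $\Div T = 0$, $\Div(\nabla h\times T) = T\cdot\curl\nabla h - \nabla h\cdot\curl T = 0$, and $T(h) = i\ell h$ collapses $\Div B = 0$ into $i\ell(\lambda^2 h + \Delta h) = 0$, i.e.\ $\Delta h = -\lambda^2 h$.

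For the boundary conditions I would expand $B\cdot n$ in Cartesian coordinates. Since $T=\partial_s$ is tangent to $\partial M$, the first term contributes nothing. A direct computation gives $\nabla h \times T = h_y\,\partial_x - h_x\,\partial_y$, so on each cylindrical boundary component of radius $c\in\{a,b\}$ with unit normal $n = \pm(x,y,0)/c$ one reads off $(\nabla h\times T)\cdot n = \pm(xh_y - yh_x)/c = \pm P(h)/c = \pm imh/c$ and $\nabla h\cdot n = \partial h/\partial n$, where the signs are those making $n$ point out of $M$. Writing $B\cdot n = i\bigl(\pm\lambda m h/c + \ell\,\partial h/\partial n\bigr)$ and setting it to zero yields precisely~\eqref{eq:h-boundary-condition-disk} on $\partial D_a\times \RR/L\ZZ$ and~\eqref{eq:h-boundary-condition-annulus} on the two components $\partial_1 M$ and $\partial_2 M$ of the annular case; the opposite signs of the first term on the two components of $\partial M_{a,b,L}$ come from the opposite orientations of their outward normals.

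The reverse direction is a verification. Given $h$ satisfying the PDE system and the boundary conditions, define $B$ by the formula and check the four conditions defining $E_{\CC,\lambda,\ell,m}(M)$. Termwise application of the identities $\curl(fX) = \nabla f\times X + f\curl X$ and $\curl(X\times Y) = X\Div Y - Y\Div X + (Y\cdot\nabla)X - (X\cdot\nabla)Y$, combined with $\partial_s\nabla h = \nabla(i\ell h) = i\ell\nabla h$, $\curl T = 0$, and $\Delta h = -\lambda^2 h$, rearranges to $\curl B = \lambda^2\nabla h\times T + \lambda^3 hT + i\lambda\ell\nabla h = \lambda B$. The conditions $\cL_T B = i\ell B$ and $\cL_P B = im B$ follow by Leibniz applied termwise, using that $T,P$ are Killing (so their Lie derivatives commute with $\nabla$ and $\times$) together with $\cL_T T = \cL_P T = 0$. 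Divergence-freeness is the same computation as in the forward direction, and $B\cdot n = 0$ is exactly what the boundary conditions were designed to enforce.

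The main obstacle I anticipate is keeping the signs straight in the boundary calculation, specifically the reversal of the outward normal between $\partial_1 M$ and $\partial_2 M$ which accounts for the asymmetric appearance of~\eqref{eq:h-boundary-condition-annulus}. Once the key simplification $(\nabla h\times T)\cdot n = \pm P(h)/c$ is in place — this is the mechanism through which the azimuthal mode number $m$ enters the boundary condition — everything else is routine vector calculus.
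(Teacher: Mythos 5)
Your overall strategy is exactly the paper's: specialize Proposition~\ref{prop:determining-expression} to $K=T$, $k=\ell$ to get the representation formula and the first-order eigenequations in the forward direction, then verify the ansatz termwise in the reverse direction; your boundary computation $(\nabla h\times T)\cdot n=\pm P(h)/c=\pm im h/c$ with the sign flip between $\partial_1 M$ and $\partial_2 M$ is precisely how the paper obtains \eqref{eq:h-boundary-condition-disk} and \eqref{eq:h-boundary-condition-annulus}.

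There is, however, one step that fails: in the forward direction you extract the Helmholtz equation from $\Div B=0$, which collapses to $i\ell(\Delta h+\lambda^2 h)=0$. This is vacuous when $\ell=0$, a case the lemma formally includes ($\ell$ ranges over all of $(2\pi/L)\ZZ$, and the $\ell=0$, $m\neq 0$ modes are genuinely part of the decomposition in Lemma~\ref{lem:new-decomp}). So as written you have not shown that $\Delta h=-\lambda^2 h$ for those modes. The paper avoids this by instead computing
\begin{equation*}
\curl B=-\lambda\,\Delta h\,T+\lambda^{2}\,\nabla h\times T+i\ell\lambda\,\nabla h
\end{equation*}
from the representation formula and comparing the $T$-component of $\curl B=\lambda B$, which forces $-\lambda\Delta h=\lambda^{3}h$ for every $\ell$. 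Since you already carry out essentially this curl computation in your reverse direction, the fix costs nothing. A related minor point: in the reverse direction you need not verify $\Div B=0$ at all --- it is not part of the definition of $E_{\CC,\lambda}(M)$ and is automatic from $\curl B=\lambda B$ with $\lambda\neq 0$. Everything else (the Lie-derivative identities for $h_B$, the sign bookkeeping for the outward normals, and the termwise verification of $\cL_T B=i\ell B$ and $\cL_P B=imB$) is correct and matches the paper's argument.
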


\begin{proof}
First, if $B \in E_{\CC,\lambda,\ell,m}(M)$, then by Proposition \ref{prop:determining-expression},
\begin{align*}
\cL_T h_B &= i\ell h_B,\\
\sigma_{\lambda,\ell} B &= \lambda^2 h_B T + \lambda \nabla h_B \times T + i\ell \nabla h_B,
\end{align*}
but also, because $P$ is Killing and $[P,T] = 0$,
\begin{equation*}
\cL_P h_B = im h_B.
\end{equation*}
Hence, setting $h \coloneqq  h_B/\sigma_{\lambda,\ell}$, we can say that $B$ and $h$ satisfy
\begin{equation}\label{eq:anzatz}
\begin{split}
B &= \lambda^2 h T + \lambda \nabla h \times T + i\ell \nabla h,\\
\cL_T h &= i\ell h,\\
\cL_P h &= im h.
\end{split}    
\end{equation}

Now, let $B \in \cX(M,\CC)$ and $h \in C^{\infty}(M,\CC)$ be such that Equation \eqref{eq:anzatz} is satisfied. Then, because $T$ is Killing, we easily see that
\begin{equation*}
\cL_T B =  i\ell B.
\end{equation*}
Likewise, because $P$ is Killing and $[P,T] = 0$,
\begin{equation*}
\cL_P B = im B.
\end{equation*}
Now, because $\curl T = 0$ and $T$ is Killing, a straightforward computation shows
\begin{align*}
\curl B = -\lambda \Delta h T + \lambda^2 \nabla h \times T + i\ell \lambda \nabla h.
\end{align*}
Hence, we see that
\begin{equation*}
\curl B = \lambda B \iff \Delta h = -\lambda^2 h.
\end{equation*}
Using the fact that $T \cdot n = 0$, we have from Equation \eqref{eq:anzatz} that
\begin{equation*}
B \cdot n = \lambda (\nabla h \times T)\cdot n + i\ell \frac{\partial h}{\partial n}.    
\end{equation*}

Now, if $\Sigma = D_a$, then the outward unit normal is given as
\begin{equation*}
n = \frac{x\partial_x + y \partial_y}{a} \bigg|_{\partial M}.    
\end{equation*}
Hence,
\begin{align*}
(\nabla h \times T)\cdot n = (T \times n) \cdot \nabla h = \frac{1}{a} P \cdot \nabla h |_{\partial M} =\frac{im}{a} h|_{\partial M},  
\end{align*}
and so Equation $B\cdot n=0$ is equivalent to Equation \eqref{eq:h-boundary-condition-disk} in this setting. 

If $\Sigma = A_{a,b}$, then the outward unit normal is given as
\begin{align*}
n|_{\partial_1 M} &= - \frac{x\partial_x + y \partial_y}{a} \bigg|_{\partial_1 M}, \\
n|_{\partial_2 M} &= 
\frac{x\partial_x + y \partial_y}{b} \bigg|_{\partial_2 M},
\end{align*}
and one similarly computes that
\begin{align*}
(\nabla h \times T)\cdot n|_{\partial_1 M} &= -\frac{im}{a} h|_{\partial_1 M},\\  
(\nabla h \times T)\cdot n|_{\partial_2 M} &= \frac{im}{b} h|_{\partial_2 M},
\end{align*}
so that Equation $B\cdot n=0$ is equivalent to Equation \eqref{eq:h-boundary-condition-annulus}. This proves the result.
\end{proof}

One can explicitly write the form of $h$ solving Equation \eqref{eq:h-differential-equations-for-flat-tube} and the eigenvalue equation coming about from the boundary condition in Lemma \ref{lem:abstract-BVP-for-h-radially-sym}. This is the content of the last lemma of this section.

\begin{lemma}\label{lem:eigen-equations}
Let $\ell \in (2\pi/L)\ZZ$ and $m \in \ZZ$. Let $\lambda \in \RR$ such that $\lambda^2 > \ell^2$. Set $\mu \coloneqq \sqrt{\lambda^2-\ell^2}$. Let $0 \neq h \in C^{\infty}(M,\CC)$. Then, the following holds.
\begin{enumerate}
    \item If $\Sigma = D_a$, then $h$ solves equations \eqref{eq:h-differential-equations-for-flat-tube} and \eqref{eq:h-boundary-condition-disk} if and only if there exists $A \in \CC \setminus \{0\}$ such that
    \begin{equation}\label{eq:form-of-h-disk}
    h = A J_m(\mu r)e^{im\theta}e^{i \ell s},
    \end{equation}
    where $\lambda$, $\ell$, and $m$, additionally satisfy
    \begin{equation}\label{eq:eigenvalue-equation-disk}
    \frac{\lambda m}{a} J_m(\mu a) + \ell \mu J_m'(\mu a) = 0.    
    \end{equation}
    \item If $\Sigma = A_{a,b}$, then $h$ solves equations \eqref{eq:h-differential-equations-for-flat-tube} and \eqref{eq:h-boundary-condition-annulus} if and only if there exists $(A,B) \in \CC^2 \setminus \{(0,0)\}$ such that
    \begin{equation}\label{eq:form-of-h-annulus}
    h = (A J_m(\mu r) + BY_m(\mu r))e^{im\theta}e^{i \ell s},
    \end{equation}
    where $A$, and $B$, satisfy
    \begin{equation}\label{eq:mult-is-zero-annulus}
    \begin{pmatrix}
    \frac{\lambda m}{a} J_m(\mu a) + \ell \mu J_m'(\mu a) &  \frac{\lambda m}{a} Y_m(\mu a) + \ell \mu Y_m'(\mu a)\\
    \frac{\lambda m}{b} J_m(\mu b) + \ell \mu J_m'(\mu b) & \frac{\lambda m}{b} Y_m(\mu b) + \ell \mu Y_m'(\mu b)
    \end{pmatrix}
    \begin{pmatrix}
    A\\
    B
    \end{pmatrix}
    = 0. 
    \end{equation}
   The existence of $(A,B) \in \CC^2 \setminus \{(0,0)\}$ satisfying Equation \eqref{eq:mult-is-zero-annulus} is of course equivalent to $\lambda$, $\ell$ and $m$ satisfying
    \begin{equation}\label{eq:det-is-zero-annulus}
    \begin{vmatrix}
    \frac{\lambda m}{a} J_m(\mu a) + \ell \mu J_m'(\mu a) &  \frac{\lambda m}{a} Y_m(\mu a) + \ell \mu Y_m'(\mu a)\\
    \frac{\lambda m}{b} J_m(\mu b) + \ell \mu J_m'(\mu b) & \frac{\lambda m}{b} Y_m(\mu b) + \ell \mu Y_m'(\mu b)
    \end{vmatrix} = 0.
    \end{equation}
\end{enumerate}
\end{lemma}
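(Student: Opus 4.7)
The plan is to combine Lemma \ref{lem:abstract-BVP-for-h-radially-sym} with classical separation of variables in cylindrical coordinates $(r,\theta,s)$ with $x = r\cos\theta$, $y = r\sin\theta$. The conditions $T(h) = i\ell h$ and $P(h) = im h$ in \eqref{eq:h-differential-equations-for-flat-tube} read $\partial_s h = i\ell h$ and $\partial_\theta h = im h$; together with single-valuedness on $\RR/2\pi\ZZ \times \RR/L\ZZ$ they force
\begin{equation*}
h(r,\theta,s) = \tilde h(r)\,e^{im\theta}\, e^{i\ell s}
\end{equation*}
for a scalar function $\tilde h$ of the radial variable. Substituting this ansatz into $\Delta h = -\lambda^2 h$ with $\Delta = \partial_r^2 + \tfrac{1}{r}\partial_r + \tfrac{1}{r^2}\partial_\theta^2 + \partial_s^2$ reduces the PDE to Bessel's equation of order $m$ with parameter $\mu = \sqrt{\lambda^2-\ell^2}>0$,
\begin{equation*}
\tilde h'' + \tfrac{1}{r}\tilde h' + \bigl(\mu^2 - \tfrac{m^2}{r^2}\bigr)\tilde h = 0,
\end{equation*}
whose general solution is $\tilde h(r) = A J_m(\mu r) + B Y_m(\mu r)$.

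For the disk case $\Sigma = D_a$, smoothness of $h$ at $r = 0$ excludes $Y_m$ (which is singular at the origin --- logarithmically when $m = 0$ and like $r^{-|m|}$ otherwise), forcing $B = 0$ and $\tilde h = A J_m(\mu r)$. The outward normal at $r = a$ is $\partial_r$, so after dividing the boundary condition \eqref{eq:h-boundary-condition-disk} by the (nonvanishing) factor $e^{im\theta}e^{i\ell s}$ and using $\tilde h'(a) = A\mu J_m'(\mu a)$ one obtains
\begin{equation*}
A\bigl(\tfrac{\lambda m}{a}J_m(\mu a) + \ell\mu J_m'(\mu a)\bigr) = 0.
\end{equation*}
Nontriviality $A \neq 0$ is exactly the eigenvalue equation \eqref{eq:eigenvalue-equation-disk}, and the converse direction is immediate from \eqref{eq:form-of-h-disk}.

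For the annulus case $\Sigma = A_{a,b}$ both Bessel functions survive since $r > 0$ throughout, so the ansatz is \eqref{eq:form-of-h-annulus}. The outward normals at the inner and outer boundaries point in opposite radial directions, but the signs in \eqref{eq:h-boundary-condition-annulus} are arranged so that both boundary conditions reduce, after separating variables, to the common radial form $\tfrac{\lambda m}{r_0}\tilde h(r_0) + \ell\, \tilde h'(r_0) = 0$ at $r_0 \in \{a,b\}$. Writing these out with $\tilde h = A J_m(\mu r) + B Y_m(\mu r)$ produces exactly the $2\times 2$ linear system \eqref{eq:mult-is-zero-annulus} for $(A,B)$, the nontriviality of which is the determinant condition \eqref{eq:det-is-zero-annulus}. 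There is essentially no obstacle here: the entire lemma is routine verification once separation of variables is carried out, with the only care required in tracking the sign conventions for the outward normal on the inner component of $\partial A_{a,b} \times \RR/L\ZZ$ and in confirming that both integer $m \geq 0$ and $m < 0$ are handled uniformly by the symmetry relations $J_{-m} = (-1)^m J_m$, $Y_{-m} = (-1)^m Y_m$.
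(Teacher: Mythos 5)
Your proposal is correct and follows essentially the same route as the paper: separation of variables in polar coordinates reduces \eqref{eq:h-differential-equations-for-flat-tube} to Bessel's equation of order $m$ in the radial variable, regularity at the origin kills $Y_m$ in the disk case, and evaluating the boundary conditions \eqref{eq:h-boundary-condition-disk} and \eqref{eq:h-boundary-condition-annulus} on the separated ansatz yields \eqref{eq:eigenvalue-equation-disk} and the linear system \eqref{eq:mult-is-zero-annulus}, respectively. Your explicit remarks on regularity at $r=0$ and on the sign of the outward normal on the inner boundary component match the computations the paper carries out.
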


\begin{proof}
Using polar coordinates $(r,\theta)$ in $(x,y) \in \Sigma$, $h$ being a solution to Equation \eqref{eq:h-differential-equations-for-flat-tube} is equivalent to
\begin{equation*}
h = f(r)e^{im\theta}e^{i\ell s},    
\end{equation*}
for some smooth $f$ (defined on $[0,a]$ if $\Sigma = D_a$ and defined on $[a,b]$ if $\Sigma = A_{a,b}$) which satisfies the (re-scaled) Bessel equation
\begin{equation}\label{eq:Bessel}
r^2\frac{d^2f}{dr^2} + r\frac{df}{dr} + (r^2(\lambda^2-\ell^2)-m^2)f = 0.
\end{equation}
We now treat the disk and annulus cases individually.

Suppose $\Sigma = D_a$. Then, from the previous discussion, $h \neq 0$ solves equations \eqref{eq:h-differential-equations-for-flat-tube} and \eqref{eq:h-boundary-condition-disk} if and only if there exists $A \in \CC \setminus \{0\}$ such that Equation \eqref{eq:form-of-h-disk} holds. Assuming that Equation \eqref{eq:form-of-h-disk} holds, we compute 
\begin{align*}
\frac{\lambda m}{a} h|_{\partial M} + \ell \frac{\partial h}{\partial n} = A\left(\frac{\lambda m}{a} J_m(\mu a) + \ell \mu J_m'(\mu a)\right)e^{im\theta}e^{i \ell s}.  
\end{align*}
This proves the result in the disk case.

Now suppose $\Sigma = A_{a,b}$. As before, $h \neq 0$ solves equations \eqref{eq:h-differential-equations-for-flat-tube} and \eqref{eq:h-boundary-condition-disk} if and only if there exists $(A,B) \in \CC^2 \setminus \{(0,0)\}$ such that Equation \eqref{eq:form-of-h-annulus} holds. Assuming that Equation \eqref{eq:form-of-h-annulus} holds, letting $f : [a/\mu,b/\mu] \to \RR$ be a smooth function so that
\begin{equation*}
f(\mu r) \coloneqq  A J_m(\mu r) + B Y_m(\mu r),     
\end{equation*}
we compute on $\partial_1 M$ that
\begin{align*}
-\frac{\lambda m}{a} h + \ell \frac{\partial h}{\partial n} = -\left(\frac{\lambda m}{a} f(\mu a) + \ell \mu f'(\mu a)\right)e^{im\theta}e^{i \ell s}.
\end{align*}
Hence, 
\begin{equation*}
-\frac{\lambda m}{a} h  + \ell \frac{\partial h}{\partial n} = 0 \text{ on } \partial_1 M,   
\end{equation*}
is equivalent to
\begin{equation}\label{eq:first-boundary-part-annulus}
A \left(\frac{\lambda m}{a} J_m(\mu a) + \ell \mu J_m'(\mu a)\right) + B \left(\frac{\lambda m}{a} Y_m(\mu a) + \ell \mu Y_m'(\mu a)\right) = 0.
\end{equation}
We compute on $\partial_2 M$ that
\begin{align*}
\frac{\lambda m}{b} h + \ell \frac{\partial h}{\partial n} = \left(\frac{\lambda m}{b} f(\mu b) + \ell \mu f'(\mu b)\right)e^{im\theta}e^{i \ell s}.
\end{align*}
Hence, 
\begin{equation*}
\frac{\lambda m}{b} h  + \ell \frac{\partial h}{\partial n} = 0 \text{ on } \partial_2 M,   
\end{equation*}
is equivalent to
\begin{equation}\label{eq:second-boundary-part-annulus}
A \left(\frac{\lambda m}{b} J_m(\mu b) + \ell \mu J_m'(\mu b)\right) + B \left(\frac{\lambda m}{b} Y_m(\mu b) + \ell \mu Y_m'(\mu b)\right) = 0.
\end{equation}
Hence, in this case, Equation \eqref{eq:h-boundary-condition-annulus} is therefore equivalent to the conjunction of equations \eqref{eq:first-boundary-part-annulus} and \eqref{eq:second-boundary-part-annulus}, which can of course rewritten as Equation \eqref{eq:mult-is-zero-annulus}. This proves the result for the annulus.
\end{proof}

\section{A computer assisted proof with Taylor series}\label{app:verification-taylor}

The purpose of this appendix is to verify the claim made in the proof of Lemma \ref{lem:g-is-initially-negative} in Section \ref{sec:analysis-annulus-anti-sym} which is that, if one sets $r = \sqrt{1-\frac{s^2}{\pi^2}}$ with
$s= \frac{287}{100}$
then
\begin{equation*}
(1+r) J_0(\sqrt{1-r^2}\pi)+(1-r) J_2(\sqrt{1-r^2}\pi) < 0.   
\end{equation*}

\subsection{A Taylor remainder estimate for Bessel functions}

Recall that for $n \in \NN_0$, the Bessel function $J_n$ is given by the power series
\begin{equation*}
J_n(x) = \sum_{m = 0}^\infty \frac{(-1)^m}{m!(m+n)!}\left(\frac{x}{2}\right)^{2m+n}    
\end{equation*}
with infinite radius of convergence. For $M \in \NN $, consider the remainder $R_{M,n}$ given by
\begin{equation*}
R_{M,n}(x) = \sum_{m = M+1}^\infty \frac{(-1)^m}{m!(m+n)!}\left(\frac{x}{2}\right)^{2m+n}. 
\end{equation*}
Then, for $|x| \leq 3$,
\begin{align*}
|R_{M,n}(x)| \leq  \sum_{m = M+1}^\infty \frac{1}{m!(m+n)!}\left(\frac{3}{2}\right)^{2m+n}.     
\end{align*}
For integer $m \geq 2$, defining $\prod_{j=1}^n (m+j)=1$ as an empty product when $n=0$, we have that
\begin{equation*}
\frac{1}{(m+n)!}\left(\frac{3}{2}\right)^{n} = \frac{1}{m!}\frac{(3/2)^n}{\prod_{j=1}^n (m+j)} \leq \frac{1}{m!} \frac{(3/2)^n}{2^n} \leq \frac{1}{m!}.  
\end{equation*}
So, because $M+1\geq 2$,
\begin{equation*}
|R_{M,n}(x)| \leq \sum_{m = M+1}^\infty \frac{1}{m!m!}\left(\frac{3}{2}\right)^{2m}.     
\end{equation*}
Now, for integer $m \geq 2$, we have that $m! \geq \frac{2}{9} \cdot 3^m$
and so,
\begin{equation*}
\frac{1}{m!}\left(\frac{3}{2}\right)^{2m} \leq  \frac{9}{2} \left(\frac{3}{4}\right)^m.
\end{equation*}
Therefore, putting these estimates together, we obtain
\begin{align*}
|R_{M,n}(x)| \leq \frac{18 \cdot (3/4)^{M+1}}{(M+1)!}.
\end{align*}

\subsection{Computation with rational numbers}

First, observe that for any $0<s<\pi$, we may set
$r = \sqrt{1-\frac{s^2}{\pi^2}}\in (0,1)$,
and therefore,
\begin{align*}
(1+r) J_0(\sqrt{1-r^2}\pi)+(1-r) J_2(\sqrt{1-r^2}\pi)
= (1+r) J_0(s)+(1-r) J_2(s).
\end{align*}
If in addition $s\leq 3$, then following the notation and results from the previous section, for any $M \in \NN$, noting that $2 \cdot 18 = 36$,
\begin{align*}
&(1+r) J_0(s) + (1-r) J_2(s)\\
&\,\, \leq (1+r) \sum_{m = 0}^M \frac{(-1)^m}{m!m!}\left(\frac{s}{2}\right)^{2m} + (1-r) \sum_{m = 0}^M \frac{(-1)^m}{m!(m+2)!}\left(\frac{s}{2}\right)^{2m+2} + \frac{36 \cdot (3/4)^{M+1}}{(M+1)!}.
\end{align*}
Assume in addition that $s$ and $M$ are chosen so that 
\begin{equation}\label{eq:final-assumption-before computation}
\begin{split}
r &\geq \frac{2}{5},\\
\sum_{m = 0}^M \frac{(-1)^m}{m!m!}\left(\frac{s}{2}\right)^{2m} &< 0,\\
\sum_{m = 0}^M \frac{(-1)^m}{m!(m+2)!}\left(\frac{s}{2}\right)^{2m+2} &> 0.
\end{split}
\end{equation}
Then, we can say that
\begin{equation}\label{eq:final-bound}
\begin{split}
&(1+r) J_0(s) + (1-r) J_2(s)\\
&\,\, \leq \frac{7}{5}\sum_{m = 0}^M \frac{(-1)^m}{m!m!}\left(\frac{s}{2}\right)^{2m} + \frac{3}{5} \sum_{m = 0}^M \frac{(-1)^m}{m!(m+2)!}\left(\frac{s}{2}\right)^{2m+2} + \frac{36 \cdot (3/4)^{M+1}}{(M+1)!}.
\end{split}    
\end{equation}

With this, we set $s = \frac{287}{100}$ and $M = 5$.
We now verify that the assumptions in \eqref{eq:final-assumption-before computation} are met. Marking equalities with a ``c" when relying on computer output, we have
\begin{align*}
r^2 - \left(\frac{2}{5}\right)^2 = 1-\frac{s^2}{\pi^2} - \left(\frac{2}{5}\right)^2 > 1-\frac{287^2}{314^2}-\left(\frac{2}{5}\right)^2 \overset{\text{c}}{=} \frac{11291}{2464900}.
\end{align*}
Therefore, the first line in \eqref{eq:final-assumption-before computation} is satisfied. For the remaining two lines, we find
\begin{align*}
\sum_{m = 0}^M \frac{(-1)^m}{m!m!}\left(\frac{s}{2}\right)^{2m} &\overset{\text{c}}{=} -\frac{314127831054337257779422849}{1474560000000000000000000000},\\
\sum_{m = 0}^M \frac{(-1)^m}{m!(m+2)!}\left(\frac{s}{2}\right)^{2m+2} &\overset{\text{c}}{=} \frac{170519275716150776952821694135817}{353894400000000000000000000000000},
\end{align*}
which, in summary, shows that all assumptions in \eqref{eq:final-assumption-before computation} are met, and then \eqref{eq:final-bound} is satisfied. In addition,
\begin{equation*}
\begin{split}
&\frac{7}{5}\sum_{m = 0}^M \frac{(-1)^m}{m!m!}\left(\frac{s}{2}\right)^{2m} + \frac{3}{5} \sum_{m = 0}^M \frac{(-1)^m}{m!(m+2)!}\left(\frac{s}{2}\right)^{2m+2} + \frac{36 \cdot (3/4)^{M+1}}{(M+1)!}\\
&\qquad \overset{\text{c}}{=} -\frac{143509674278087403655101304183}{589824000000000000000000000000000},
\end{split}
\end{equation*}
and thus, by \eqref{eq:final-bound},
\begin{equation*}
(1+r) J_0(\sqrt{1-r^2}\pi)+(1-r) J_2(\sqrt{1-r^2}\pi) < 0.   
\end{equation*}
We have therefore proven the assertion made in the beginning of this section.

\section{Proof of \ref{rem:amp-to-flux-free} in Remark \ref{rem:thms}}\label{app:proof-of-amp-to-flux-free}

Let $\cR \subset \RR^3$ be a smooth solid torus of revolution with cross-section $\Sigma$. Analogous to the terminology in Section \ref{sec:sym-amp-intro} with $K$ the rotational Killing field from that section, a flux-free curl eigenfield $B$ on $\cR$ is \emph{symmetric} if $\cL_K B = 0$ and we will call its corresponding eigenvalue a \emph{symmetric flux-free curl eigenvalue}. An analogous result to that of Proposition \ref{prop:amp-sym-is-G-S} holds in this case \cite{EPS23b}.

\begin{proposition}\label{prop:flux-free-sym-is-G-S}
A vector field $B$ in $\cR$ is a symmetric flux-free curl eigenfield if and only if there exists $u \in C^\infty(\Sigma)$ and $c \in \RR$ such that
\begin{align*}
B &= \frac{1}{\lambda}\nabla u(r,z) \times \nabla \varphi + u(r,z) \nabla \varphi,\\
L u &= -\lambda^2 u,\\
u|_{\partial \Sigma} &= c,\\
\int_\Sigma u/r~drdz &= 0,
\end{align*}
where $L$ is the Grad-Shafranov operator and $(r,\varphi,z)$ denote the cylindrical coordinates.
\end{proposition}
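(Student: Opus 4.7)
The plan is to reduce Proposition \ref{prop:flux-free-sym-is-G-S} to Proposition \ref{prop:amp-sym-is-G-S} by observing that the derivation of the form $B=\frac{1}{\lambda}\nabla u\times\nabla\varphi+u\,\nabla\varphi$ and the equation $Lu=-\lambda^2 u$ in the Ampèrian proposition uses only that $B$ is a symmetric curl eigenfield (i.e.\ $\cL_K B=0$, $\curl B=\lambda B$) together with $\Div B=0$ and the boundary tangency $B\cdot n=0$; no global integral condition on $B$ is invoked. Consequently, the only content that distinguishes the flux-free case from the Ampèrian case lies in the translation of the two boundary/global conditions that replace the Dirichlet condition $u|_{\partial\Sigma}=0$.

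First I would show $u|_{\partial \Sigma}=c$ for some constant $c$. Splitting $B=B_{\text{pol}}+B_{\text{tor}}$ with $B_{\text{pol}}=\frac{1}{\lambda}\nabla u\times\nabla\varphi$ and $B_{\text{tor}}=u\,\nabla\varphi$, the toroidal part is always tangent to $\partial\cR$ (since $\partial\cR$ is a surface of revolution), so $B\cdot n=0$ on $\partial\cR$ reduces to $B_{\text{pol}}\cdot n=0$. Since $\nabla\varphi$ is tangent to $\partial\cR$, the vector $\nabla u\times\nabla\varphi$ is perpendicular to $n$ precisely when $\nabla u$ lies in the plane spanned by $n$ and $\nabla\varphi$, i.e.\ when $\nabla u$ is normal to $\partial\Sigma$ at each point, which is equivalent to $u$ being constant on each connected component of $\partial\Sigma$. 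Since $\partial\Sigma$ is connected (a single circle), this gives $u|_{\partial\Sigma}=c$ for a constant $c$. This is the standard derivation, identical to the Ampèrian setting except that no cohomological vanishing is imposed to force $c=0$.

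Next I would translate the flux-free condition $(B,\nu)_{L^2}=0$ for every $\nu\in\cH_N(\cR)$. Because $\cR$ is a solid torus, $\dim\cH_N(\cR)=b^1(\cR)=1$, and an explicit generator is the vacuum field $\nu=\nabla\varphi$, which is a gradient (hence curl-free), harmonic (since $\Delta\varphi=0$ in $\RR^3$), and tangent to $\partial\cR$ by rotational symmetry. Using $\nabla u\times\nabla\varphi\perp\nabla\varphi$ and $|\nabla\varphi|^2=1/r^2$, one obtains the pointwise identity $B\cdot\nabla\varphi=u/r^2$. Integrating in cylindrical coordinates,
\begin{equation*}
(B,\nabla\varphi)_{L^2}=\int_{\cR}\frac{u(r,z)}{r^2}\,r\,drd\varphi dz=2\pi\int_\Sigma \frac{u}{r}\,drdz,
\end{equation*}
so the flux-free condition is equivalent to $\int_\Sigma u/r\,drdz=0$.

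Conversely, I would check that any $u\in C^\infty(\Sigma)$ with $Lu=-\lambda^2 u$, $u|_{\partial\Sigma}=c$ and $\int_\Sigma u/r\,drdz=0$ yields via $B=\frac{1}{\lambda}\nabla u\times\nabla\varphi+u\,\nabla\varphi$ a symmetric flux-free curl eigenfield; the identities $\Div B=0$, $\cL_K B=0$, and $\curl B=\lambda B$ follow from $Lu=-\lambda^2 u$ by the same computation as in Proposition~\ref{prop:amp-sym-is-G-S}, while the boundary and flux-free conditions follow from the two steps above run in reverse. The only mild point of care is verifying that $\nabla\varphi$ spans $\cH_N(\cR)$ — but this is a standard consequence of $H^1(\cR;\RR)\cong\RR$ and the explicit curl-free, divergence-free, boundary-tangent representative $\nabla\varphi$.
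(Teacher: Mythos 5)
Your proposal is correct. Note that the paper does not actually prove Proposition \ref{prop:flux-free-sym-is-G-S}: it is stated in Appendix \ref{app:proof-of-amp-to-flux-free} with a citation to \cite{EPS23b}, so there is no in-paper argument to compare against. Your proof supplies exactly the expected argument: the form of $B$ and the equation $Lu=-\lambda^2u$ depend only on $\cL_KB=0$, $\Div B=0$, $\curl B=\lambda B$ and $B\cdot n=0$; the tangency condition translates to $u$ being locally constant on $\partial\Sigma$, hence equal to a constant $c$ since $\partial\Sigma$ is a single circle; and the flux-free condition, tested against the generator $\nabla\varphi$ of the one-dimensional space $\cH_N(\cR)$, reduces via $B\cdot\nabla\varphi=u/r^2$ to $\int_\Sigma u/r\,drdz=0$. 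Two routine points are worth making explicit if you write this up. First, the Beltrami equation applied to the general symmetric ansatz $B=\nabla\psi\times\nabla\varphi+g\,\nabla\varphi$ (with $\psi$ a stream function, which exists because $\Sigma$ is simply connected) gives $\nabla g=\lambda\nabla\psi$, hence $g=\lambda\psi+c_0$ for some constant $c_0$; you must absorb $c_0$ into $\psi$ before setting $u=g$ to land on the stated form with a single function $u$ satisfying $Lu=-\lambda^2u$. Second, $\varphi$ is multivalued on $\cR$, so $\nabla\varphi$ is only locally a gradient; what matters is that it is a well-defined smooth field on $\cR$ (since $\overline\Sigma\subset\RR^+\times\RR$ keeps $\cR$ away from the axis) that is curl-free, divergence-free and tangent to $\partial\cR$, and hence spans $\cH_N(\cR)$ because $b^1(\cR)=1$. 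With these details filled in, the argument is complete in both directions.
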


Because the flux-free curl spectrum on $\cR$ is discrete (see Appendix \ref{app:func-analysis-antisym-spec}), the set of symmetric flux-free curl eigenvalues on $\cR$ is also discrete. Let $\lambda^{\text{sFF}}_1(\cR)$ and $\lambda^{\text{sA}}_1(\cR)$ respectively denote the first positive symmetric flux-free and Amp\`erian curl eigenvalues. Proposition \ref{prop:flux-free-sym-is-G-S} enables:

\begin{lemma}\label{lem:flux-free-are-bigger-than-amp}
$\lambda^{\text{sFF}}_1(\cR)  > \lambda^{\text{sA}}_1(\cR)$.
\end{lemma}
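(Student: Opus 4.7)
The plan is to combine Propositions~\ref{prop:amp-sym-is-G-S} and~\ref{prop:flux-free-sym-is-G-S}, reducing the claim to showing that $\mu := (\lambda^{\text{sFF}}_1(\cR))^2$ strictly exceeds $\lambda^{\text{G-S}}_1(\Sigma) = (\lambda^{\text{sA}}_1(\cR))^2$. I would fix a function $u \in C^\infty(\Sigma)$ and a constant $c \in \RR$ provided by Proposition~\ref{prop:flux-free-sym-is-G-S} at eigenvalue $\mu$, and proceed by cases on whether $c$ vanishes.

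The case $c = 0$ is immediate: then $u \in H^1_0(\Sigma)$ solves $Lu = -\mu u$, so by the Rayleigh quotient characterisation from Appendix~\ref{app:Grad-Shafranov} one has $\mu \geq \lambda^{\text{G-S}}_1(\Sigma)$; if equality held, $u$ would be a first Dirichlet eigenfunction of $L$ and hence, up to sign, strictly positive inside $\Sigma$, forcing $\int_\Sigma u/r \, drdz \neq 0$ and contradicting the flux-free orthogonality condition. Thus $\mu > \lambda^{\text{G-S}}_1(\Sigma)$.

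The case $c \neq 0$ is the geometric core of the argument. After replacing $u$ with $-u$ one may assume $c > 0$, so $u > 0$ on an open neighbourhood of $\partial \Sigma$ by continuity, while the orthogonality condition $\int_\Sigma u/r\, drdz = 0$ forces $\Omega_{-} := \{u < 0\}$ to be non-empty and compactly contained in the interior of $\Sigma$. On any connected component $\Omega$ of $\Omega_-$, the function $-u$ is a strictly positive Dirichlet eigenfunction of $L$ on $\Omega$ with eigenvalue $\mu$, which identifies $\mu$ with $\lambda^{\text{G-S}}_1(\Omega)$. A strict form of the domain monotonicity in Lemma~\ref{lem:basic-G-S-ineq} then yields the desired conclusion $\mu = \lambda^{\text{G-S}}_1(\Omega) > \lambda^{\text{G-S}}_1(\Sigma)$.

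The main technical ingredients to supply are the simplicity and strict positivity of the first Dirichlet eigenfunction of $L$, and the strict version of domain monotonicity when the smaller domain is compactly contained in the larger one. Both are classical consequences of the strong maximum principle and Hopf's lemma applied to the uniformly elliptic operator $L = \partial_{rr} + \partial_{zz} - r^{-1}\partial_r$, which has smooth coefficients on $\overline{\Sigma}$ since $\overline{\Sigma} \subset \RR^+ \times \RR$: if strict monotonicity failed, the zero-extension of the first eigenfunction on $\Omega$ would furnish a first eigenfunction on $\Sigma$ vanishing on the non-empty open set $\Sigma \setminus \overline{\Omega}$, contradicting the strong maximum principle. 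This is the only step that goes meaningfully beyond what is written in the paper, and it will only take a short remark to justify.
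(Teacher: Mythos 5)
Your proof is correct, but it takes a genuinely different route from the paper's in the main case. The paper avoids any case split on $c$: it simply takes $u_0 := u - c \in H^1_0(\Sigma)$ as a test function for the Rayleigh quotient of $L$ on $\Sigma$ and uses the orthogonality $\int_\Sigma u/r\,drdz = 0$ to compute $B_r[u_0,u_0] = -(Lu_0,u_0)_{L^2,r} = \lambda^2(u,u)_{L^2,r} \leq \lambda^2 (u_0,u_0)_{L^2,r}$, giving $\lambda^{\text{G-S}}_1(\Sigma) \leq \lambda^2$ directly; equality would force $u_0$ to be a first Dirichlet eigenfunction, whence $c=0$ and $u=u_0$ would have a sign, contradicting the zero-average condition. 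This is essentially your $c=0$ branch promoted to a complete argument, and it buys economy: the only nontrivial input is the simplicity and sign of the first Dirichlet eigenfunction of $L$, invoked once in the borderline case. Your $c\neq 0$ branch is a more geometric nodal-domain argument, and it is sound, but it carries extra technical weight that the paper's route sidesteps: you need that $(-u)\mathbf{1}_\Omega$ extended by zero lies in $H^1_0(\Omega)$ for a nodal component $\Omega$ whose boundary may be irregular (a standard but nontrivial lemma from the proof of Courant's theorem), and you need strict domain monotonicity via the strong maximum principle. Note also that the full identification $\mu = \lambda^{\text{G-S}}_1(\Omega)$ is more than you need; the one-sided bound $\mu \geq \lambda^{\text{G-S}}_1(\Omega)$, which follows immediately from the Rayleigh quotient of the zero-extension, already suffices once strict monotonicity is in hand. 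Both arguments are valid; the paper's is shorter and more self-contained, while yours makes the mechanism (the eigenfield must oscillate, hence lives on a strictly smaller nodal domain) more visible.
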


\begin{proof}
Set $\lambda \equiv \lambda^{\text{sFF}}_1(\cR)$ and let $B$ be a corresponding flux-free eigenfield. In view of Proposition \ref{prop:flux-free-sym-is-G-S}, setting $u_0 := u - c$, with the notation in Appendix \ref{app:Grad-Shafranov},
\begin{equation*}
-(L u_0, u_0)_{L^2,r} = B_r[u_0,u_0].
\end{equation*}
Since $u$ is orthogonal to constants with the inner product $(\cdot,\cdot)_{L^2,r}$,
\begin{equation*}
-(L u_0, u_0)_{L^2,r} =\lambda^2(u_0,u_0)_{L^2,r}.
\end{equation*}
Hence, by Equation \eqref{eq:min-char-of-eigenvalue} in Appendix \ref{app:Grad-Shafranov} and Proposition \ref{prop:amp-sym-is-G-S}, we have that
\begin{equation*}
(\lambda^{\text{sA}}_1(\cR))^2 = \lambda^{\text{G-S}}_1(\Sigma) \leq \frac{B_r[u_0,u_0]}{(u_0,u_0)_{L^2,r}} = \lambda^2.
\end{equation*}
Suppose equality holds. Then $u_0$ would be a first Dirichlet eigenfunction of the Grad-Shafranov operator, and so $c = 0$, yielding that $u_0 = u$. However, the integral condition on $u$ implies that $u$ does not have a sign and hence it cannot be a first Dirichlet eigenfunction for $L$. This contradiction completes the proof of the lemma.
\end{proof}

We are now ready to prove \ref{rem:amp-to-flux-free} in Remark \ref{rem:thms}. Let $\lambda_1^*(\cR)$ denote the first positive antisymmetric curl eigenvalue on $\cR$ (which is a flux free curl eigenvalue according to Lemma \ref{lem:compatibility-of-antisym}). If there do not exist rotationally symmetric eigenfields corresponding to the first (positive) Amp\`erian curl eigenvalue on $\cR$, then
\begin{equation*}
\lambda_1^*(\cR) < \lambda^{\text{sA}}_1(\cR) < \lambda^{\text{sFF}}_1(\cR),
\end{equation*}
where the second inequality is coming from Lemma \ref{lem:flux-free-are-bigger-than-amp}, and we are done. 


\begin{thebibliography}{99}

\bibitem{Abatangelo24}
N. Abatangelo, S. Jarohs, On the shape of the first fractional eigenfunction, Potential Anal 60 (2024) 1461--1486.

\bibitem{AK21}
V.I. Arnold, B. Khesin, \emph{Topological Methods in Hydrodynamics} (second edition), Springer, New York, 2021.

\bibitem{berger-field-1984}
M.A. Berger, G.B. Field, The topological properties of magnetic helicity, J. Fluid Mech. 147 (1984) 133--148.

\bibitem{Bhattacharya88}
T. Bhattacharya, Radial symmetry of the first eigenfunction for the $p$-Laplacian in the ball, Proc. Amer. Math. Soc. 104 (1988) 169--174.

\bibitem{Bondeson81}
A. Bondeson, G. Marklin, Z.G. An, H.H. Chen, Y.C. Lee, C.S. Liu, Tilting instability of a cylindrical spheromak, Phys. Fluids 24 (1981) 1682--1688.

\bibitem{cantarellathesis}
J. Cantarella, \emph{Topological structure of stable plasma flows}. PhD thesis, University of Pennsylvania, 1999.

\bibitem{cantarella2001upper}
J. Cantarella, D. DeTurck, H. Gluck, Upper bounds for the writhing of knots and the helicity of vector fields, AMS/IP Stud. Adv. Math. 24 (2001) 1--21.

\bibitem{Cantarella_DeTurck_Gluck_Teytel_2000}
J. Cantarella, D. DeTurck, H. Gluck, M. Teytel, The spectrum of the curl operator on spherically symmetric domains, Phys. Plasmas 7 (2000) 2766.

\bibitem{EPS23}
A. Enciso, D. Peralta-Salas, Beltrami fields and knotted vortex structures in incompressible fluid flows, Bull. London Math. Soc. 55 (2023) 1059--1103.

\bibitem{EPS23b}
A. Enciso and D. Peralta-Salas, Non-existence of axisymmetric optimal domains with smooth boundary for the first curl eigenvalue, Ann. Sc. Norm. Sup. Pisa 24 (2023) 311--327.

\bibitem{Faber_Jr_Wing_1985}
V. Faber, A.B. White Jr, G. Milton Wing, Helical and symmetric solutions of Taylor relaxation in a toroidal plasma, Plasma Phys. Contr. Fusion 27 (1985) 509.

\bibitem{finn-antonsen-1985}
J.M. Finn, T.M. Jr Antonsen, Magnetic helicity: what is it and what is it good for, Comments on Plasma Phys. Contr. Fusion 9 (1985) 111--126.

\bibitem{gerner2020existence}
W. Gerner, Existence and characterisation of magnetic energy minimisers on oriented, compact Riemannian 3-manifolds with boundary in arbitrary helicity classes, Ann. Global Anal. Geom. 58 (2020) 267--285.

\bibitem{Gerner2021typical}
W. Gerner, Typical field lines of Beltrami flows and boundary field line behaviour of Beltrami flows on simply connected, compact, smooth manifolds with boundary, Annals of Global Analysis and Geometry, (2021) 60, 65--82.

\bibitem{Finn81}
J.M. Finn, W.M. Manheimer, E. Ott, Spheromak tilting instability in cylindrical geometry, Phys. Fluids 24 (1981) 1336--1341.

\bibitem{GY90}
Y. Giga, Z. Yoshida, Remarks on spectra of operator rot, Math. Z. 204 (1990) 235--245.

\bibitem{Gimblett_Hall_Taylor_Turner_1987}
C.G. Gimblett, P.J. Hall, J.B. Taylor, M.F. Turner, Eigenvalues associated with relaxed states of toroidal plasmas, Phys. Fluids 30 (1987) 3186--3195.

\bibitem{LA91}
P. Laurence, M. Avellaneda, On Woltjer's variational principle for force-free fields, J. Math. Phys. 32 (1991) 1240--1253. 

\bibitem{Martinez02}
S. Martínez, J.D. Rossi, Isolation and simplicity for the first eigenvalue of the $p$-Laplacian with a nonlinear boundary condition, Abstract Appl. Anal. 7 (2002) 287--293.

\bibitem{M69}
H.K. Moffatt, The degree of knottedness of tangled vortex lines, J. Fluid Mech. 35 (1969) 117--129.

\bibitem{Tsuji_1991}
Y. Tsuji, Force‐free magnetic field in the axisymmetric torus of arbitrary aspect ratio, Phys. Fluids B: Plasma Physics 3 (1991) 3379--3387.

\bibitem{W58}
L. Woltjer, A theorem on force-free magnetic fields, Proc. Natl. Acad. Sci. 44 (1958) 489--491.

\bibitem{Yoshida91}
Z. Yoshida, Discrete eigenstates of plasmas described by the Chandrasekhar-Kendall functions, Progr. Theor. Phys. 86 (1991) 45--55.

\end{thebibliography}
\end{document}